\newtheorem{theorem}{Theorem}[section]
\newtheorem{lemma}[theorem]{Lemma}
\newtheorem{proposition}[theorem]{Proposition}
\newtheorem{corollary}[theorem]{Corollary}
\newtheorem{definition}[theorem]{Definition}
\newtheorem{property}[theorem]{Property}
\newcommand{\minus}[1]{{-#1}}
\newcommand{\indicator}[1]{\mathbbm{1}{\left[ {#1} \right] }}
\def \BR {{\mathcal{BR}}}
\def \NE {{\mathcal{NE}}}
\def \G {{\mathcal{G}}}
\def \KL {{\text{KL}}}
\def \P {{\mathbf{P}}}
\def \argmax {{\arg\max}}
\providecommand{\abs}[1]{\lvert#1\rvert} 
\providecommand{\norm}[1]{\lVert#1\rVert} 
\providecommand{\myvec}[1]{\mathbf{#1}}
\providecommand{\R}{\mathbb{R}} 
\providecommand{\x}{\mathbf{x}} 
\providecommand{\y}{\mathbf{y}}
\title{On Influence, Stable Behavior, and the Most Influential Individuals in Networks: A Game-Theoretic Approach}
\author[1]{Mohammad T. Irfan\thanks{This work was done while the author was a PhD student at Stony Brook University. Email: \texttt{mirfan@bowdoin.edu}.}}
\affil[1]{Department of Computer Science\\
      Bowdoin College\\
      Brunswick, ME 04011}
\author[2]{Luis E. Ortiz\thanks{Corresponding author. Email: \texttt{leortiz@cs.stonybrook.edu}, Phone: 631-632-1805, Fax: 631-632-8334.}}
\affil[2]{Department of Computer Science\\
      Stony Brook University\\
      Stony Brook, NY 11794}
\begin{document}

\maketitle

\begin{abstract}
We introduce a new approach to the study of influence in strategic
settings where the action of an individual depends on that of others
in a network-structured way. We propose \emph{influence games (IGs)}
as a \emph{game-theoretic (GT)} model of the behavior of a large but finite networked
population. IGs allow \emph{both} positive and negative
\emph{influence factors}, permitting reversals in behavioral choices.
We embrace \emph{pure-strategy Nash equilibrium (PSNE)}, an
important solution concept in non-cooperative game theory, to
formally define the \emph{stable outcomes} of an IG and to predict
potential outcomes without explicitly considering intricate dynamics. 
We address an important problem in network influence, the
identification of the \emph{most influential individuals}, and approach it
algorithmically using PSNE
computation. \emph{Computationally}, we provide (a) 
complexity characterizations of various problems on IGs; (b) efficient algorithms for
several special cases and heuristics for hard cases; and (c) approximation algorithms, with provable guarantees, for the
problem of identifying the most influential individuals. \emph{Experimentally}, we evaluate our
approach using both synthetic IGs and real-world settings of general
interest, each corresponding to a separate branch of the
U.S. Government.
\emph{Mathematically,} we connect IGs to important GT models:
\emph{potential and polymatrix games}.

\flushleft
{\small
\textit{Keywords:} Computational Game Theory, Social Network Analysis, Influence in Social Networks, Nash Equilibrium, Computational Complexity
}
\end{abstract}


\section{Introduction}
\label{sec:intro}

The influence of an entity on its peers is a commonly noted phenomenon in both online and real-life social networks. 
%
In fact, there is growing scientific evidence that suggests that influence can induce behavioral changes among the entities in a network. For example, recent work in medical social sciences posits the intriguing hypothesis that many of our behavioral aspects, such as
smoking~\citep{christakisandfowler08}, obesity~\citep{christakisandfowler07}, and even happiness~\citep{fowlerandchristakis08} are contagious within a social network.

Regardless of the specific problem addressed, 
the underlying system under study in that research exhibits several core features. First, it is often very \emph{large and complex}, with many entities exhibiting different behaviors and interactions. Second, the \emph{network structure of complex interactions} is central. Third, the
\emph{directions and strengths of local influences} are highlighted as 
very relevant to the global behavior of
the system as a whole. Fourth, the view that the behavioral choice of
an individual in the network is potentially \emph{strategic}, given that one's choice depends on the choices made by one's peers.  


The prevalence of systems and problems like the ones just described, combined with the obvious issue of often limited control over individuals, raises immediate, broad, difficult, and longstanding policy questions: e.g., {\em Can we achieve a
desired goal, such as reducing the level of smoking or controlling obesity via targeted, minimal
interventions in a system?  How do we optimally allocate our often
limited resources to achieve the largest impact in such systems?}

Clearly, these issues are not exclusive to obesity, smoking or happiness; similar issues arise in a
large variety of settings: drug use, vaccination, crime networks, security, marketing, markets, the
economy, and public policy-making and regulations, and even
congressional voting!~\footnote{For the last example, consider (one
  of) the so-called
  ``debt-ceiling crisis'' in the
U.S. in $2011$, which received a lot of national and international
media attention. One may argue that the last-minute deal between 
democratic and republican senators, which helped avoid an
``undesirable situation,'' bears a clear signature of influence among
the senators in a strategic setting. Moreover, it is hard not to view the bipartisan
``gang-of-six'' senators, specifically chosen to work out a solution
to that crisis, 
as a type of \emph{intervention} in such large, complex systems; that
is, interventions via the formation of groups that would not naturally arise otherwise.} 
%
The work reported in this paper is in large part motivated by such
questions/settings and their broader implication. 

We begin by providing a brief and informal description of our approach
to influence in networks. In the next section, we place our approach
within the context of existing literature.

\subsection{Overview of Our Model of Influence}
Consider a social network where each \emph{individual} has a \emph{binary} choice of
\emph{action or behavior}, denoted by $-1$ and $1$. Let us represent this
network as a \emph{directed} graph, where each node represents an individual. Each node of this graph has a
\emph{threshold level}, which can be positive, negative, or zero; and
the threshold levels of all the nodes are not required to be the
same. Each arc of this graph is weighted by an \emph{influence
  factor}, which signifies the level of \emph{direct} influence the tail node of that arc has on the head node. Again, the influence factors can be positive, negative, or zero and are not required to be the same (i.e., symmetric) between two nodes.

Given such a network, our model specifies the \emph{best response} of
a node (i.e., what action it should choose) with respect to the
actions chosen by the other nodes. The best response of a node is to
adopt the action $1$ if the \emph{total influence} on it exceeds its
threshold and $-1$ if the opposite happens. (In the case of a tie, the
node is \emph{indifferent} between choosing $1$ and $-1$, i.e., either
would be its best response.) Here, we calculate the total influence on
a node as follows. First, sum up the incoming influence factors on the node from the ones who have adopted the action $1$. Second, sum up those influence factors that are coming in from the ones who have adopted $-1$. Finally, subtract the second sum from the first to get the total influence on that node.

Clearly, in a network with $n$ nodes, there are $2^n$ possible
\emph{joint actions}, resulting from the action choice of each
individual node. Among all these joint actions, we call the ones where
every node has chosen its best response to everyone else a
\emph{pure-strategy Nash equilibria (PSNE)}. We use PSNE to
mathematically model the \emph{stable outcomes} that such a networked
system could support. 

\subsection{Overview of the Most-Influential-Nodes Problem}

We formulate the most-influential-nodes problem \emph{with respect to
  a goal of interest}. The ``goal of interest'' indirectly determines what we call the \emph{desired stable
  outcome(s)}. Unlike the mainstream literature on the most-influential-nodes problem \citep{kleinberg07}, maximizing the \emph{spread} of a
particular behavior is \emph{not} our objective. Rather, \emph{the desired
stable outcome(s) resulting from the goal of interest is what determines
our computational objective}. In addition, while for \emph{some
instances} of our general formulation of the most-influential-nodes
problem in our context our goal may be some \emph{stable}
outcome with the largest number of nodes adopting a particular behavior, our
solution concept abstracts away and does not rely on the dynamics or
the so-called
``diffusion'' process by which such a ``spread of behavior'' happens.

Roughly speaking, in our approach, we consider a set of individuals
$S$ in a network to be \emph{most influential}, with respect to a set
of desired stable outcomes that satisfy a particular goal of interest,
if $S$ is the \emph{most preferred} subset among all those that 
satisfy the following condition: were the individuals in $S$ to choose
the behavior $\x_S$ prescribed to them by a desired stable outcome $\x
\equiv (\x_S, \x_{-S})$ which achieves the goal of interest,
then the \emph{only} stable outcome of the system that remains consistent with their choices $\x_S$ is $\x$ itself. 

Said more intuitively, once the nodes in the most influential set $S$
follow the behavior $\x_S$ prescribed to them by a desired stable
outcome $\x$ achieving the goal of interest, they become collectively ``so influential'' that their behavior ``forces'' every other individual to a unique choice of behavior! 
Our proposed concept of the most influential individuals is illustrated in Figure~\ref{fig:infl_example} with a very simple example. 

Now, there could be many different sets $S$ that satisfy the above
condition. For example, $S$ could consist of all the individuals,
which might not be what we want. To account for this, we also specify
a preference over all subsets of individuals. While this preference
function could in principle be arbitrary, a natural example would be
one that prefers a set $S$ the with the minimum cardinality.

\begin{figure}[h]
\centering
\includegraphics[width=4in]{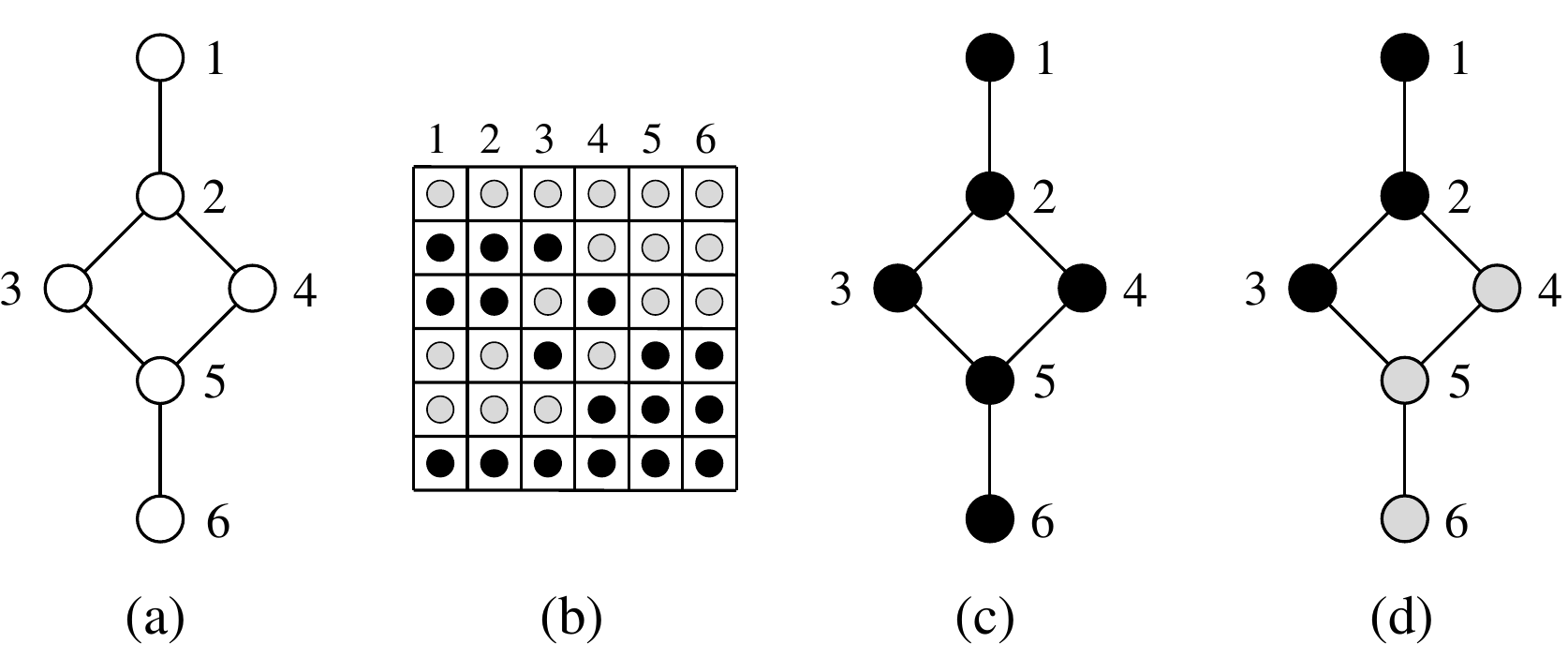}
\caption{{\bf Illustration of our approach to influence in networks.} 
Each node has a binary choice of behavior, $\{-1,+1\}$, and, \emph{in this instance},
 wants to behave like the majority of its neighbors (and is indifferent if there is a tie). 
 We adopt {\em pure-strategy Nash equilibrium} (formally defined later), abbreviated as PSNE, as the notion of stable outcome.
 The network is shown in (a) and the enumeration of PSNE (a row for each PSNE, where black denotes node's behavior $1$, gray $-1$) in (b). 
We want to achieve the objective of every node choosing 1 (desired outcome).
Selecting the set of nodes $\{1, 2, 3\}$ and assigning these nodes behavior prescribed by the desired outcome (i.e., 1 for each) lead to two consistent stable outcomes of the system, shown in (c) and (d).
Thus, $\{1, 2, 3\}$ cannot be a most-influential set of nodes in our setting. On the other hand, selecting $\{1, 6\}$ and assigning these nodes behavior 1 lead to the desired outcome as the unique stable outcome remaining. Therefore, $\{1, 6\}$ is a most-influential set, even though these two nodes are at the fringes of the network. Furthermore, note that $\{1, 6\}$ is not most influential in the diffusion setting, since it does not maximize the spread of behavior 1. Also note that $\{3, 4\}$ is another most-influential set in our setting.
(Of course, we study a much richer class of games in this paper than
this particular instance.)
}
\label{fig:infl_example}
\end{figure}

\subsection{Our Contributions}
Our major contributions include 
\begin{enumerate}
\item a new approach, grounded in non-cooperative game theory, to the study of
influence in networks where individuals exhibit
\emph{strategic behavior}; 
\item \textit{influence games} as a new class of graphical games to
  model the behavior of individuals in networks, including establishing connections to potential games and polymatrix games;
\item a theoretical and empirical study of various computational aspects
of influence games, including an algorithm for identifying the most
influential individuals; and
\item the application of our approach to two real-world settings:
  the \emph{U.S. Supreme Court} and the \emph{U.S. Senate}.
\end{enumerate}

\section{Background}

``Influence'' in social networks, however defined, has been a subject of both theoretical
and empirical studies for decades (see, e.g.,
\citet{wassermanandfaust94} and the references therein). Although our
focus is primarily on computation,  the roots of our model go
back to early literature in sociology on collective behavior, and more
recent literature on collective action. In this section, we will place
our model in the context of the relevant literature from sociology,
economics, and computer science.~\footnote{To avoid a long aside, and in the interest of keeping
  the flow of our presentation, we refer the interested reader to
  Appendix~\ref{app:rev} for a more detailed exposition on the connection of
our model to the century-old study of collective behavior and collective action
in sociology.}


\subsection{Connection to Collective Action in Sociology}
Although our approach may seem close to the rational calculus models
of collective action (see Appendix~\ref{app:conn} for a detailed
description), particularly to~\cite{granovetter78}'s threshold models, our
objective is very much different from that of collective action
theory. The focus of collective action theory in sociology is to
explain \textit{how} individual behavior in a group leads to
collective outcomes. For example, 
\cite{Schelling_dynamic_71,Schelling78}'s models explain how different
distributions of the level of tolerances of individuals lead to
residential segregations of different properties. 
\cite{berk_1974} explains how a compromise (such as placing a barricade) evolves within a mixture of rational individuals of different predispositions (militants vs. moderates). \cite{granovetter78} shows how a little perturbation in the distribution of thresholds can possibly lead to a completely different collective outcome. In short, \emph{explaining} collective social phenomena is at the heart of all these studies. 

While such an explanation is a scientific pursuit of utmost importance, our focus is rather on an engineering approach to \textit{predicting} stable behavior in a networked population setting. Our approach is not to go through the fine-grained details of a process, such as forward recursion, which is often plagued with problems when the sociomatrix contains negative elements. Instead, we adopt the notion of PSNE to define stable outcomes. Said differently, the \textit{path} to an equilibrium is not what we focus on; rather, it is the prediction of the equilibrium itself that we focus on. 

We next justify our approach in the context of rational calculus models.

\subsubsection{PSNE as the solution concept} 
Nash equilibrium is the most
central solution concept in non-cooperative game theory. The
fundamental aspect of a Nash equilibrium is stability. That is, at a
Nash equilibrium, no ``player'' has any incentive to deviate from
it, assuming the others do not deviate either. As a result, Nash equilibrium is very often a natural choice  to
mathematically model stable outcomes of a complex system. 

In this
work, we adopt PSNE, one particular form of Nash equilibrium,  to model the stable behavioral
outcomes of a networked system of influence.~\footnote{As a side note,
  interested readers can find more on the interpretation of Nash
  equilibrium, including its underlying concept of stability, in
  \citep[Ch. 7]{luce1957} and \citep[Ch. 2--4]{osborne2003}. Although
  we use the notion of PSNE in its original form, we note
  here that Nash equilibrium has been refined in various ways with
  particular applications in mind
  \citep[Ch. 2]{laffont1995}. Furthermore, as mentioned above, we
  purposefully avoid the question of \emph{how} a PSNE is reached and
  focus on the prediction of PSNE instead. For the interested reader,
  there is a large body of literature on how a Nash equilibrium may be reached, although there is no general consensus on this topic \citep{fudenberg1998}.}

\subsubsection{Abstraction of fine-grained details}
Sociologists have recorded minute details of various collective action scenarios in order to substantiate their theories with empirical accounts. 
However, in the application scenarios that we are interested in, such
as strategic interaction in the U.S. Congress and the U.S. Supreme
Court, very little details can be obtained about \textit{how} a
collective outcome emerges. 

For example, the Budget Control Act of 2011 was passed by 74--26 votes
in the U.S. Senate on August 2, 2011, ending a much debated debt-ceiling crisis. Despite intense media coverage,  it would be difficult, if not impossible, to give an accurate account of how this agreement on debt-ceiling was reached. Even if there were an exact account of every conversation and every negotiation that had taken place, it would be extremely challenging to translate such a subjective account into a mathematically defined process, let alone learning the parameters and computing stable outcomes of such a complex model.

\subsubsection{Influence games as a less-restrictive model}
Typical models of dynamics used in the existing
literature almost always impose restrictions or assumptions to keep the
model simple enough to permit analytical solutions, or
to facilitate algorithmic analysis. For example, as mentioned above,
the forward recursion process implicitly assumes that the sociomatrix
does not have negative elements. The hope is that the essence of the
general phenomena that 
one wants to capture with 
the model remains even after imposing such
restrictions. 

In our case, by using the concept of PSNE to abstract dynamical
processes, we can deal with rich models without having to impose some
of the same restrictions, and at the same time, we can capture
equilibria beyond the ones captured by a simple model of dynamics. In
particular, our model captures any equilibrium that the process of forward recursion converges to (with any initial configuration); but in addition, our model can capture equilibria that the forward recursion process cannot.

\subsubsection{Focus on practical applications}
Although our model of influence games is grounded in non-cooperative
game theory, the way we apply it to real-world settings such as the
U.S. Congress is deeply rooted in modern AI ~\citep{russellandnorvig09}. One of the distinctive features
of the field of AI is that it is able to build useful tools, often without
gaining the full scientific knowledge of \emph{how} a system
works. 
For example, even though we do not have a full understanding or model
of the exact process by which humans' perform inferences and reason
when performing tasks as simple as playing different types of parlor
games, modern AI has been able to devise 
systems that perform better than humans on the same tasks, often by a
considerable margin~\citep{russellandnorvig09}.

The scientific question of
\emph{how} humans perform 
inferences or reason
is of course very important,
but, in our view, which may be the prevailing view, the modern focus of AI in general is to \emph{engineer solutions} that would
serve our purpose without necessarily having to explain the specific
and intricate details of complex physical
phenomena often found in the real world~\citep{russellandnorvig09}. Of course, under the right conditions, AI does sometimes help experts
understand physical phenomena too, although not necessarily purposefully,
by suggesting effective insights and potentially useful directions, as
obtained from modern AI-based models or systems.~\footnote{One relatively recent example is the research by psychologist Alison
Gopnik, which suggests that young children, even 2-year-olds, perform
Bayesian (belief) inference while learning from the environment~\citep{gopnik_bayes}.} 

In short, we propose an AI-based approach, including AI-inspired models and algorithms, to build a
computational
tool for predicting the behavior of large,
networked populations. Our approach does not model the complex
behavioral dynamics in the network, but abstracts it via PSNE. This
allows us to deal with a rich set of models and concentrate our efforts on the prediction of stable outcomes. 

\subsection{Connection to Literature on Most-Influential Nodes}
\label{sec:diffusion}
To date, the study of influence in a network, by both economists \citep{morris00,chwe_ca} and  computer scientists \citep{kleinberg07,even_dar07}, has been rooted in rational calculus models of behavior. Their approach to connecting individual behavior to collective outcome is mostly by adopting the process of forward recursion \citep[p. 1426]{granovetter78}, which is often employed in studying diffusion of innovations \citep[p. 168]{granovetter_diffusion}. As a result, the term ``contagion'' in these settings has a rational connotation contrary to the early sociology literature on collective behavior, where ``contagion'' or ``social contagion'' alludes to irrational and often hysteric nature of the individuals in a crowd \citep{park_burgess_1921,blumer_1939}. The computational question of identifying the most influential nodes in a network \citep{kleinberg07}, originally posed by Domingos and Richardson~\citep{domingos01}, has also been studied using forward recursion within the context of rational-calculus models. 

In the traditional setting as described by~\cite{kleinberg07}, ``cascade'' or ``diffusion
models,'' each node behaves in one of these two ways---it either
adopts a new behavior or does not, and initially, none of the nodes
adopts the new behavior. Given a number $k$, their formulation of the
most-influential-nodes problem in the cascade model within the
diffusion setting asks us to select a set of $k$ nodes
such that the \emph{spread} of the new behavior is maximized by the
selected nodes being the initial adopters of the new behavior. (Note
that in their setting, the set of initial adopters, some of whom may
have thresholds greater than $0$, are \textit{externally} selected in
order to set off the forward recursion process, whereas in
Granovetter's setting, the initial adopters must have a threshold of
$0$.) The most-influential-nodes question in the cascade or diffusion
settings typically concerns \emph{infinite} graphs
\cite[p. 615]{kleinberg07}, such as Morris' local interaction games
\cite[p. 59]{morris00}. In contrast, we concern ourselves with large
but \emph{finite} graphs here.

The notion of ``most influential nodes'' used in this paper is just 
different than the one traditionally used within the diffusion setting
because it seeks to address problems in different settings (i.e., fully/strictly
strategic) and
achieve generally different objectives (i.e., desired stable
outcomes relative to whatever the goal of interest happens to be). If anything, our approach, by
taking a strictly game-theoretic perspective, may \emph{complement} the traditional line
of work based on diffusion, although even that is not really our main
intent. In our view, these are clearly disparate, non-competing
approaches. Yet, despite these fundamental differences in objectives, settings,
models, problem formulations, solution concepts and simply general approach, in the following
paragraphs, because of the high level of interest in diffusion models
within the most recent
(theoretical) computer science literature,
we still attempt to 
briefly mention some possible points of contrast between the
typical approach to identifying the most influential nodes in the
diffusion setting, as
described by~\cite{kleinberg07}, and ours.~\footnote{The reader should bear in mind
that our intent is not to argue
for or against one approach over the other, as each has their own pros
and cons.}

\subsubsection{Stability of outcomes}
A subtle aspect of diffusion models is that each node in the network behaves as an independent agent. Any observed influence that a node's neighbors impose on the node is the result of the same node's ``rational'' or ``natural'' response to the neighbors' behavior.
Thus, in many cases, it would be desirable that the solution to the most-influential-nodes problem lead us to a stable outcome of the system, in which each node's behavior is a best response to the neighbors' behavior. 
However, if we select a set of nodes with the goal of maximizing the
spread of the new behavior, then some of the selected nodes may end up
``unhappy'' being the initial adopters of the new behavior, with
respect to their neighbors' final behavior at the end of forward recursion. 
For example, a selected node's best behavioral response could be \textit{not} adopting the new behavior after all.  

We believe that in some cases it is more natural to require that the desired
final state of the system, such as, e.g., the state in which the
maximum \emph{possible} number of individuals
adopting a particular behavior, 
be stable (i.e., everyone is ``happy'' with their behavioral response).


\subsubsection{On arbitrary influence factors: positive and negative}
In general, to address the question of finding the most influential nodes, the forward recursion process has been modeled as ``monotonic.'' 
(Here, a monotonic process refers to the setting where once an agent adopts the new behavior, it cannot go back.) 
Even in more recent work on influence maximization and minimization \citep{budak2011,chen2011,he2012}, the influence factors (or weights on the edges) are defined to be non-negative.~\footnote{From the description of \cite{he2012}'s model, it may at first seem that they are allowing both positive and negative influence weights. However, that is not the case. Their terminology of ``positive'' and ``negative'' weights on the edges refers to the positive and negative cascades that are defined in their context. The values of the weights are non-negative.}
If we think of an application such as reducing the incidence of smoking or obesity, then a model 
that allows a ``change of mind'' based on the response of the immediate neighborhood may make more sense. 
Thus, a notable contrast between the traditional treatment of the
most-influential-nodes problem and ours is that we do not restrict the
influence among the nodes of the network to non-negative numbers. 

In fact, in many applications, both positive and negative influence factors may exist in the same problem instance. Take the U.S. Congress as an example: senators belonging to the same party may have non-negative influence factors on each other (as usually perceived from voting instances on legislation issues), but one senator may (and often does) have a negative influence on another belonging to a different party. 
While generalized versions of threshold models that allow ``reversals'' have been derived in the social science literature, 
to the best of our knowledge, there is no substantive work on the most-influential-nodes problem in that context.~\footnote{Note that this is
different from recent work in diffusion settings on the notion of positive and negative \emph{opinions}~\citep{chen2011,budak2011},
which in our case would correspond to differing choices of
behavior, and the notion of simultaneously occurring
positive and negative ``cascades''~\citep{he2012}, used to
model, e.g., the spread of two competing ideas.}

\subsubsection{Abstraction of intricate dynamics}
Finally, the traditional approach to the most-influential-nodes
problem emphasizes modeling the complex dynamics of interactions among
nodes as a way to a final answer: a set of the most influential
nodes. In fact, our model is inspired by the same threshold models
used for the traditional approach. However, as mentioned earlier,
\emph{our emphasis is not on the dynamics of interactions, but on the
  stable outcomes in a game-theoretic setting.} By doing this, we seek to capture significant, basic, and core \emph{strategic} aspects
of complex interaction in networks that naturally appear in many
real-world problems (e.g., identifying the most influential senators
in the U.S. Congress). Of course, we recognize the importance of the
dynamics of interactions when modeling and studying problems of influence at a fine level of detail. Yet, we believe that our approach can still capture significant aspects of the problem even at the coarser level of ``steady-state'' or stable outcomes.

%

\subsubsection{A brief note on submodularity}
\label{sec:submod}

Submodularity plays a central role in the (algorithmic) analysis of traditional
diffusion models in computer science. A deeper implication of allowing negative influence factors in the
traditional diffusion models is that we
cannot restrict them to \emph{submodular} influence ``spread''
functions.~\footnote{Given a set of initial adopters, the influence
  spread function gives the number of nodes that would ultimately
  adopt the new behavior in a diffusion setting.} This lack of
submodularity of the ``influence spread function'' voids the highly heralded theoretical guarantees of
simple, greedy-selection
approximation algorithms commonly used for the problem of selecting
most influential nodes in the cascade model.

In general, \emph{the role that submodularity plays in our
  approach, if any, is not evident}. In
fact, it is unclear what the equivalent or analogous concept of the ``influence spread
function'' is in our setting, given that we do not explicitly consider the
dynamics by which stable outcomes may arise. We realize that this may
be a point of contention, but we delay a more substantive formal
discussion on this matter until after we formally define the problem of identifying most-influential
nodes in our setting in Section~\ref{sec:probform}.

\subsection{Related Work in Game Theory}

Other researchers have used similar game-theoretic notions of ``influential
individuals'' in specific contexts. Particularly close to ours is the work of~\citet{healandkunreuther03,healandkunreuther05,healandkunreuther06,healandkunreuther07}, \citet{kunreutherandheal03}, \citet{kunreutherandmichel-kerjan09}, \citet{ballesteretal04,ballesteretal06}, and~\citet{kearnsandortiz04}. 

Also, our interest is on identifying an ``optimal'' set of influential nodes for a variety of optimality
criteria, depending on the particular context of interest. For instance, we may prefer the set of influential individuals of
minimal size. Such a preference is similar to the concept of ``minimal
critical coalitions'' in the work of~\citet{healandkunreuther03,healandkunreuther05,healandkunreuther06,healandkunreuther07} and~\citet{kunreutherandmichel-kerjan09}. 

\emph{Mechanism design} is a core area in game theory whose main focus
is to ``engineer'' games, by changing the existing
underlaying game or by creating a new one, whose stable outcomes (i.e.,
equilibria) achieve a desired objective~\citep{nisan07_ch}. Although
our notion of the most influential individuals is also defined with
respect to a desired objective, our approach is conceptually very
different. We are not interested in changing, defining, or engineering
a new system---\emph{the system is what it is}. Rather, our
interest is in altering the \emph{behavior} within the \emph{same system} so as to ``lead'' or
``tip'' the system to achieve a desired stable outcome. 

In the next section, we will formally define our model of influence
game and our notion of ``most influential nodes'' in a network. We
will also establish connections to well-studied classes of game models in
game theory: polymatrix and potential games.

\section{Influence Games} 
Inspired by threshold models~\citep{granovetter78},
we introduce {\em network influence games\/}---in brief,
\emph{influence games}---as a model of influence in large networked
populations.~\footnote{As pointed out by a reviewer,
  the term \emph{influence games} has been used in various
  contexts. For example, there exist \emph{cooperative influence
    games} \citep{cooperativeinfluencegames}, \emph{political
    influence games} \citep{congleton2000,becker1983}, \emph{judicial
    influence games} \citep{bardsley2005}, and \emph{dynamic influence
    games} \citep{levy2011}, to name a few. Our \emph{network influence
    games} appear to
  be different from the above games and does not seek to generalize any
  of them.}
Even though 
the model falls within the general class of graphical
games~\citep{kearns01}, a distinctive feature of the instance of
influence games we concentrate on most in this paper, the \emph{linear
influence game}, formally defined in Section~\ref{sec:lig}, is a very compact, parametric representation. 
Our emphasis is on the problem of computing stable
outcomes of systems of influence and identifying influential agents within the network relative to a particular objective. 

\subsection{Our Game-Theoretic Model of Behavior}
We will first formalize {\em influence games} as a model of behavior.
Let $n$ be the number of individuals in a population. 
For simplicity, we restrict our attention to the case of
binary behavior, a common assumption in most of the work in
this area. Thus, 
 $x_i \in \{-1,1\}$ denotes the {\em behavior}
of individual $i$, where $x_i = 1$ indicates that $i$
``adopts'' a particular behavior and $x_i = -1$ indicates $i$
``does not adopt'' the behavior.
Some examples of behavior of this kind are supporting a particular political measure,
candidate or party; holding a particular view or belief; vaccinating against a particular disease; installing
virus protection software (and keeping it up-to-date);  acquiring
fire/home insurance; becoming overweight; taking up
smoking; becoming a criminal or participating in criminal activity; among many others.

First, we denote by $f_i : \{-1,1\}^{n-1} \to \R$ the function that
quantifies the {\em influence\/} of other individuals on $i$.

\begin{definition}[Payoff Function]
\label{def:payoff}
In influence games, 
we define the {\em payoff\/} function $u_i : \{-1,1\}^n \to \R$
quantifying the preferences of each player $i$ as $u_i(x_i,\x_{-i}) \equiv x_i  f_i(\x_{-i})$, where $\x_{-i}$ denotes the vector of all joint-actions excluding that of $i$.
\end{definition}

Using the above definition and notations, we define an influence game as follows.

\begin{definition}[Influence Game]
An influence game ${\G}$ is defined by a set of $n$ players and for each player $i$, a set of actions $\{-1,1\}$ and a payoff  function $u_i : \{-1,1\}^n \to \R$.  
\end{definition}

Next, we characterize the stable outcomes of an influence games. We start with the definition of the best-response correspondence.

\begin{definition}[Best-Response Correspondence]
Given $\x_{-i} \in \{-1,1\}^{n-1}$, the \emph{best-response correspondence} $\BR_i^{\G} : \{-1,1\}^{n-1} \to 2^{\{-1,1\}}$ of a player $i$ of an influence game $\G$ is defined as follows.
\begin{align*}
&\BR_i^{\G}(\x_{-i}) \equiv \argmax_{x_i \in \{-1,1\}} u_i(x_i,\x_{-i}). 
\end{align*}
\end{definition}

Therefore, for all individuals $i$ and any possible
behavior $\x_{-i} \in \{-1,1\}^{n-1}$ of the other individuals in the
population, the {\em best-response behavior\/} $x_i^*$ of individual $i$ to the
behavior $\x^*_{-i}$ of others satisfies
\begin{align*}
f_i(x^*_{-i}) > 0 &\implies x_i^* = 1,\\
f_i(x^*_{-i}) < 0 &\implies x_i^* = -1, \text{ and }\\
f_i(x^*_{-i}) = 0 &\implies x_i^* \in \{-1,1\}.
\end{align*}
Informally, ``positive influences'' lead an individual to adopt the behavior,
while ``negative influences'' lead the individual to ``reject'' the
behavior; the individual is indifferent if there is ``no influence.'' We formally characterize the {\em stable outcomes\/} of the system by the following notion of {\em pure-strategy Nash equilibria (PSNE)\/} of the corresponding influence game.

 \begin{definition}[Pure-Strategy Nash Equilibrium]
   A pure-strategy Nash equilibrium (PSNE) of an influence game $\G$ is a behavior assignment $\x^* \in \{-1,1\}^n$ that satisfies the following condition. Each player $i$'s behavior $x_i^*$ is a (simultaneous) best-response to the behavior $\x_{-i}^*$ of the rest. 
 \end{definition}
   
   We denote the set of all PSNE of the game $\G$ by 
\begin{align*}
\NE(\G) \equiv \{ \x^* \in \{-1,1\}^n \mid x_i^* \in \BR_i^{\G}(\x_{-i}^*) \text{ for all } i \}.
\end{align*}


\subsection{Most Influential Nodes: Problem Formulation}
\label{sec:probform}

In formulating the most-influential-nodes problem in a network, we
depart from the traditional model of diffusion and adopt influence
games as the model of strategic behavior among the nodes in the
network. We introduce two functions in our definition which we discuss in
more detail immediately after the problem formulation. One is what we
call the ``goal'' or ``objective function,'' denoted by $g$, which we
introduce as a way to formally
express that, in our approach, the notion of ``most influential'' is
\emph{relative to a specific goal or objective of interest}. The other, which
we call the ``set-preference function'' and denote by $h$, is a way to
choose among all sets of nodes that achieve our goal of interest.

\begin{definition}
\label{def:probform}
Let $\G$ be an influence game, $g : \{-1,1\}^n \times 2^{[n]} \to \R$ be the \emph{goal or objective function} mapping a joint-action and a subset of the players in $\G$ to a real number quantifying the general preferences over the space of joint-actions and players' subsets, and $h : 2^{[n]} \to \R$ be the \emph{set-preference function} mapping a subset of the players to a real number quantifying the \textit{a priori} preference over the space of players' subsets. Denote by $\mathcal{X}^*_g(S) \equiv \argmax_{\x \in \NE(\G)} g(\x,S)$ the \emph{optimal set of PSNE} of $\G$, with respect to $g$ and a fixed subset of players $S \subset [n]$. We say that a set of nodes/players $S^* \subset [n]$ in $\G$ is \emph{most influential} with respect to $g$ and $h$, if
\begin{align*}
S^* \in \argmax_{S \subset [n]} {h(S), \text{s.t.}, |\{ \x \in \NE(\G) \mid \x_S = \x^*_{S}, \x^* \in \mathcal{X}^*_g(S) \}| = 1 }.
\end{align*}
\end{definition}
As mentioned earlier, we can interpret the players in $S^*$ to be collectively so influential that they are able to restrict every other player's choice of action to a unique one: the action prescribed by some desired stable outcome $\mathbf{x^*}$.

An example of a goal function $g$ that captures the objective of
achieving a specific stable outcome $\x^* \in \NE(\G)$ is $g(\x,S)
\equiv \indicator{\x = \x^*}$. Another example that captures the
objective of achieving a stable outcome with the largest number of
individuals adopting the behavior is $g(\x,S) \equiv \sum_{i=1}^n
\frac{x_i + 1}{2}$, or equivalently, $g(\x,S) \equiv \sum_{i=1}^n
x_i$. Note that both of the functions just presented ignore the set
$S$. One alternative that does not is $g(\x,S) \equiv \sum_{i \in S}
t_i x_i - \sum_{i \notin S}
t_i x_i$, where the $t_i$'s reflect a weighted preference over
individual nodes, e.g., thus capturing 
interest in some ``weighted maximum set of adopters.''

A common example of the set-preference function $h$ that captures the
preference for sets of small cardinality is to simply define $h$ such
that $h(S) > h(S')$ iff $|S| < |S'|$. Similar to the last 
objective function described in the previous paragraph, an alternative
is $h(S) \equiv \sum_{i \in S} v_i - \sum_{i \notin S} v_i$, where the
$v_i$'s reflect a weighted preference over individual nodes in any
set $S$ that achieves the objective of interest.

\subsubsection{On the topic of submodularity}
\label{subsec:submod}

As briefly discussed in Section~\ref{sec:submod}, the computer-science community
is enthusiastic about the
submodularity of the so called ``influence spread function.'' One
reason for the intense interest is the important role submodularity plays in the
algorithmic design and analysis of methods to solve the
``most influential nodes'' problem as formulated within the diffusion
setting using the cascade model~\citep{kleinberg07}. Because our
approach abstracts away the dynamics, \emph{it
is not even clear what such ``influence spread function'' is in our
context, or even whether it could be meaningfully defined}. 

In this paragraph, we
will consider a potential definition and discuss its meaning
or implications, if any.
We first introduce some additional notation to simplify the
presentation. We denote the set involved in the uniqueness condition
in the last equation presented in the definition of the problem
formulation by $C_{g,\G}(S) \equiv \{ \x \in \NE(\G) \mid \x_S =
\x^*_{S}, \x^* \in \mathcal{X}^*_g(S) \}$. Perhaps the most direct attempt at defining a notion analogous to the
``influence spread function'' in the diffusion setting, which in our
context we will denote by $f_{g,h,\G}$, may be to let
$f_{g,h,\G}(S) \equiv h(S) - \lambda |C_{g,\G}(S)|$, for some constant
$\lambda > 0$. We can interpret such an $f_{g,h,\G}$ as trying to minimize the
PSNE consistent with assigning the nodes in $S$ according to some goal PSNE, while maximizing a general
preference over subsets as captured by $h$, modulo a ``penalization
constant'' $\lambda$. Yet, we cannot say anything
meaningful about such an ``influence spread function'' in general, even for the most
common instantiations of $g$ and $h$; except perhaps that it is at best
unclear to us that the question
of whether that function
is submodular makes sense, or whether it could even have a reasonable
answer. To start, one major reason for
our inability to say anything meaningful at this moment is
that we are unaware of any PSNE characterization result that would
apply to our setting. It seems to us that such characterizations would
be key in any study of the potential submodularity properties of that
``influence spread function'' $f_{g,h,\G}$ as we just defined it.

The study of other potential definitions of an ``influence spread
function'' in our setting, as well as their properties, is beyond the scope of this
paper. (In fact, we do not know of any other reasonable alternatives,
beyond simple variations of the function defined above.) More importantly, we leave open for future work the study of the potential \emph{relevance} that
submodularity may have within our approach, beyond the indirect connection
through the characteristics of certain classes of influence games
discussed in the next section, Section~\ref{SecComputation} (i.e., strategic complementarity and substitutability).
 

\subsection{Linear Influence Games (LIG)}
\label{sec:lig} 
A simple instantiation of the general
influence game model just described is the case of linear influences.
\begin{definition}
\label{def:lig}
In a \emph{linear influence game (LIG)}, the influence function of
each individual $i$ is defined
as
\(
\textstyle
f_{i}(\x_{-i}) \equiv \sum_{j \neq i} w_{ji} x_j - b_i 
\)
where for any other individual $j$, $w_{ji} \in \R$ is a {\em
weight\/} parameter quantifying the ``influence factor'' that $j$ has
on $i$, and $b_{i} \in \R$ is a {\em threshold\/} parameter for 
$i$'s level of ``tolerance'' for negative effects.
\end{definition}
It follows from Definition~\ref{def:payoff} that although the influence function of an LIG is linear, its payoff function is quadratic. Furthermore, the following argument shows that an LIG is a special type of graphical game in \emph{parametric} form. 
In general, the influence factors $w_{ji}$ induce a
directed graph, where nodes represent individuals, and therefore, we obtain a
graphical game having a \emph{linear} (in the number of edges) representation size, as opposed to the  \emph{exponential} (in the maximum degree of a node) representation size of general graphical games in normal form  \citep{kearns01}. In particular, there is a directed edge (or arc)
from individual $j$ to $i$ iff $w_{ji} \neq 0 $. 

\textit{Example.} Figure~\ref{fig:infl_example} shows an example of an LIG with binary behavior. Here, for each edge $(i, j)$, $w_{ji} = 1$ and $w_{ij} = 1$. That is, the game is a special type of LIG with symmetric influence factors. Furthermore, for each node $i$, its threshold $b_i$ is defined to be $0$. Therefore, at any PSNE of this game, each node wants to adopt the behavior of the majority of its neighbors and it is indifferent in the case of a tie.

\subsubsection{Connection to Polymatrix Games}
Polymatrix games~\citep{Janovskaja_1968_MR_by_Isbell} are $n$-player
noncooperative games where a player's total payoff is the sum of the
\textit{partial payoffs} received from the other players. Formally, for any joint action $\x$, player $i$'s payoff is given by $M_i ( x_i, \x_{-i}) \equiv \sum_{j \neq i} \alpha_{ji} (x_j, x_i)$, where $\alpha_{ji} (x_j, x_i)$ is the partial payoff that $i$ receives from $j$ when $i$ plays $x_i$ and $j$ plays $x_j$. Note that this partial payoff is local in nature and is not affected by the choice of actions of the other nodes. We will consider polymatrix games with only binary actions $\{1, -1\}$ here.

The following property shows an equivalence between LIGs and $2$-action polymatrix games. Thus, our computational study of LIGs directly carries over to $2$-action polymatrix games.

\begin{proposition}
\label{propos:polymatrix}
LIGs are equivalent to $2$-action polymatrix games, modulo the set of
PSNE.~\footnote{We present this result in the context of PSNE because
  that is the solution concept we use throughout the paper. However, this result easily extends to the more general notion
  of \emph{correlated equilibria} (CE) \citep{aumann74,aumann87}, which in
  turn generalizes the notion of
  \emph{mixed-strategy Nash equilibria} (MSNE) \citep{nash51}. (See
  \cite{fudenbergandtirole91} for textbook definitions of CE and MSNE.) The reason the
  result generalizes is that, as is well-known, two games with the same
  number of players and the same set of actions for each player have the same
  CE set if the set of payoff functions
  $\{u^1_i\}$ and $\{u^2_i\}$ of the games $1$ and $2$, respectively,
  satisfy the following condition: for all players $i$, there exist a
  positive constant $c_i > 0$ and an arbitrary real-valued function
  $d_i$ of
  the actions of all the players except $i$, such that,
  $u^1_i(x_1,\x_{-i}) = c_i\,  u^2_i(x_i,\x_{-i}) + d_i(\x_{-i})$. It is known that the computation of MSNE of 2-action polymatrix games is PPAD-hard. (It follows from the result of \cite{daskalakis2009}, although it is not explicitly mentioned there.) An implication of this is that the computation of MSNE of LIGs is also PPAD-hard.}
\end{proposition}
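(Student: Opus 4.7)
The plan is to establish the equivalence in both directions by exhibiting payoff-preserving (modulo terms that do not affect best response) transformations between LIGs and 2-action polymatrix games, and then invoking the standard fact, reiterated in the footnote, that two games with the same player set and action sets share the same PSNE set whenever each payoff function of one game differs from that of the other only by an additive term that does not depend on the relevant player's own action (with a positive scaling allowed).

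For the direction from LIGs to 2-action polymatrix games, I would start from the LIG payoff $u_i(x_i,\x_{-i}) = x_i \left(\sum_{j\neq i} w_{ji} x_j - b_i\right) = \sum_{j\neq i} \left( w_{ji}\, x_i x_j - \tfrac{b_i}{n-1}\, x_i \right)$, and define the partial payoff of the associated polymatrix game by $\alpha_{ji}(x_j, x_i) \equiv w_{ji}\, x_i x_j - \tfrac{b_i}{n-1}\, x_i$. Summing over $j \neq i$ recovers $u_i$ exactly, so the total payoff $M_i$ of the polymatrix game equals $u_i$ pointwise; a fortiori the PSNE sets coincide.

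For the converse direction, the key step is to exploit the fact that on $\{-1,1\}^2$ every real-valued function admits a unique multilinear (Fourier/parity) expansion. Concretely, for each ordered pair $(j,i)$ there exist unique constants $a_{ji}, b_{ji}, c_{ji}, d_{ji} \in \R$ with $\alpha_{ji}(x_j,x_i) = a_{ji} + b_{ji}\, x_i + c_{ji}\, x_j + d_{ji}\, x_i x_j$. Substituting into $M_i(x_i,\x_{-i}) = \sum_{j\neq i} \alpha_{ji}(x_j,x_i)$ and grouping, I would write $M_i(x_i,\x_{-i}) = x_i \left( \sum_{j\neq i} d_{ji}\, x_j + \sum_{j\neq i} b_{ji} \right) + r_i(\x_{-i})$, where $r_i(\x_{-i}) \equiv \sum_{j\neq i} (a_{ji} + c_{ji} x_j)$ does not depend on $x_i$. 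Now I would define an LIG with weights $w_{ji} \equiv d_{ji}$ and thresholds $b_i \equiv -\sum_{j \neq i} b_{ji}$; its payoff is $u_i(x_i,\x_{-i}) = x_i\bigl(\sum_{j\neq i} w_{ji} x_j - b_i\bigr) = M_i(x_i,\x_{-i}) - r_i(\x_{-i})$.

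Since $r_i$ is independent of $x_i$, the condition in the footnote is satisfied with $c_i = 1$ and $d_i(\x_{-i}) = -r_i(\x_{-i})$, hence $\BR_i$ and thus the PSNE set are preserved. Combining the two directions yields the proposition. I do not anticipate a substantive obstacle: the transformation is explicit and the PSNE-preservation reduces to the standard observation that best responses depend only on the own-action-dependent part of the payoff. The only care needed is to verify that $\mathrm{diag}(W) = 0$ is consistent with the construction (it is, because the sum defining $M_i$ excludes $j = i$, so no ``self-weight'' ever arises), and that when $n = 2$ the threshold term is still well-defined (it is, as there is exactly one partial payoff contributing).
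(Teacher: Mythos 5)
Your proposal is correct and follows essentially the same route as the paper: the forward direction splits $b_i$ into $n-1$ equal shares to define $\alpha_{ji}(x_j,x_i)=w_{ji}x_ix_j-\tfrac{b_i}{n-1}x_i$ exactly as in the paper, and the reverse direction isolates the $x_i$-dependent part of $M_i$ and discards the residual. Your multilinear expansion of each $\alpha_{ji}$ is just a cleaner bookkeeping of the paper's indicator-function algebra and yields the identical weights $w_{ji}=\tfrac{1}{4}\bigl(\alpha_{ji}(1,1)-\alpha_{ji}(-1,1)-\alpha_{ji}(1,-1)+\alpha_{ji}(-1,-1)\bigr)$ and thresholds $b_i=-\sum_{j\neq i}\tfrac{1}{4}\bigl(\alpha_{ji}(1,1)+\alpha_{ji}(-1,1)-\alpha_{ji}(1,-1)-\alpha_{ji}(-1,-1)\bigr)$.
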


\begin{proof}
Assume that the number of players $n > 1$; otherwise, the statement holds trivially. 
We first show that given any instance of an LIG, we can design a polymatrix game that has the same set of PSNE. In an LIG instance, player $i$'s payoff is given by

\begin{eqnarray*}
	\begin{split}
			u_i (x_i, \x_{-i}) & = x_i \left(\sum_{j \neq i} w_{ji} x_j - b_i\right) \\
												 & = x_i \sum_{j \neq i} \left(w_{ji} x_j - \frac{b_i}{n-1}\right) \\
												 & = \sum_{j \neq i} \left(x_i w_{ji} x_j - \frac{x_i b_i}{n-1}\right).
	\end{split}
\end{eqnarray*}

Thus, constructing a polymatrix game instance by defining $\alpha_{ji}(x_j, x_i) \equiv x_i w_{ji} x_j - \frac{x_i b_i}{n-1}$, we have the same set of PSNE in both instances.

Next, we show the reverse direction. Player $i$'s payoff in a $2$-action polymatrix game is given by

\begin{eqnarray*}
	\begin{split}
			M_i ( x_i, \x_{-i}) & = \sum_{j \neq i} \alpha_{ji} (x_j, x_i) \\
													& = \sum_{j \neq i} \left(\indicator{x_i = 1} \alpha_{ji}(x_j, 1) + \indicator{x_i = -1} \alpha_{ji}(x_j, -1) \right) \\
													& = \sum_{j \neq i} \left(\frac{1 + x_i}{2} \alpha_{ji}(x_j, 1) + \frac{1 - x_i}{2} \alpha_{ji}(x_j, -1) \right) \\
													& = \frac{x_i}{2} \sum_{j \neq i} \left(\alpha_{ji} (x_j, 1) - \alpha_{ji} (x_j, -1) \right) + \frac{1}{2} \sum_{j \neq i} \left(\alpha_{ji}(x_j, 1) + \right.\\
													& \hspace{0.25in} \left. \alpha_{ji}(x_j, -1) \right).
	\end{split}
\end{eqnarray*}

Note that the second term above does not have any effect on $i$'s choice of action. Thus, we can re-define the payoff of player $i$, without making any change to the set of PSNE of the original polymatrix game, as follows.

\begin{eqnarray*}
	\begin{split}
			M_i^\prime ( x_i, \x_{-i}) & = \frac{x_i}{2} \sum_{j \neq i} \left(\alpha_{ji} (x_j, 1) - \alpha_{ji} (x_j, -1)\right) \\
																 & = \frac{x_i}{2} \left(\sum_{j \neq i} \left(\indicator{x_j = 1} \alpha_{ji} (1, 1) + \indicator{x_j = -1} \alpha_{ji}(-1, 1)\right) - \right.\\ 
																 		& \hspace*{0.25in} \left. \sum_{j \neq i}  \left(\indicator{x_j = 1} \alpha_{ji} (1, -1) + \indicator{x_j = -1} \alpha_{ji} (-1, -1)\right) \right)\\
																 & = \frac{x_i}{2} \left(\sum_{j \neq i} \left(\frac{1+x_j}{2} \alpha_{ji} (1, 1) + \frac{1-x_j}{2} \alpha_{ji}(-1, 1) \right) - \right.\\ & \hspace*{0.25in} \left. \sum_{j \neq i} \left(\frac{1+x_j}{2} \alpha_{ji} (1, -1) + \frac{1-x_j}{2} \alpha_{ji} (-1, -1) \right) \right)\\
																 & = \frac{x_i}{4} \left( \sum_{j \neq i}  {x_j} \left(\alpha_{ji} (1, 1) - \alpha_{ji}(-1, 1) - \alpha_{ji} (1, -1) + \alpha_{ji} (-1, -1) \right)\right.\\  & \hspace*{0.25in} \left. + \sum_{j \neq i} \left(\alpha_{ji} (1, 1) + \alpha_{ji}(-1, 1) - \alpha_{ji} (1, -1) - \alpha_{ji} (-1, -1)\right)\right).
	\end{split}
\end{eqnarray*}

Therefore, we can construct an LIG that has exactly the same set of PSNE as the polymatrix game, in the following way. For any player $i$, define $b_i \equiv -\sum_{j \neq i}  \frac{1}{4} (\alpha_{ji} (1, 1) + \alpha_{ji}(-1, 1) - \alpha_{ji} (1, -1) - \alpha_{ji} (-1, -1))$, and for any player $i$ and any other player $j$, define $w_{ji} \equiv  \frac{1}{4} (\alpha_{ji} (1, 1) - \alpha_{ji}(-1, 1) - \alpha_{ji} (1, -1) + \alpha_{ji} (-1, -1))$.
\end{proof}

\section{Equilibria Computation in Linear Influence Games}
\label{SecComputation}
We first study the problem of computing and counting PSNE in LIGs. We show that several special cases of LIGs present us with attractive computational advantages, while the general problem is intractable unless P = NP. We present heuristics to compute PSNE in general LIGs.
%
\subsection{Nonnegative Influence Factors}
When all the influence factors are non-negative, an LIG is supermodular~\citep{milgromandroberts90,topkis79}. 
In particular, the game exhibits what is called \textit{strategic
  complementarity}~\citep{bulowetal85,milgromandroberts90,topkis79,topkis78}.~\footnote{\label{foot:stratcomp}
A formal, general definition of 
strategic complementarity is beyond the scope of this paper. Instead, we
present a definition in the context of this paper and refer the
reader to standard references for a general
definition. We say that player $i$ in an influence game $\G$ exhibits \emph{strategic
  complementarity} if for every pair of the joint-actions $\x_{-i}$
and $\x'_{-i}$, if  $\x_{-i} \geq \x'_{-i}$, element-wise, implies
$u_i(x_i=1,\x_{-i}) \geq u_i(x_i=-1,\x'_{-i})$. We then say the influence
game $\G$, as a whole, exhibits \emph{strategic
  complementarity} if every player in the game does. Intuitively, it
says that the action/behavior in the best-response correspondence of any player cannot ``decrease'' (i.e.,
move ``down'' from $\{+1\}$ to $\{-1\}$, or to $\{-1,+1\}$ for that matter) if the
actions/behavior of the other players ``increases'' (i.e., at least one
other player moves ``up'' from $-1$ to $+1$); and \emph{vice
  versa}. Thus, roughly speaking, we can say that the players'
actions ``complement'' each other ``strategically;'' or said
differently, in general, each player prefers to ``play along'' by
choosing an  action ``consistent'' with that chosen by the other
players: if the other players ``move up'' or ``move down'' then the player would like
to ``follow along'' with the other players by choosing the respective
action, or ``stay put.''}
Hence, the best-response dynamics converges in at most $n$ rounds. From this, we obtain the following result.
\begin{proposition}
The problem of computing a PSNE is in P for LIGs on general graphs with 
only non-negative influence factors.
\end{proposition}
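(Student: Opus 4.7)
The plan is to directly exploit the strategic complementarity that the paragraph preceding the proposition has already attributed to LIGs with non-negative weights, and to give an explicit monotone best-response algorithm that reaches a PSNE in polynomially many rounds, each of which is itself polynomial-time.

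First, I would initialize with the ``bottom'' profile $\x^{(0)} = (-1,\ldots,-1)$ and, at each round $t$, simultaneously update
\[
x_i^{(t+1)} \;=\; \begin{cases} 1, & f_i(\x_{-i}^{(t)}) > 0, \\ -1, & \text{otherwise,} \end{cases}
\]
so that ties are broken toward $-1$. The key lemma to establish is that this trajectory is monotone non-decreasing coordinate-wise: $\x^{(t)} \le \x^{(t+1)}$ for every $t$. This I would prove by induction. The base case is immediate because $\x^{(0)}$ is the smallest element of $\{-1,1\}^n$. For the inductive step, note that under $w_{ji} \geq 0$, the quantity $f_i(\x_{-i}) = \sum_{j\neq i} w_{ji}x_j - b_i$ is non-decreasing in every coordinate of $\x_{-i}$; hence if $\x_{-i}^{(t)} \le \x_{-i}^{(t+1)}$ then $f_i(\x_{-i}^{(t)}) \le f_i(\x_{-i}^{(t+1)})$, so whenever $x_i^{(t+1)} = 1$ (i.e.\ $f_i(\x_{-i}^{(t)}) > 0$) we also have $f_i(\x_{-i}^{(t+1)}) > 0$ and therefore $x_i^{(t+2)} = 1$. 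No player ever flips from $+1$ back to $-1$.

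Next, because each of the $n$ players can flip from $-1$ to $+1$ at most once along a monotone trajectory in $\{-1,1\}^n$, the sequence must stabilize in at most $n$ rounds at some $\x^*$ with $\x^{(t+1)} = \x^{(t)} = \x^*$. By the definition of the update rule, $x_i^* \in \BR_i^{\G}(\x_{-i}^*)$ for every $i$ at the fixed point, so $\x^* \in \NE(\G)$. For the running time, each round evaluates $f_i(\x_{-i})$ for every $i$; summed over players this costs $O(m+n)$, where $m = \sum_i |N_i|$ is the number of arcs in the influence graph. The total cost is therefore $O(n(m+n))$, polynomial in the representation size of the LIG.

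The only point requiring any care, and what I would call the main (minor) obstacle, is the treatment of ties $f_i(\x_{-i}) = 0$: the best-response correspondence returns $\{-1,+1\}$ there, and an inconsistent choice could destroy monotonicity. Breaking ties toward $-1$ when starting from the bottom profile (or, symmetrically, toward $+1$ when starting from the top) preserves the monotone structure of the trajectory and keeps the final $\x^*$ inside $\BR_i^{\G}(\x_{-i}^*)$, completing the argument. As a byproduct, the symmetric algorithm started from the all-$+1$ profile produces the largest PSNE under the coordinate-wise order, certifying that the PSNE set is non-empty and, in fact, contains both a smallest and a largest element.
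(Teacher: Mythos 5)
Your proof is correct and follows essentially the same route as the paper, which likewise invokes strategic complementarity and observes that parallel best-response dynamics started from the all-$(-1)$ (or all-$(+1)$) profile converges in at most $n$ rounds; you simply make the monotonicity induction and the tie-breaking explicit. The only caveat is your closing byproduct claim that the fixed points reached from the two extreme profiles are the least and greatest PSNE: that requires an extra (standard but unstated) argument that the monotone trajectory never overshoots any PSNE, though nothing in the proposition depends on it.
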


This property implies certain monotonicity of the best-response correspondences. More specifically, for
each player $i$, if any subset of the other players ``increases his/her strategy'' by adopting the new behavior,
then player $i$'s best-response cannot be to abandon adoption (i.e., move from $1$ to $-1$). In other
words, once a player adopts the new behavior, it has no incentive to go back. This monotonicity property also follows directly from the linear threshold model. Strategic complementarity implies other interesting characterizations of the structure of PSNE in LIGs and the behavior of best-response dynamics. For example, it is not hard to see that such games always have a PSNE: If we start with the complete
assignment in which either everyone is playing $1$, or everyone is playing
$-1$, parallel/synchronous best-response dynamics converges after at most $n$ rounds~\citep{milgromandroberts90}. 
If both best-response processes starting with all $-1$'s and all $1$'s converge to the same PSNE, then the PSNE is unique. Otherwise, any other PSNE of the game must be ``contained'' between the two different PSNE. 
We can also view this from the perspective of constraint propagation with monotonic constraints \citep{russellandnorvig03}.

\subsection{Special Influence Structures and Potential Games}
Several special subclasses of LIGs are potential
games~\citep{monderer96}.~\footnote{A formal, general definition of 
potential games is beyond the scope of this paper. Instead, we
present a definition in the context of this paper and refer the
reader to the standard reference for a more general
definition~\citep{monderer96}. We say an influence game $\G$ is a \emph{ordinal potential
game} if there exists a function $\Phi : \{-1,+1\}^n \rightarrow \R$, called the
\emph{ordinal potential function}, independent of any specific player, such
that for every joint-action $\x$, for every player $i$, and for every
possible action/pure-strategy $x'_i$ that player $i$ can take, we have
that 
$u_i(x'_i,\x_{-i}) - u_i(x_i,\x_{-i}) > 0$ if and only if
$\Phi(x'_i,\x_{-i}) - \Phi(x_i,\x_{-i}) > 0$. Note the abuse of
notation by letting $\Phi(\x) \equiv \Phi(\x_i,\x_{-i})$ for each player $i$; this
is consistent with the same abuse of notation we use for the payoff
functions $u_i$ throughout the paper, standard in the game-theory
literature. When the overall condition above is the
stricter condition $u_i(x'_i,\x_{-i}) - u_i(x_i,\x_{-i}) =
\Phi(x'_i,\x_{-i}) - \Phi(x_i,\x_{-i})$, then we call $\G$ and
$\Phi$ an
\emph{exact potential game and function}, respectively. For
simplicity, we refer to such $\G$ and $\Phi$
simply as a \emph{potential game and function}, when clear from
context. Intuitively, the potential function, a
global quantity independent of any specific player,
defines the best-response correspondence of each player in the
potential game. Also, the PSNE of a potential game are essentially the
local minima
(or stationary points) of the potential function. Hence, one could
view the players as trying to optimize the potential function as a
``global group'' but via ``individual, local best-responses.''
By performing asynchronous/non-simultaneous best-response dynamics,
the players are implicitly performing an
axis-parallel optimization of the potential function. Because the
domain of the potential function (i.e., the space of joint-actions) is
finite, this process will always converge to a local maxima or stable
point of the potential function, or equivalently, to a PSNE of the
potential game. Hence, every potential games always has a PSNE.}
This connection guarantees the existence of PSNE in such games.
\begin{proposition}
If the influence factors of an LIG $\G$ are symmetric (i.e., $w_{ji} = w_{ij}$, for all $i,j$), then $\G$ is a potential game.
\end{proposition}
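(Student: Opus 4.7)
The plan is to exhibit an explicit exact potential function $\Phi$ for $\G$ and verify, by direct calculation, that unilateral deviations in $\Phi$ match unilateral deviations in each $u_i$. By Definition~\ref{def:payoff} and Definition~\ref{def:lig}, the payoff of player $i$ in $\G$ is $u_i(x_i,\x_{-i}) = x_i \sum_{j \neq i} w_{ji} x_j - b_i x_i$. The quadratic cross-terms in $u_i$ involve pairs $(i,j)$ only through $x_i x_j$, so the natural symmetric aggregation across all players is a pairwise quadratic form plus a linear term.

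Concretely, I would propose
\begin{equation*}
\Phi(\x) \equiv \sum_{1 \le i < j \le n} w_{ij}\, x_i x_j \;-\; \sum_{i=1}^n b_i\, x_i,
\end{equation*}
and then verify the exact-potential condition recalled in the footnote preceding the proposition: for every player $i$, every $\x_{-i}$, and every pair $x_i, x'_i \in \{-1,+1\}$,
\begin{equation*}
u_i(x'_i,\x_{-i}) - u_i(x_i,\x_{-i}) \;=\; \Phi(x'_i,\x_{-i}) - \Phi(x_i,\x_{-i}).
\end{equation*}

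The key step is isolating the $x_i$-dependence of $\Phi$. The only terms in $\Phi$ that depend on $x_i$ come from pairs $\{i,j\}$ with $j \neq i$, contributing $x_i \sum_{j \neq i} w_{\min(i,j)\max(i,j)} x_j$, together with the linear term $-b_i x_i$. Using the symmetry hypothesis $w_{ij}=w_{ji}$, this coefficient of $x_i$ collapses to $\sum_{j\neq i} w_{ji} x_j - b_i$, which is exactly $f_i(\x_{-i})$. Therefore
\begin{equation*}
\Phi(x'_i,\x_{-i}) - \Phi(x_i,\x_{-i}) \;=\; (x'_i - x_i)\, f_i(\x_{-i}) \;=\; u_i(x'_i,\x_{-i}) - u_i(x_i,\x_{-i}),
\end{equation*}
which establishes that $\Phi$ is an exact potential function and hence that $\G$ is a potential game (and a fortiori an ordinal potential game).

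The only subtle point, and the main place where the hypothesis is used, is the collapsing step above: without $w_{ij}=w_{ji}$, one obtains a coefficient of $x_i$ in $\Phi$ of the form $\tfrac{1}{2}\sum_{j\neq i}(w_{ij}+w_{ji})x_j$, which does not in general equal $f_i(\x_{-i})=\sum_{j\neq i}w_{ji}x_j$, so no symmetric quadratic form can serve as an exact potential. Thus symmetry of the influence factors is precisely what lets the global quadratic $\Phi$ simultaneously reproduce each local influence function $f_i$, and there is no further obstacle.
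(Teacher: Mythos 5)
Your proposal is correct and takes essentially the same route as the paper: your $\Phi(\x) = \sum_{i<j} w_{ij} x_i x_j - \sum_i b_i x_i$ is, under the symmetry hypothesis, exactly the paper's potential $\sum_{t} x_t \bigl( \sum_{i \neq t} \tfrac{x_i w_{it}}{2} - b_t \bigr)$ written over unordered pairs, and both arguments verify the (exact) potential condition by directly computing that the change in $\Phi$ under a unilateral deviation equals the change in $u_i$. Your isolation of the $x_i$-dependence is a somewhat cleaner bookkeeping than the paper's indicator-function expansion, but the underlying calculation is identical.
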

\begin{proof} We show that the game has an ordinal potential function, 
\begin{equation}
\label{eqn:pot_sym}
\Phi(\x) = \sum_{t = 1}^{n} x_t \left( \sum_{i \neq t} \frac{x_i w_{it}}{2} - b_t \right). 
\end{equation}
Consider any player $j$. The difference in $j$'s payoff for $x_j = 1$ and $x_j = -1$ (assuming all other players play $\x_{-j}$ in both cases) is
\begin{eqnarray}
\label{eqn:diff_payoff}
\begin{split}
&u_j(1, \x_{-j}) - u_j(-1, \x_{-j}) \\ & = 1 \times \left(\sum_{i \neq j} x_i w_{ij} - b_j \right) - (-1) \times \left(\sum_{i \neq j} x_i w_{ij} - b_j \right) \\
														& = 2 \times \left(\sum_{i \neq j} x_i w_{ij} - b_j \right).
\end{split}
\end{eqnarray}
Next, the difference in the potential function when $j$ plays $1$ and $-1$ is
\begin{eqnarray}
\label{eqn:diff_pot}
\begin{split}
& \Phi(1, \x_{-j}) - \Phi(-1, \x_{-j}) \\ & = 1 \times \left( \sum_{i \neq j} \frac{x_i w_{ij}}{2} - b_j \right) 
																					+ \sum_{t \neq j} x_t \left( \sum_{i \neq t} \indicator{i \neq j} \frac{x_i w_{it}}{2} - b_t \right)\\
																			&	+ \sum_{t \neq j} x_t \left( \sum_{i \neq t} \indicator{i = j} \frac{1 \times w_{it}}{2} - b_t \right)  - \\
																			&	(-1) \times \left( \sum_{i \neq j} \frac{x_i w_{ij}}{2} - b_j \right) 
																					- \sum_{t \neq j} x_t \left( \sum_{i \neq t} \indicator{i \neq j} \frac{x_i w_{it}}{2} - b_t \right)\\
																			&		- \sum_{t \neq j} x_t \left( \sum_{i \neq t} \indicator{i = j} \frac{(-1) \times w_{it}}{2} - b_t \right)\\
																			& = 2 \times \left( \sum_{i \neq j} \frac{x_i w_{ij}}{2} - b_j \right) 
																				+ 2 \times \left(\sum_{t \neq j} \frac{x_t w_{jt}}{2} \right) \\
																			& = 2 \times \left(\sum_{i \neq j} x_i w_{ij} - b_j \right).											
\end{split}															
\end{eqnarray}

The last line follows by the symmetry of the weights (i.e., $w_{ij} = w_{ji}$).
\end{proof}

If, in addition, the threshold $b_i = 0$ for all $i$, the game is a \emph{party-affiliation game}, 
and computing a PSNE in such games is PLS-complete~\citep{fabrikantetal04}. 

The following result is on a large class of games that we call
\textit{indiscriminate} LIGs, where for every player $i$, the
influence weight, $w_{ij} \equiv \delta_i \neq 0$, that $i$ imposes on
every other player $j$ is the same. The interesting aspect of this
result is that these LIGs are potential games despite being possibly
\textit{asymmetric} and exhibiting strategic substitutability (due to
negative influence factors).~\footnote{
\emph{Strategic substitutability} is essentially the opposite of
strategic complementarity, as presented in Footnote~\ref{foot:stratcomp},
\emph{except} that one replaces the $\geq$ sign with a $\leq$ sign
in the hypothesis condition in the definition (i.e., $\x_{-i} \leq \x'_{-i}$). Once
again a formal, general definition is beyond the scope of this paper. Intuitively, it
says that the action/behavior in the best-response correspondence of any player cannot ``decrease'' (i.e.,
move ``down'' from $\{+1\}$ to $\{-1\}$, or to $\{-1,+1\}$ for that matter) if the
actions/behavior of the other players also ``decreases'' (i.e., at least one
other player moves ``down'' from $+1$ to $-1$); and \emph{vice
  versa}. Thus, roughly speaking, we can say that the players'
actions are ``substitutes'' of each other ``strategically;'' or said
differently, in general, each player prefers to play an action that is
opposite to/different than
that chosen by the other players: if the other players ``move up'' or ``move down'' then the player would like
to ``go against the crowd'' by choosing an opposite 
action, or ``stay put.''}
\begin{proposition}
Let $\G$ be an indiscriminate LIG in which 
all $\delta_i$ for all $i$, have the same sign, denoted by $\rho \in \{-1,+1\}$.
Then $\G$ is a potential game with the following potential function
\(
\textstyle
\Phi(\mathbf{x}) = \rho \left[ \left(\sum_{i = 1}^n \delta_i x_i \right)^2 - 2 \sum_{i
= 1}^n b_i \delta_i x_i \right]
\).
\end{proposition}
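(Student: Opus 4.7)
The plan is to verify that the given $\Phi$ is an (ordinal) potential by the same template used for the symmetric case above: fix an arbitrary player $j$ and an arbitrary configuration $\x_{-j}$ of the other players, then compare the payoff difference $u_j(1,\x_{-j}) - u_j(-1,\x_{-j})$ with the potential difference $\Phi(1,\x_{-j}) - \Phi(-1,\x_{-j})$. Since each player has only the two actions $\{-1,+1\}$, checking this single ``flip'' for every $j$ and every $\x_{-j}$ suffices to establish the ordinal potential condition from the definition.

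First I would compute the payoff difference. By Definition~\ref{def:lig}, in an indiscriminate LIG the influence function is $f_j(\x_{-j}) = \sum_{i \neq j} \delta_i x_i - b_j$, so
\begin{equation*}
u_j(1,\x_{-j}) - u_j(-1,\x_{-j}) \;=\; 2\left(\textstyle\sum_{i \neq j} \delta_i x_i - b_j\right) \;=\; 2(S - b_j),
\end{equation*}
where I abbreviate $S \equiv \sum_{i \neq j} \delta_i x_i$. Next I would compute the potential difference. Writing $\sum_i \delta_i x_i = \delta_j x_j + S$ and applying the elementary identity $(S+\delta_j)^2 - (S-\delta_j)^2 = 4\delta_j S$, the quadratic term of $\Phi$ contributes $\rho \cdot 4\delta_j S$ and the linear term contributes $-\rho \cdot 4 b_j \delta_j$, while the terms that do not involve $x_j$ cancel. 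Therefore
\begin{equation*}
\Phi(1,\x_{-j}) - \Phi(-1,\x_{-j}) \;=\; 4\rho\delta_j\,(S - b_j).
\end{equation*}

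The key step is then to observe that the two differences are related by the multiplicative factor $2\rho\delta_j$. Because $\rho$ is the common sign of every $\delta_i$, we have $\rho\delta_j = |\delta_j| > 0$ for every player $j$, so this factor is strictly positive. Hence the sign of the potential difference agrees with the sign of the payoff difference for every $j$ and every $\x_{-j}$, which is exactly the ordinal-potential condition.

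The main thing worth flagging, since the algebra itself is routine, is that $\Phi$ is only an \emph{ordinal} and not an exact potential function here: the scaling factor $2\rho\delta_j$ depends on the identity of $j$, in contrast to the symmetric case where the corresponding factor was the uniform constant $2$. The common-sign hypothesis on the $\delta_i$'s is precisely what guarantees that this per-player scaling factor remains uniformly positive; dropping it and allowing mixed signs would reverse the inequality for some player and break the ordinal equivalence, which is why the hypothesis on $\rho$ is essential and cannot be removed from the statement.
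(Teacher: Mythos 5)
Your proof is correct and follows essentially the same route as the paper's: compute the payoff difference $2(S-b_j)$ and the potential difference $4\rho\delta_j(S-b_j)$ for a unilateral flip by player $j$, then conclude from $\rho\delta_j = |\delta_j| > 0$ that the signs agree. Your added remark that the potential is only ordinal (because the scaling factor $2\rho\delta_j$ is player-dependent) is accurate and consistent with the paper's framing.
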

\begin{proof} It is sufficient to show that the sign of the difference in the individual utilities of any player due to changing her action unilaterally, is the same as the sign of the difference in the corresponding potential functions. For any player $j$, the first difference is 
\begin{align}
\label{eq:diff1}
\begin{split}
& 1 \times \left(\sum_{i \neq j} \delta_i x_i - b_j \right) - (-1) \times \left(\sum_{i \neq j} \delta_i x_i - b_j \right) \\
& = 2 \left(\sum_{i \neq j} \delta_i x_i - b_j \right).
 \end{split}
\end{align}
The potential function when $j$ plays $1$,
\begin{align*}
&\Phi(x_j = 1, \mathbf{x}_{-j})\\ & = \rho \left[ \left(\sum_{i \neq j} \delta_i x_i + \delta_j \times 1 \right)^2 
																- 2 \sum_{i \neq j} b_i \delta_i x_i - 2 b_j \delta_j \times 1 \right]\\
%
& = \rho \left[ \left(\sum_{i \neq j} \delta_i x_i \right)^2 + {\delta_j}^2 + 2 \left(\sum_{i \neq j} \delta_i x_i \right) {\delta_j} 
- 2 \sum_{i \neq j} b_i \delta_i x_i - 2 b_j \delta_j \right].
\end{align*}
The potential function when $j$ plays $-1$,
\begin{align*}
&\Phi(x_j = -1, \mathbf{x}_{-j})\\ & = \rho \left[ \left(\sum_{i \neq j} \delta_i x_i + \delta_j \times (-1) \right)^2 
														- 2 \sum_{i \neq j} b_i \delta_i x_i - 2 b_j \delta_j \times (-1) \right]\\		%
& = \rho \left[ \left(\sum_{i \neq j} \delta_i x_i \right)^2 + {\delta_j}^2 - 2 \left(\sum_{i \neq j} \delta_i x_i \right) {\delta_j} 
- 2 \sum_{i \neq j} b_i \delta_i x_i + 2 b_j \delta_j \right].
\end{align*}
%
%
Thus, the difference in the potential functions,
\begin{align}
\label{eq:diff2}
\begin{split}
\Phi(x_j = 1, \mathbf{x}_{-j}) - \Phi(x_j = -1, \mathbf{x}_{-j})
										& = 4 \rho \delta_j \left(\sum_{i \neq j} \delta_i x_i - b_j \right).
\end{split}
\end{align}
Since $\rho \delta_j > 0$, the quantities given in~\eqref{eq:diff1} and~\eqref{eq:diff2} have the same sign.\end{proof}


\subsection{Tree-Structured Influence Graphs}
The following result follows from a careful, non-trivial modification of the {\bf TreeNash} algorithm~\citep{kearns01}. Note that the running time of the {\bf TreeNash} algorithm is exponential in the degree of a node and thus also exponential in the representation size of an LIG! In contrast, our algorithm is linear in the maximum degree and thereby linear in the representation size of an LIG. The complete proof follows a proof sketch.
\begin{theorem}
\label{thm:pne_tree}
There exists an $O(n d)$ time algorithm to find a PSNE, or to decide that there exists none, in LIGs with tree structures, where 
$d$ is the maximum degree of a node.
\end{theorem}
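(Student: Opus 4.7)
The plan is to adapt the \textbf{TreeNash} message-passing scheme of~\citep{kearns01} to LIGs, exploiting the linearity of the influence function to keep the message size and per-node work linear in (rather than exponential in) the degree. Root the tree at an arbitrary node $r$ and process the nodes in post-order. For each non-root node $i$ with parent $p$, I would define a $2\times 2$ boolean message $M_{i\to p}:\{-1,+1\}^2\to\{0,1\}$, where $M_{i\to p}(x_p,x_i)=1$ iff there is an assignment to the entire subtree rooted at $i$ in which every strict descendant best-responds to its own neighbors and $i$ best-responds to $x_p$ together with the chosen children's actions, playing $x_i$ in particular. Once the messages reach the root, a constant-time consistency check at $r$ decides the existence of a PSNE; a top-down traceback then reconstructs one witness.

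The key insight is that, by Definition~\ref{def:lig}, whether $x_i\in\BR_i^{\G}(x_p,x_{c_1},\dots,x_{c_k})$ depends only on the sign (or zero) of the single scalar $x_i(w_{pi}x_p+\sum_j w_{c_j i}x_{c_j}-b_i)$. Since, conditional on $x_i$, the children's feasible choices are independent (the message $M_{c_j\to i}(x_i,\cdot)$ specifies exactly which values of $x_{c_j}\in\{-1,+1\}$ are realizable in $c_j$'s subtree), the aggregate contribution $S\equiv\sum_j w_{c_j i}x_{c_j}$ attains, over feasible joint assignments, a minimum
\[
m(x_i)=\sum_j \min\{w_{c_j i}x_{c_j}:M_{c_j\to i}(x_i,x_{c_j})=1\}
\]
and a maximum $M(x_i)=\sum_j \max\{w_{c_j i}x_{c_j}:M_{c_j\to i}(x_i,x_{c_j})=1\}$, with both extremes realized by independent per-child witnesses. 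Only these extremes are needed for the decision: setting $M_{i\to p}(x_p,+1)=1$ iff each $M_{c_j\to i}(+1,\cdot)$ is nonempty and $w_{pi}x_p+M(+1)-b_i\ge 0$, and $M_{i\to p}(x_p,-1)=1$ iff each $M_{c_j\to i}(-1,\cdot)$ is nonempty and $w_{pi}x_p+m(-1)-b_i\le 0$. This collapses the per-node work from the $O(2^{d_i})$ of the original \textbf{TreeNash} to $O(d_i)$ additions and comparisons.

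The leaf base case is $M_{i\to p}(x_p,x_i)=\indicator{x_i(w_{pi}x_p-b_i)\ge 0}$. At the root $r$, I would compute $m_r(x_r),M_r(x_r)$ from its children's messages (no parent term) and report ``no PSNE'' iff neither $x_r\in\{-1,+1\}$ satisfies the analogous best-response inequality; otherwise the top-down pass commits each node to an $x_i$, then picks the per-child action that achieved the extreme witnessing the commitment, recursing into each subtree. The per-node cost $O(d_i)$ sums to $O(\sum_i d_i)=O(n)\subseteq O(nd)$. The main obstacle, and the place I would be most careful in the proof, is the invariant argument: one must show that (i) the extremes $m(x_i),M(x_i)$ are actually simultaneously achievable via independent per-child witnesses (so passing to extremes loses no PSNE), (ii) the $\ge$/$\le$ inequalities correctly encode indifference when the total influence is zero (so a tied node may be assigned either value), and (iii) the traceback selections at different nodes are mutually consistent, which follows inductively from the semantics of the messages because each child's commitment is itself witnessed by a PSNE of its own subtree.
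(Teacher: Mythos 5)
Your proposal is correct and is essentially the paper's own argument: both adapt \textbf{TreeNash} by observing that, conditional on $x_i$, the feasible choices of $i$'s subtree-neighbors contribute independently to the single linear score $w_{pi}x_p+\sum_j w_{c_ji}x_{c_j}-b_i$, so one only needs the extreme achievable value of that sum in the favorable direction for each $x_i$ (the paper realizes your $m(-1)/M(+1)$ via the assignment $x_k=\sigma(x_i w_{ki})$ for doubly-feasible neighbors), reducing per-node work from $O(2^d)$ to $O(d)$, with a standard witness traceback. The invariants you flag as needing care are exactly the ones the paper discharges, by the same independence-of-children and one-sided-inequality monotonicity reasoning you sketch.
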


\noindent \textit{Proof Sketch.}
We use similar notations as in~\citep{kearns01}. The modification of the {\bf TreeNash} involves efficiently (in $O(d)$ time, not $O(2^d)$) determining the existence of a witness vector and constructing one, if it exists,  at each node during the downstream pass, in the following way.

Suppose that an internal node $i$ receives tables $T_{ki}(x_k, x_i)$ from its parents $k$, and that $i$ wants to send a table $T_{ij}(x_i, x_j)$ to its unique child $j$. If for some parent $k$ of $i$, $T_{ki}(-1, x_i) = 0$ and $T_{ki}(1, x_i) = 0$, then $i$ sends the following table entries to $j$: $T_{ij}(x_i, -1) = 0$ and $T_{ij}(x_i, 1) = 0$. Otherwise, we first partition $i$'s set of parents into two sets in $O(d)$ time: $Pa_1(i , x_i)$ consisting of the parents $k$ of $i$ that have a unique best response $\hat{x}_k$ to $i$'s playing $x_i$ and $Pa_2(i, x_i)$ consisting of the remaining parents of $i$.
We show that $T_{ij}(x_i, x_j) = 1$ iff 
\begin{align*}
 & x_i( x_j w_{ji} + \sum_{k \in Pa_1(i, x_i)} \hat{x}_k w_{ki} + \\ 
 & \sum_{t \in Pa_2(i, x_i)} \underbrace{(2 \times \indicator{x_i w_{ti} > 0}-1)}_{t\text{'s action in witness vector}} w_{ti}) \ge 0,
\end{align*}
from which we get a witness vector, if it exists. \qed\ \\

Following is the complete proof of Theorem~\ref{thm:pne_tree}.

\begin{proof}
We denote any node $i$'s action by $x_i \in \{-1, 1\}$, its threshold
by $b_i$, and the influence of any node $i$ on another node $j$ by
$w_{ij}$. Furthermore, denote the set of parents of a node $i$ by $Pa(i)$. 
We now describe the two phases of the modified {\bf TreeNash} algorithm. 
%

\begin{enumerate}
	\item \textbf{Downstream phase.} In this phase each node sends a table to its unique child. We denote the table that node $i$ sends to its child $j$ as $T_{ij}(x_i, x_j)$, indexed by the actions of $i$ and $j$, and define the set of conditional best-responses of a node $i$ to a neighboring node $j$'s action $x_j$ as
	$BR_i(j, x_j) \equiv \{x_i\ |\ T_{ij}(x_i, x_j) = 1\}$. If $|BR_i(j, x_j)| = 1$ then we will abuse this notation by letting $BR_i(j, x_j)$ be the unique best-response of $i$ to $j$'s action $x_j$.

	The downstream phase starts at the leaf nodes. Each leaf node $l$ sends a table $T_{lk}(x_l,x_k)$ to its child $k$, where $T_{lk}(x_l, x_k) = 1$ if and only if $x_l$ is a conditional best-response of $l$ to $k$'s choice of action $x_k$. Suppose that an internal node $i$ obtains tables $T_{ki}(x_k, x_i)$ from its parents $k \in Pa(i)$, and that $i$ needs to send a table to its child $j$. Once $i$ receives the tables from its parents, it first computes (in $O(d)$ time) the following three sets that partition the parents of $i$ based on the size of their conditional best-response sets when $i$ plays $x_i$.	
	\begin{align*}
	&{Pa}_r(i, x_i) \equiv \{k \text{\ s.t.\ } k \in Pa(i)  \text{\ and\ } |BR_k(i, x_i)| = r\}, \text{\ for\ } r = 0, 1, 2.
	\end{align*}
	
		This is how $i$ computes the table $T_{ij}(x_i, x_j)$ sent to $j$: $T_{ij}(x_i, x_j) = 1$ if and only if there exists a \textit{witness vector} $(x_k)_{k \in Pa(i)}$ that satisfies the following two conditions:
		
		\indent \emph{Condition 1.} $T_{ki}(x_k, x_i) = 1$ for all $k \in Pa(i)$.
		
		\indent \emph{Condition 2.} The action $x_i$ is a best-response of node $i$ when every node $k \in Pa(i)$ plays $x_k$ and $j$ plays $x_j$. 
		
		There are two cases.
		
		\textbf{Case I: $Pa_0(i, x_i) \neq \emptyset$.} In this case, there exists some parent $k$ of $i$ for which both $T_{ki} (-1, x_i) = 0$ and $T_{ki} (1, x_i) = 0$. Therefore, there exists no witness vector that satisfies Condition 1, and $i$ sends the following table entries to $j$: $T_{ij}(x_i, x_j) = 0$, for $x_j = -1, 1$. 
	
	\textbf{Case II: $Pa_0(i, x_i) = \emptyset$.} In this case, we will show that there exists a witness vector for $T_{ij}(x_i, x_j) = 1$ satisfying Conditions 1 and 2 if and only if the following inequality	holds (which can be verified in $O(d)$ time). Below, we will use the \emph{sign} function $\sigma$: $\sigma(x) = 1$ if $x > 0$, and $\sigma(x) = -1$ otherwise.

\begin{align}
\label{ineq:tree}
x_i \left(w_{ji} x_j + \sum_{k \in {Pa}_1(i, x_i)} {w_{ki} BR_k(i, x_i) } + \sum_{k \in Pa_2(i, x_i)} w_{ki} \sigma(x_i w_{ki}) - b_i\right) \ge 0.
\end{align}

In fact, if Inequality (\ref{ineq:tree}) holds then we can construct a witness vector in the following way: If $k \in {Pa}_1(i, x_i)$, then let $x_k = BR_k(i, x_i)$, otherwise, let $x_k = \sigma(x_i w_{ki})$. Since each parent $k$ of $i$ is playing its conditional best-response $x_k$ to $i$'s choice of action $x_i$, we obtain, 
$T_{ki}(x_k, x_i) = 1$ for all $k \in Pa(i)$. Furthermore, Inequality (\ref{ineq:tree}) says that $i$ is playing its best-response $x_i$ to each of its parent $k$ playing $x_k$ and its child $j$ playing $x_j$.				
				 
				To prove the reverse direction, we
                                start with a witness vector $(x_k)_{k
                                  \in Pa(i)}$ such that Conditions 1
                                and 2 specified above hold. In
                                particular, we can write Condition 2 as
				\begin{align}
				\label{ineq:cond2}
				x_i \left(w_{ji} x_j + \sum_{k \in Pa(i)} w_{ki} x_k - b_i\right) \ge 0. 
				\end{align}
				The following line of arguments shows that Inequality (\ref{ineq:tree}) holds.
				\begin{align*}
				&\ x_i w_{ki} \sigma(x_i w_{ki})  \ge x_i w_{ki} x_k, \text{\ for any } {k \in Pa_2(i, x_i)}\\
				\Rightarrow &\ x_i \sum_{k \in Pa_2(i, x_i)} w_{ki} \sigma(x_i w_{ki})  \ge x_i  \sum_{k \in Pa_2(i, x_i)} w_{ki} x_k \\
				\Rightarrow &\ x_i \left(w_{ji} x_j + \sum_{k \in {Pa}_1(i, x_i)} {w_{ki} BR_k(i, x_i) } + \sum_{k \in Pa_2(i, x_i)} w_{ki} \sigma(x_i w_{ki}) -b_i \right)\\
				&\  \ge \ x_i \left(w_{ji} x_j + \sum_{k \in Pa(i)} w_{ki} x_k - b_i\right) \\
				\Rightarrow &\ x_i \left(w_{ji} x_j + \sum_{k \in {Pa}_1(i, x_i)} {w_{ki} BR_k(i, x_i) } + \sum_{k \in Pa_2(i, x_i)} w_{ki} \sigma(x_i w_{ki}) -b_i \right)\\
				&\ \ge 0, \text{\ using Inequality (\ref{ineq:cond2})}.
				\end{align*}				
	In addition to computing the table $T_{ij}$, node $i$ stores the following witness vector $(x_k)_{k \in Pa(i)}$ for each table entry $T_{ij}(x_i, x_j)$ that is $1$: if $k \in {Pa}_1(i, x_i)$, then $x_k = BR_k(i, x_i)$, otherwise, $x_k = \sigma(x_i w_{ki})$. The downstream phase ends at the root node $z$, and $z$ computes a unary table $T_z(x_z)$ such that $T_z(x_z) = 1$ if and only if there exists a witness vector $(x_k)_{k \in Pa(z)}$ such that $T_{kz}(x_k, x_z) = 1$ for all $k \in Pa(z)$ and $x_z$ is a best-response of $z$ to $(x_k)_{k \in Pa(z)}$.

The computation of the table at each node, which runs in $O(d)$ time,
dominates the time complexity of the downstream phase. We visit every
node exactly once. So, the total running time of the downstream phase is $O(n d)$. Note that if there does not exist any PSNE in the game then all the table entries computed by some node will be $0$.

	\item \textbf{Upstream phase.} In the upstream phase, each
          node sends instructions to its parents about which actions
          to play, along with the action that the node itself is
          playing. The upstream phase begins at the root node $z$. For
          any table entry $T_z(x_z) = 1$, $z$ decides to play $x_z$
          itself and instructs each of its parents to play the action
          in the witness vector associated with $T_z(x_z) = 1$. At an
          intermediate node $i$, suppose that its child $j$ is playing
          $x_j$ and instruct $i$ to play $x_i$. The node $i$ looks up the witness vector $(x_k)_{k \in Pa(i)}$ associated with $T_{ij}(x_i,x_j) = 1$ and instructs its parents to play according to that witness vector. 
	This process propagates upward, and when we reach all the leaf nodes, we obtain a PSNE for the game. Note that we can find a PSNE in this phase if and only if there exists one.

\end{enumerate}
		
	In the upstream phase, each node sends $O(d)$ instructions to its parents. Thus, the upstream phase takes $O(nd)$ time, and the whole algorithm takes $O(n d)$ time.\end{proof}

\subsection{Hardness Results}
Computational problems are often classified into complexity classes according to their hardness. Some of the hardest classes of problems are NP-complete problems, co-NP-complete problems, and \#P-complete problems \citep{dsjohnson}. In this section, we show that many of the computational problems related to LIGs belong to these hard classes.

First, computing PSNE in a general graphical game is known to be computationally hard \citep{GGS03}. However, that result does not imply intractability of our problem, nor do the proofs seem easily adaptable to our case. LIGs are a special type of graphical game with quadratic payoffs, or in other words a graphical, \emph{parametric} polymatrix game~\citep{Janovskaja_1968_MR_by_Isbell}, and thus have a more succinct representation than general graphical games ($O(nd)$ in contrast to $O(n 2^d)$, where $d$ is the maximum degree of a node). Next, we show that various interesting computational questions on LIGs are intractable, unless P = NP. 

We settle the central hardness question on LIGs (and also on 2-action polymatrix games) in 1(a) below. Related to the most-influential-nodes problem formulation, 1(b) states that given a subset of players, it is NP-complete to decide whether there exists a PSNE in which these players adopt the new behavior. We present a similar result in 1(c). 

A prime feature of our formulation of the most-influential-nodes
problem is the uniqueness of the desired stable outcome when the most influential nodes adopt their behavior according to the desired stable outcome. We show in (2) that deciding whether a given set of players fulfills this criterion is co-NP-complete. 

As we will see later, in order to compute a set of the most influential nodes, it suffices to be able to count the number of PSNE of an LIG (to be more specific, it suffices to count the number of PSNE extensions for a given partial assignment to the players' actions). We show in (3) that this problem is \#P-complete.
Note that the \#P-completeness result for LIGs even with star structure is in contrast to the polynomial-time counterpart for general graphical games with tree graphs, for which not only deciding the existence of a PSNE is in P,
but also counting PSNE on general graphical games with tree graphs is in P.  To better appreciate this result, consider the representation sizes of LIGs and tree-structured graphical games, which are linear and exponential in the maximum degree, respectively.

Below, we first summarize the hardness results with an outline of proof, followed by the complete proof of each individual statement.

\begin{theorem}
\begin{enumerate}
	\item It is NP-complete to decide the following questions in LIGs.
			\begin{enumerate}
				\item Does there exist a PSNE?
				\item Given a designated non-empty set of players, does there exist a PSNE consistent with those players playing $1$?
				\item Given a number $k \ge 1$, does there exist a PSNE with at least $k$ players playing $1$?
			\end{enumerate}
	\item Given an LIG and a designated non-empty set of players, it is co-NP-complete to decide if there exists a \textit{unique} PSNE with those players playing $1$.
	\item It is \#P-complete to count the number of PSNE, even for special classes of the underlying graph, such as a bipartite or a star graph.
\end{enumerate}
\end{theorem}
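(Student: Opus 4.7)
The plan is to handle membership and hardness for each claim in turn, sharing gadgets across the three parts. All membership arguments are immediate: a candidate joint-action is a polynomial-size certificate whose PSNE property is verified in linear time via each player's linear best-response inequality. This gives NP membership for (1)(a)-(c), \#P membership for (3), and co-NP membership for (2), whose complement is witnessed by a second PSNE, distinct from a given one, with the designated set playing $+1$.

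For NP-hardness of (1)(a) I would reduce from 3-SAT. Each variable $x_i$ becomes a ``variable player'' $v_i$ with $b_{v_i}=0$ and no incoming edges; such a player is perpetually indifferent and is therefore free to encode a truth value at any PSNE. Each clause $C_j=\ell_{j,1}\vee\ell_{j,2}\vee\ell_{j,3}$ is implemented by two ``clause players'' $p_j,q_j$ whose linear influence functions are
\[
f_{p_j}=\textstyle\sum_{k=1}^{3} \sigma_{j,k}\,x_{i(j,k)}+q_j+\tfrac{7}{2},\qquad
f_{q_j}=-\textstyle\sum_{k=1}^{3} \sigma_{j,k}\,x_{i(j,k)}-p_j-\tfrac{7}{2},
\]
where $\sigma_{j,k}\in\{-1,+1\}$ records literal polarity. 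A short case analysis over $\ell_j\equiv\sum_k \sigma_{j,k}x_{i(j,k)}\in\{-3,-1,+1,+3\}$ shows the pair has the unique mutual best-response $(p_j,q_j)=(+1,-1)$ when the clause is satisfied ($\ell_j\ge -1$) and degenerates into a genuine matching-pennies subgame with no PSNE when it is unsatisfied ($\ell_j=-3$). Hence the LIG has a PSNE iff $\phi$ is satisfiable. Parts (1)(b) and (1)(c) then follow by attaching, respectively, one or $k$ zero-threshold edge-free dummy players, each of which is indifferent and so can adopt $+1$ in every PSNE without disturbing the original construction.

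For co-NP-hardness of (2) I would reduce from the complement of (1)(b). Given an instance $(G,S)$, form $G'$ by adjoining a single fresh dummy player $d$ with $b_d=0$ and no edges at all, keeping $S'=S$. Since $d$ is always indifferent, every PSNE of $G$ with $S$ playing $+1$ lifts to exactly two PSNE of $G'$ with $S'$ playing $+1$ (one for each choice of $d$), and $G'$ has no other PSNE with $S'$ playing $+1$. Consequently, $G'$ admits at most one PSNE with $S'$ playing $+1$ iff $G$ admits no PSNE with $S$ playing $+1$, which is the co-NP-complete complement of (1)(b). This reading hinges on interpreting ``a unique PSNE'' as ``at most one PSNE,'' under which the complement (the existence of two distinct PSNE with $S'$ playing $+1$) is verifiable in NP; pinning down this interpretation so that the complement stays in NP is, in my view, the main subtle point of this part.

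For \#P-hardness of (3) on star (and hence bipartite) graphs I would reduce from \#Subset-Sum. Given positive integers $a_1,\ldots,a_n$ and a target sum $T$, build a star-shaped LIG with a center $c$ and leaves $l_1,\ldots,l_n$, setting $w_{l_i,c}=a_i$, $w_{c,l_i}=0$, $b_{l_i}=0$, and $b_c=2T-\sum_{i=1}^{n} a_i$. The leaves have zero incoming influence and are perpetually indifferent, so every $\pm 1$-pattern on them is individually compatible with PSNE; the center is in best response iff $\sum_i a_i x_{l_i}$ lies on the correct side of $b_c$. Translating $\pm 1$-patterns to subsets via $S\equiv\{i:x_{l_i}=+1\}$, the total PSNE count becomes
\[
|\NE(\G)|\;=\;2^n \;+\; \#\{S\subseteq [n] : \textstyle\sum_{i\in S} a_i = T\},
\]
since every subset satisfies either the $c=+1$ or the $c=-1$ side of the center's inequality, with subsets meeting equality counted by both. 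Thus the number of exact-target subsets, a \#P-complete quantity, is recoverable from $|\NE(\G)|$ in polynomial time, establishing \#P-hardness; the bipartite claim is immediate since every star is bipartite, and \#P membership is by polynomial-time verification of individual PSNE candidates.
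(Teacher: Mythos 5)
Your reductions for 1(a)--(c) and 3 are correct but genuinely different from the paper's. For 1(a) the paper reduces from 3-SAT via a bipartite clause--variable construction in which variable players become exactly indifferent only when every clause player plays $1$, whereas you obtain non-existence of PSNE from a local matching-pennies pair $(p_j,q_j)$ attached to each clause; your case analysis over $\ell_j\in\{-3,-1,1,3\}$ checks out (the pair has the unique equilibrium $(+1,-1)$ when $\ell_j\ge -1$ and no equilibrium when $\ell_j=-3$), and the gadget is arguably cleaner, though it sacrifices the bipartiteness the paper gets for free (irrelevant here, since the theorem does not claim bipartiteness for 1(a)). For 1(b) and 1(c) the paper gives fresh reductions from monotone one-in-three SAT with designated clause players and $m(m-1)$ extra players; your observation that both follow from 1(a) by padding with disconnected, perpetually indifferent dummies is simpler and valid. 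For part 3 the paper uses a parsimonious version of its 3-SAT reduction for the bipartite case and a reduction from \#KNAPSACK for the star case; your star-graph reduction from \#Subset-Sum, yielding the count $2^n+\#\{S\subseteq[n]:\sum_{i\in S}a_i=T\}$, is equally valid and covers the bipartite claim since stars are bipartite.

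Part 2 is where you have a genuine gap, which you only half-acknowledge. Your reduction maps a 1(b)-instance $(G,S)$ to $(G',S)$ so that the number of consistent PSNE exactly doubles; hence $G'$ has \emph{at most one} consistent PSNE iff $G$ has \emph{none}. Under the natural reading of the statement --- ``there exists exactly one PSNE with $S$ playing $1$'' --- every instance in the image of your reduction is a no-instance (the count is either $0$ or at least $2$, never $1$), so the reduction proves nothing. The paper avoids this by engineering a baseline equilibrium $E_0$ (via a ``none-satisfied-verification'' player feeding the designated extra players) that is always present and always has the designated set playing $1$; satisfiability of the source instance then contributes a second such equilibrium $E_1$, so the count is exactly $1$ on no-instances and at least $2$ on yes-instances. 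To repair your argument you would need a gadget that shifts the number of consistent PSNE from $m$ to $m+1$ rather than to $2m$; a disconnected indifferent dummy cannot do this. (To be fair, the paper's own co-NP membership certificate --- two distinct consistent PSNE --- also implicitly reads ``unique'' as ``at most one,'' but its hardness reduction is sound under either reading, while yours is not.)
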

\noindent \emph{Proof Sketch.} The complete proofs appear immediately following
this proof sketch. The proof of 1(a) reduces the 3-SAT problem to an LIG that consists of a player for each clause and each variable of the 3-SAT instance. The influence factors among these players are designed such that the LIG instance possesses a PSNE if and only if the 3-SAT instance has a satisfying assignment. Since the underlying graph of the LIG instance is always bipartite, we obtain as a corollary that the NP-completeness of that existence problem holds even for LIGs on bipartite graphs. 

The proofs of 1(b), 1(c), and 2 use reductions from the \textit{monotone one-in-three SAT} problem. For 1(b), given a monotone one-in-three SAT instance $I$, we construct an LIG instance $J$ having a player for each clause and each variable of $I$. Again, we design the influence factors in such a way that $I$ is satisfiable if and only if $J$ has a PSNE. The reduction for 1(c) builds upon that of 1(b) with specifically designed \textit{extra players} and additional connectivity in the LIG instance. Again, the gadgets used in the proof of 1(c) are extended for the proof of 2.

The proof of 3 uses reductions from the 3-SAT and the \#KNAPSACK problem. The reduction from the 3-SAT problem is the same as that used in 1(a), and proof of the \#P-hardness of the bipartite case is by showing that the number of solutions to the 3-SAT instance is the same as the number of PSNE of the LIG instance. On the other hand, to prove the claim of \#P-completeness of counting PSNEs of LIGs having star 
graphs, we give a reduction from the \#KNAPSACK problem.
Given a \#KNAPSACK instance, we create an LIG instance with a star structure among the players and with specifically designed influence factors such that the number of PSNE of the LIG instance is the same as the number of solutions to the \#KNAPSACK instance. 
\qed

\subsection*{Complete Proofs of Hardness Results}
To enhance the clarity of the proofs we reduced existing NP-complete problems to LIGs with binary actions $\{0,1\}$, instead of $\{-1,1\}$. We next show, via a linear transformation, that one can reduce any LIG with actions $\{0,1\}$ to an LIG with the same underlying graph, but with actions $\{-1,1\}$.\\

\noindent \textbf{Reduction from $\{0,1\}$-action LIG to $\{-1,1\}$-action LIG.} Consider any $\{0,1\}$-action LIG instance $I$, where $w$ and $b$ denote the influence factors and the thresholds, respectively (see Definition~\ref{def:lig}). 
We next construct a $\{-1,1\}$-action LIG instance $J$ with the same players that are in $I$ and with influence factors $w_{ji}^\prime \equiv \frac{w_{ji}}{2}$ (for any $i$ and any $j \neq i$), thresholds $b_i^\prime \equiv b_i - \sum_{j \neq i}\frac{w_{ji}}{2}$ (for any $i$). We show that $\x$ is a PSNE of $I$ if and only if $\x^\prime$ is a PSNE of $J$, where $x_i^\prime = 2 x_i -1$ for any $i$.

By definition, $\x$ is a PSNE of $I$ if and only if for any player $i$, 

\begin{eqnarray*}
\begin{split}
& x_i \left(\sum_{j \neq i} x_j w_{ji} - b_i\right) \ge (1 - x_i) \left(\sum_{j \neq i} x_j w_{ji} - b_i\right) \\
& \Leftrightarrow(2x_i - 1) \left(\sum_{j \neq i} x_j w_{ji} - b_i\right) \ge 0 \\
& \Leftrightarrow x_i^\prime \left(\sum_{j \neq i} \frac{x_j^\prime + 1}{2} w_{ji} - b_i\right) \ge 0 \\
& \Leftrightarrow x_i^\prime \left(\sum_{j \neq i} {x_j^\prime}\frac{w_{ji}}{2} - \left(b_i - \sum_{j \neq i}\frac{w_{ji}}{2}\right)\right) \ge 0 \\
& \Leftrightarrow x_i^\prime \left(\sum_{j \neq i} x_j^\prime w_{ji}^\prime - b_i^\prime\right) \ge 0,
\end{split}
\end{eqnarray*}
which is the equivalent statement of $\x^\prime$ being a PSNE of $J$. \qed

\begin{theorem}
\label{thm:NPComplete}
It is NP-complete to decide if there exists a PSNE in an LIG.
\end{theorem}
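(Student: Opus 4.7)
My plan is to establish the theorem in two parts: membership in NP and NP-hardness.

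Membership in NP is immediate: given a joint action $\x^* \in \{-1,1\}^n$ as a certificate, we can check in $O(n^2)$ time whether $x_i^* \in \BR_i^{\G}(\x^*_{-i})$ for every player $i$ by evaluating the linear influence functions $f_i(\x^*_{-i})$ and comparing signs with $x_i^*$.

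For NP-hardness I will reduce from 3-SAT. Given a 3-SAT formula $\phi$ over variables $x_1,\ldots,x_N$ with clauses $C_1,\ldots,C_M$, I will construct an LIG $\G_\phi$ whose players come in two kinds: one \emph{variable player} $v_i$ per variable $x_i$, intended to encode the truth value of $x_i$ (with $v_i = +1$ meaning true), and a two-player \emph{clause gadget} $(p_j,q_j)$ per clause $C_j$. Each $v_i$ will receive no incoming influence and have threshold $0$, so that $v_i$ is indifferent between $-1$ and $+1$ and both values are best responses; this lets any truth assignment be realized by some choice of the $v_i$'s in a candidate PSNE. The pair $(p_j,q_j)$ will be wired so that, \emph{in isolation}, $p_j$ prefers to agree with $q_j$ while $q_j$ prefers to disagree with $p_j$ (a matching-pennies subgame with no PSNE). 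I will then add weighted edges from the literals' variable players into $p_j$ (positive weight for a positive literal, negative weight for a negated literal), with magnitudes and with $b_{p_j}$ calibrated so that the literal contribution is \emph{dominated} by the $q_j$ term when no literal of $C_j$ is satisfied, but \emph{dominates} the $q_j$ term (and fixes $p_j$'s best response) as soon as at least one literal is satisfied.

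Under such a construction, a global PSNE of $\G_\phi$ exists iff $\phi$ is satisfiable: given a satisfying assignment, fix each $v_i$ accordingly, and each gadget collapses into a unique configuration in which $p_j$ plays its literal-driven best response and $q_j$ best-responds consistently; conversely, if some clause $C_j$ is unsatisfied under the values of the $v_i$'s in a candidate PSNE, then the gadget $(p_j,q_j)$ reduces to pure matching pennies and admits no PSNE, so no global PSNE exists. The main technical obstacle is choosing the numerical constants (the literal weights, the weights between $p_j$ and $q_j$, and the threshold $b_{p_j}$) so that the gadget's truth table behaves exactly as specified in both directions, with no spurious equilibria arising from the interaction between the literal contributions and the gadget edges. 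Finally, the bipartite corollary follows by placing all $p_j$'s on one side of the bipartition and the $v_i$'s together with the $q_j$'s on the other: every edge introduced is either $v_i \to p_j$ or between $p_j$ and $q_j$, and both kinds cross the partition.
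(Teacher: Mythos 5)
Your proposal is correct, but it takes a genuinely different route from the paper. Both reduce from 3-SAT and both argue NP membership identically, but the paper uses exactly one player per clause and one per variable, with influence flowing in \emph{both} directions between a clause and its variables; thresholds and weights (built from the literal signs $l_{k,i}$) are tuned so that at any PSNE every clause player is forced to play $1$, and the reverse direction is a somewhat delicate global contradiction argument showing that an unsatisfied clause would violate some variable player's best-response condition. You instead make the variable players free (no incoming arcs, threshold $0$, hence always indifferent) and localize all of the non-existence machinery into a two-player matching-pennies gadget $(p_j,q_j)$ per clause, with literals feeding only into $p_j$. Your calibration obstacle is easily dispatched: with literal weights $\pm 2$, $w_{q_j p_j}=1$, $w_{p_j q_j}=-1$, $b_{q_j}=0$ and $b_{p_j}=-6$, an unsatisfied clause gives $f_{p_j}=x_{q_j}$ (pure matching pennies, no PSNE), while any satisfied literal forces $f_{p_j}\ge 3>0$ and hence the unique gadget equilibrium $(+1,-1)$. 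Your approach buys a nearly immediate reverse direction and, like the paper's, yields a parsimonious reduction (PSNE correspond bijectively to satisfying assignments) and a bipartite underlying graph (put the $p_j$'s on one side, the $v_i$'s and $q_j$'s on the other), so it would also support the paper's bipartiteness and \#P-completeness corollaries; the paper's construction buys a smaller instance (no auxiliary $q_j$ players) and a bipartition that is exactly clauses versus variables. The only thing missing from your write-up is the explicit choice of constants and the short case check above, which is routine.
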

\begin{proof} 
Since we can verify whether a joint action is a PSNE or not in polynomial time, the problem is in NP. We use a reduction from the 3-SAT problem to show that the problem is NP-hard.

Let $I$ be an instance of the 3-SAT problem. Suppose that $I$ has $m$ clauses and $n$ variables. For any variable $i$, we define $C_i$ to be the set of clauses in which $i$ appears, and for any clause $k$, we define $V_k$ to be the set of variables appearing in clause $k$. For any clause $k$ and any variable $i \in V_k$, let $l_{k,i}$ be $1$ if $i$ appears in $k$ in non-negated form and $0$ otherwise.
We now build an LIG instance $J$ from $I$. 
In this game, every clause as well as every variable is a player. Each clause $k$ has arcs to variables in $V_k$, and each variable $i$ has arcs to clauses in $C_i$. The structure of the graph is illustrated in Figure~\ref{fig:3SAT}. We next define the thresholds of the players and the influence factors on the arcs.
For any clause $k$, let its threshold be $1 - \epsilon - \sum_{i \in V_k} (1 - l_{k,i})$. Here, $\epsilon$ is a constant, and $0 < \epsilon < 1$.
For any variable $i$ let its threshold be $\sum_{k \in C_i} (1 - 2 l_{k,i})$. The weight on the arc from any clause $k$ to any variable $i \in V_k$ is defined to be $1 - 2 l_{k,i}$, and that from any variable $i$ to any clause $k \in C_i$ is $2 l_{k,i} - 1$. We denote the action of any clause $k$ by $z_k \in \{0, 1\}$ and that of any variable $i$ by $x_i \in \{0, 1\}$. 

\begin{figure}[h]
\centering
\includegraphics{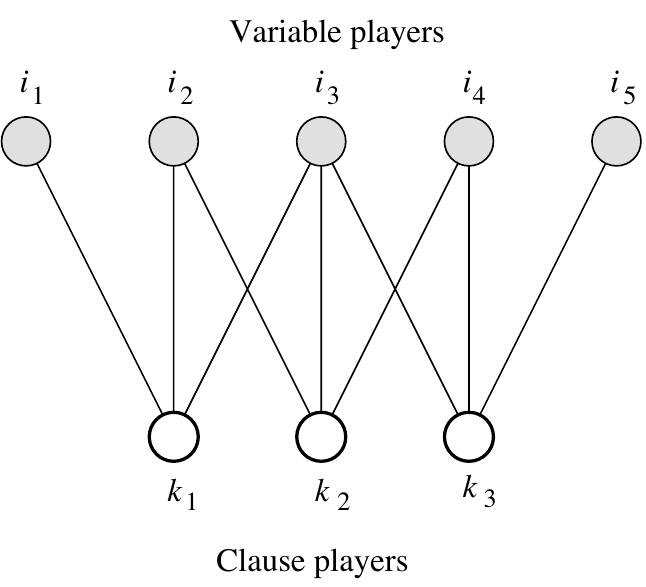}
\caption{Illustration of the structure of an LIG instance from a 3-SAT instance (each undirected edge represents two arcs of opposite directions between the same two nodes). In this example, the 3-SAT instance is $(i_1 \vee i_2 \vee i_3) \wedge (\neg i_2 \vee i_3 \vee i_4) \wedge (\neg i_3 \vee i_4 \vee \neg i_5)$. 
}
\label{fig:3SAT}
\end{figure}

First, we prove that if there exists a satisfying truth assignment in $I$ then there exists a PSNE in $J$. Consider any satisfying truth assignment $S$ in $I$. Let the players in $J$ choose their actions according to their truth values in $S$, that is, $1$ for \textit{true} and $0$ for \textit{false}. Clearly, every clause player is playing $1$. Next, we show that every player in $J$ is playing its best response under this choice of actions.

	We now show that no clause has incentive to play $0$, given that the other players do not change their actions.
	In the solution $S$ to $I$, every clause has a literal that is \textit{true}.
	Therefore, in $J$ every clause $k$ has some variable $i \in V_k$ such that $x_i = l_{k,i}$. We have to show that the total influence on $k$ is at least the threshold of $k$:
\begin{align*}
&	\sum_{i \in V_k} x_i \left(2 l_{k,i} - 1\right) \ge 1 - \epsilon - \sum_{i \in V_k}\left(1 - l_{k,i}\right)\\
	\Leftrightarrow &	\sum_{i \in V_k} \left(x_i \left(2 l_{k,i} - 1\right) + \left(1 - l_{k,i}\right)\right) \ge 1 - \epsilon\\
	\Leftrightarrow &	\sum_{i \in V_k} \left(x_i l_{k,i} + \left(1- x_i\right)\left(1 - l_{k,i}\right)\right) \ge 1 - \epsilon. 
\end{align*}	
	Since for some $i \in V_k$, $x_i = l_{k,i}$, the above inequality holds strictly, that is,
	\begin{align*}
	\sum_{i \in V_k} \left(x_i l_{k,i} + \left(1- x_i\right)\left(1 - l_{k,i}\right)\right) > 1 - \epsilon.
	\end{align*}
	Therefore, every clause $k$ must play $1$.
	
	We need to show that no variable player has incentive to deviate, given that the other players do not change their actions.
	The total influence on any variable player $i$ is $\sum_{k \in C_i} z_k (1 - 2 l_{k,i}) = \sum_{k \in C_i} (1 - 2 l_{k,i})$ (since $z_r = 1$ for every clause $r$). The threshold of $i$ is $\sum_{k \in C_i} (1 - 2 l_{k,i})$. Thus, every variable player $i$ is indifferent between choosing actions 1 and 0 and has no incentive to deviate.

We now consider the reverse direction, that is, given a PSNE in $J$ we show that there exists a satisfying assignment in $I$. We first show that at any PSNE, every clause must play 1. If this is not the case, suppose, for a contradiction, that for some clause $r$, $z_r = 0$. Since $r$'s best response is $0$ (this is a PSNE), we obtain
\begin{align*} 
&	\sum_{i \in V_r} x_i (2 l_{r,i} - 1) \le 1 - \epsilon - \sum_{i \in V_r} (1 - l_{r,i})\\
	\Leftrightarrow &	\sum_{i \in V_r} ( x_i l_{r,i} + (1 - x_i)(1 - l_{r,i})) \le 1 - \epsilon.
\end{align*}

Therefore, for every variable player $j \in V_r$, $x_j \neq l_{r,j}$. Furthermore, for any  $j \in V_r$, $j$ does not have any incentive to deviate. Using these properties of a PSNE we will arrive at a contradiction, and thereby prove that $z_r$ must be $1$.

Consider any variable player $j \in V_r$, and let the difference between $j$'s total incoming influence and its threshold be $U_j$. We get
\begin{align*}
&	U_j = \sum_{k \in C_j} z_k(1 - 2 l_{k,j}) - \sum_{k \in C_j} (1 - 2 l_{k,j}) = \sum_{k \in C_j} ((1-z_k)(2 l_{k,j} - 1))\\
	\Leftrightarrow\ & 	U_j = \sum_{k \in C_j} ((1-z_k)(2 l_{k,j} - 1) \mathbf{1}[l_{k,j} = 1]) + \sum_{k \in C_j} ((1-z_k)(2 l_{k,j} - 1) \mathbf{1}[l_{k,j} = 0])\\
	\Leftrightarrow\ &	 U_j = \sum_{k \in C_j} ((1-z_k) \mathbf{1}[l_{k,j} = 1]) - \sum_{k \in C_j} ((1-z_k) \mathbf{1}[l_{k,j} = 0]).
\end{align*}

At any PSNE, if $x_j = 1$ then $U_j \ge 0$; otherwise, $U_j \le 0$. Thus, the best response condition for variable $j$ gives us
\begin{align*}
&	\sum_{k \in C_j} ((1-z_k) \mathbf{1}[l_{k,j} = x_j]) \ge \sum_{k \in C_j} ((1-z_k) \mathbf{1}[l_{k,j} \neq x_j])\\
\Leftrightarrow &	\sum_{k \in C_j-\{r\}} ((1-z_k) \mathbf{1}[l_{k,j} = x_j]) + (1-z_r) \mathbf{1}[l_{r,j} = x_j] \ge \\
		&	\sum_{k \in C_j-\{r\}} ((1-z_k) \mathbf{1}[l_{k,j} \neq x_j]) + (1-z_r) \mathbf{1}[l_{r,j} \neq x_j]\\
\Leftrightarrow &	\sum_{k \in C_j-\{r\}} ((1-z_k) \mathbf{1}[l_{k,j} = x_j]) \ge \\
		&\sum_{k \in C_j-\{r\}}	((1-z_k)\mathbf{1}[l_{k,j} \neq x_j]) + 1,\ \text{since $l_{r,j} \neq x_j$}.
%
%
\end{align*}

The above inequality cannot be true, because the left hand side is always $0$ (if $l_{k,j} = x_j$ then $z_k$ must be 1 at any PSNE), and the right hand side is $\ge 1$. Thus, we obtained a contradiction, and $z_r$ cannot be 0.

So far, we showed that at any PSNE $z_k = 1$ for any clause player $k$. To complete the proof, we now show that for every clause player $k$, there exists a variable player $i \in V_k$ such that $x_i = l_{k,i}$. If we can show this then we can translate the semantics of the actions in $J$ to the truth values in $I$ and thereby obtain a satisfying truth assignment for $I$.

Suppose, for the sake of a contradiction, that for some clause $k$ and for all variable $i \in V_k$, $x_i \neq l_{k,i}$. Since $z_k = 1$, we find that
\begin{align*}
&	\sum_{i \in V_k} x_i (2 l_{k,i} - 1) \ge 1 - \epsilon - \sum_{i \in V_k} (1 - l_{k,i})\\
\Leftrightarrow\ &	\sum_{i \in V_k} (x_i l_{k,i} + (1 - x_i)(1 - l_{k,i})) \ge 1 - \epsilon\\
\Leftrightarrow\ &	0 \ge 1 - \epsilon \text{, which gives us the desired contradiction.}
\end{align*}
\end{proof}

The proof of Theorem~\ref{thm:NPComplete} reduces the 3-SAT problem to an LIG where the underlying graph is bipartite. Thus, we obtain the following corollary.
\begin{corollary}
\label{cor:bipartite}
It is NP-complete to decide if there exists a PSNE in an LIG on a bipartite graph.
\end{corollary}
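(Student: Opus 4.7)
The plan is to observe that the reduction already given in the proof of Theorem~\ref{thm:NPComplete} produces an LIG whose underlying influence graph is bipartite, so no new reduction is needed; we merely need to verify the bipartiteness and reuse the correctness argument verbatim.

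First, membership in NP is immediate and inherited from the general case: given a purported PSNE $\x \in \{0,1\}^n$ of the constructed LIG, we can check in time linear in the representation size whether each player's action maximizes its payoff given the others, and this verification does not depend on any structural assumption about the graph.

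For NP-hardness, I would revisit the construction from the proof of Theorem~\ref{thm:NPComplete}. There, given a 3-SAT instance $I$ with variables $i = 1,\ldots,n$ and clauses $k = 1,\ldots,m$, the LIG instance $J$ has exactly two kinds of players: one player per variable and one player per clause. The nonzero influence factors are precisely $w_{k,i} = 1 - 2 l_{k,i}$ for $i \in V_k$ (arc from clause $k$ to variable $i$) and $w_{i,k} = 2 l_{k,i} - 1$ for $k \in C_i$ (arc from variable $i$ to clause $k$). In particular, there is no arc between two clause players and no arc between two variable players. Therefore the induced graph is bipartite, with bipartition given by the clause players on one side and the variable players on the other, as illustrated in Figure~\ref{fig:3SAT}.

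Hence the same reduction that proves Theorem~\ref{thm:NPComplete} in fact establishes NP-hardness for the restricted class of LIGs on bipartite graphs: $I$ has a satisfying assignment if and only if $J$ has a PSNE, and $J$ is a bipartite LIG. Combined with NP membership, this yields NP-completeness of PSNE existence in LIGs on bipartite graphs. The only subtlety to mention is that the ``two kinds of players, edges only across kinds'' observation is really all that one must check; no additional gadgetry or modification of the weights or thresholds is required, so there is no genuine obstacle beyond noticing the structure of the reduction already carried out.
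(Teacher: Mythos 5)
Your proposal is correct and takes exactly the same route as the paper: the paper's proof of this corollary is simply the one-line observation that the reduction in Theorem~\ref{thm:NPComplete} already yields a bipartite graph (clause players on one side, variable players on the other, with arcs only between the two sides), which you verify explicitly. Nothing is missing.
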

The proof of Theorem~\ref{thm:NPComplete} directly leads us to the following result that the counting version of the problem is \#P-complete.
\begin{corollary}
\label{cor:sharpP}
It is \#P-complete to count the number of PSNE of an LIG.
\end{corollary}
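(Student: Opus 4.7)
The plan is to argue that the reduction from 3-SAT used in the proof of Theorem~\ref{thm:NPComplete} is in fact \emph{parsimonious}: the number of PSNE of the constructed LIG equals the number of satisfying assignments of the input 3-SAT formula. Since \#3-SAT is known to be \#P-complete, this immediately yields \#P-hardness of counting PSNE of LIGs. Membership in \#P is straightforward, since given any candidate joint action $\x \in \{-1,1\}^n$ we can verify in polynomial time that $x_i \in \BR_i^{\G}(\x_{-i})$ for every player $i$.

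For the hardness direction, I would re-examine the correspondence implicit in the proof of Theorem~\ref{thm:NPComplete}. In the forward direction, each satisfying assignment of a 3-SAT instance $I$ induces a PSNE of the LIG $J$ in which every clause-player plays $1$ and every variable-player plays its truth value; two distinct satisfying assignments produce two distinct PSNE because the variable-players' actions differ. In the reverse direction, the same proof already establishes (by contradiction, using the variable-players' best-response conditions) that every clause-player must play $1$ at every PSNE of $J$, and moreover that the variable-players' actions must satisfy every clause. Since the clause-players' actions are pinned to $1$ at every PSNE, each PSNE is uniquely identified by the variable-players' actions, so distinct PSNE yield distinct satisfying assignments. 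This gives the required bijection.

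The main subtlety I expect to address is the observation, appearing in the proof of Theorem~\ref{thm:NPComplete}, that every variable-player is \emph{indifferent} when all clauses play $1$ (their total incoming influence equals their threshold). One might worry this creates spurious PSNE beyond those in bijection with satisfying assignments. The resolution is that the clause-players' best-response requirement rules this out: any variable assignment failing to satisfy some clause $r$ would force $z_r = 0$, contradicting the fact that every clause plays $1$ at a PSNE. Thus the indifference of variable-players is harmless for counting.

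Finally, I would combine this parsimonious reduction with the linear $\{0,1\}$-to-$\{-1,1\}$ transformation described immediately before Theorem~\ref{thm:NPComplete}, which is manifestly a bijection on PSNE sets, to conclude that counting PSNE of LIGs (with the native $\{-1,1\}$ action set) is \#P-hard. Together with membership in \#P, this yields \#P-completeness, establishing Corollary~\ref{cor:sharpP}.
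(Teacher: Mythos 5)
Your proposal is correct and follows essentially the same route as the paper: the paper's own proof of this corollary also observes that the Theorem~\ref{thm:NPComplete} reduction is parsimonious, pairing each satisfying assignment with the unique PSNE in which all clause players play $1$, and using the already-established fact that every PSNE has this form to get the inverse map. Your extra remarks on the variable players' indifference and on the $\{0,1\}$-to-$\{-1,1\}$ transformation being a bijection on PSNE sets are consistent with what the paper establishes.
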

\begin{proof} 
The proof follows from the proof of Theorem~\ref{thm:NPComplete}. Membership of this counting problem in \#P is easy to see. Using the same reduction as in the proof of Theorem~\ref{thm:NPComplete}, we find that each satisfying truth assignment (among the $2^n$ possibilities) to the variables of the 3-SAT instance $I$ can be mapped to a distinct PSNE of the LIG instance $J$. 
Furthermore, we saw that at each PSNE in $J$, every clause player must play 1. Thus, for each of the $2^n$ joint strategies of the variable players (while having the clause players play 1), if the joint strategy is a PSNE then we can map it to a distinct satisfying assignment in $I$. Moreover, each of these two mappings are the inverse of the other.
Therefore, the number of satisfying assignments of $I$ is the same as the number of PSNE in $J$.
Since counting the number of satisfying assignments of a 3-SAT instance is \#P-complete, counting the number of PSNE of an LIG, even on a bipartite graph, is also \#P-complete. \end{proof}

While Corollary~\ref{cor:sharpP} shows the hardness of counting the number of PSNE of an LIG on a general graph, we can show the same hardness result even on special classes of graphs, such as star graphs:
\begin{theorem}
\label{thm:starSharpP}
Counting the number of PSNE of an LIG on a star graph is \#P-complete.
\end{theorem}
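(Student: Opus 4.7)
The plan is to establish membership in \#P---which is immediate, since verifying that a joint-action $\x$ is a PSNE requires only checking each player's best-response condition in polynomial time---and then to establish \#P-hardness via a parsimonious reduction from \#KNAPSACK. The goal is to produce a star-shaped LIG whose PSNE are in bijection with the knapsack-feasible subsets of a given \#KNAPSACK instance.

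Given a \#KNAPSACK instance with positive integer weights $a_1,\ldots,a_n$ and capacity $b>0$, I would construct an LIG with center player $0$ and leaf players $1,\ldots,n$, setting the leaf-to-center weights $w_{i0}\equiv a_i$, the center-to-leaf weights $w_{0i}\equiv -1$, leaf thresholds $b_i\equiv 1$, and center threshold $b_0\equiv 2b-\sum_{i=1}^n a_i$. The design idea is that each leaf's influence function simplifies to $f_i(x_0)=-x_0-1$, which takes value $0$ when $x_0=-1$ (making the leaf indifferent) and value $-2$ when $x_0=+1$ (strictly forcing $x_i=-1$). Meanwhile, the center's influence function is $f_0(\x_{-0})=\sum_i a_i x_i - b_0$, so its best response is $+1$ iff $\sum_i a_i x_i \geq b_0$.

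The remaining step is a case analysis on $x_0$. If $x_0=+1$, every leaf is uniquely forced to $-1$; plugging this into the center's influence function gives $f_0=-\sum_i a_i - b_0 = -2b$, which is negative for $b>0$, so the center's best response is $-1$, contradicting $x_0=+1$. Hence no PSNE has $x_0=+1$. If $x_0=-1$, every leaf is indifferent, so all $2^n$ leaf profiles are admissible, and the PSNE are exactly those profiles for which $-1$ is the center's best response, i.e., $\sum_i a_i x_i \leq b_0$. Substituting $y_i\equiv (x_i+1)/2\in\{0,1\}$, this inequality reduces cleanly to $\sum_{i:y_i=1} a_i \leq b$. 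Thus the PSNE count equals the number of knapsack-feasible subsets, finishing the reduction.

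The main obstacle in a construction of this style is the conflicting role the leaves must play: they must collectively support many candidate PSNE under one value of the center (to match the exponential space of \#KNAPSACK subsets), yet be uniquely determined under the other value so as to eliminate a ``mirror'' PSNE that would inflate the count beyond the knapsack number. Placing the leaf's influence function exactly at the indifference boundary when $x_0=-1$ (by taking $b_i=-w_{0i}$, i.e., $(w_{0i},b_i)=(-1,1)$) resolves this tension with a single pair of parameters, and the affine offset $b_0=2b-\sum_i a_i$ then translates the center's best-response inequality into the knapsack inequality in $\{0,1\}$-coordinates via the standard change of variable.
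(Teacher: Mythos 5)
Your proof is correct and is essentially the same reduction as the paper's: a star-shaped LIG built from a \#KNAPSACK instance in which one value of the center is ruled out at every PSNE, the other value leaves all leaves indifferent, and the center's best-response inequality encodes the capacity constraint so that PSNE correspond bijectively to feasible knapsack subsets. The only difference is cosmetic---the paper carries out the construction in $\{0,1\}$-action coordinates (forcing the center to play $1$) and invokes its earlier $\{0,1\}\to\{-1,1\}$ transformation, whereas you work directly in $\{-1,1\}$ coordinates with the center forced to $-1$ and absorb the change of variables into the threshold $b_0=2b-\sum_i a_i$.
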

\begin{proof} 
Since we can verify whether a joint strategy is a PSNE in polynomial time, the problem is in \#P. We will show \#P-hardness using a reduction from \#KNAPSACK, which is the problem of counting the number of feasible solutions in a 0-1 Knapsack problem: Given $n$ items, the weight $a_i \in {\cal Z}^+$ of each item $i$, and the maximum capacity of the sack $W \in {\cal Z}^+$, \#KNAPSACK asks how many ways we can pick the items to satisfy $\sum_{i = 1}^n a_i x_i \le W$, where $x_i = 1$ if the $i$-th item has been picked, and $x_i = 0$ otherwise. Given an instance $I$ of the \#KNAPSACK problem with $n$ items, we construct an LIG instance $J$ on a star graph with $n+1$ nodes. Let us label the nodes $v_0, ..., v_{n}$, where $v_0$ is connected to all other nodes. We define the influence factors among the nodes as follows: the influence of $v_0$ to any other node $v_i$, $w_{v_0 v_i} = 1$, and the influence in the reverse direction,  $w_{v_i v_0} = -a_i$. The threshold of $v_0$ is defined as $b_{v_0} = -W$, and the threshold of every other node $v_i$, $b_{v_i} = 1$. We denote the action of any node $v_i$ by $x_i \in \{0, 1\}$. Note that at any PSNE of $J$, $v_0$ must play $1$. Otherwise, if $v_0$ plays $0$ then all other nodes must also play $0$, and this implies that $v_0$ must play $1$, giving us a contradiction.

We prove that the number of feasible solutions in $I$ is the same as the number of PSNE in $J$. For any $(x_1, ..., x_n) \in \{0, 1\}^n$ in $I$, we map each $x_i$ to the action selected by $v_i$ in $J$, for $1 \le i \le n$. As proved earlier, the action of $v_0$ must be $1$ at any PSNE. Furthermore, when $v_0$ plays $1$, all other nodes become indifferent between playing $0$ and $1$. Thus, the number of PSNE in $J$ is the number of ways of satisfying the inequality $\sum_{i = 1}^n w_{v_i v_0} x_i \ge b_{v_0}$, which is equivalent to $\sum_{i = 1}^n a_i x_i \le W$. Thus the number of PSNE in $J$ is equal to the number of feasible solutions in $I$. \end{proof}

The following three theorems show the hardness of several other variants of the problem of computing a PSNE of an LIG. 

\begin{theorem}
\label{thm:Extension1Hard}
Given an LIG, along with a designated subset of $k$ players in it, it is NP-complete to decide if there exists a PSNE consistent with those $k$ players playing the action 1.
\end{theorem}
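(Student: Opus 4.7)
The plan is to reduce from \emph{monotone one-in-three SAT}, the NP-complete problem of deciding, given a 3-CNF formula $\varphi$ with only positive literals, whether there is a truth assignment making exactly one literal in each clause true. Membership in NP is immediate: a joint action is a polynomial-size certificate whose PSNE property and consistency with the designated players playing $1$ can both be checked in polynomial time. It will be convenient to build an LIG with actions in $\{0,1\}$; the linear transformation presented just before Theorem~\ref{thm:NPComplete} then converts it to an equivalent $\{-1,1\}$-action LIG with the same PSNE structure and the same designated players.

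Given $\varphi$ with $n$ variables and $m$ clauses, I would construct an LIG $J$ as follows. For each variable $i$ of $\varphi$, introduce a \emph{variable player} $v_i$ with threshold $0$ and no incoming edges, so that $v_i$'s best-response correspondence is the whole set $\{0,1\}$ and hence any assignment to the $v_i$'s is compatible with $v_i$'s equilibrium condition. For each clause $k$ with variable set $V_k$ (so $|V_k|=3$), introduce \emph{two} clause-gadget players $C_k^+$ and $C_k^-$. Set the incoming weights $w_{v_i,C_k^+} = +1$ for $i \in V_k$ and the threshold $b_{C_k^+} = \tfrac{1}{2}$, so that $1$ is a best response of $C_k^+$ iff $\sum_{i\in V_k} x_{v_i} \geq 1$. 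Set $w_{v_i,C_k^-} = -1$ for $i \in V_k$ and $b_{C_k^-} = -\tfrac{3}{2}$, so that $1$ is a best response of $C_k^-$ iff $\sum_{i\in V_k} x_{v_i} \leq 1$. The designated subset is the set $S = \{C_k^+, C_k^- : 1\leq k\leq m\}$ of $2m$ gadget players, all required to play $1$.

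For correctness, in the forward direction, if $\varphi$ has a one-in-three assignment, set each $x_{v_i}$ to the corresponding truth value; then $\sum_{i\in V_k} x_{v_i} = 1$ for every $k$, both $C_k^+$ and $C_k^-$ have $1$ as a best response, and the joint action is a PSNE in which all designated players play $1$. Conversely, in any PSNE of $J$ with every $C_k^+$ and $C_k^-$ playing $1$, the best-response conditions force $\sum_{i\in V_k} x_{v_i} \geq 1$ and $\sum_{i\in V_k} x_{v_i} \leq 1$ simultaneously, hence exactly one variable per clause is true, giving a one-in-three satisfying assignment. The construction is plainly polynomial in the size of $\varphi$, completing the reduction.

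The main obstacle is conceptual rather than technical: a single linear threshold over $\{0,1\}$ inputs is a monotone function, so no single LIG player can directly encode the non-monotone constraint ``exactly one of $x_{i_1}, x_{i_2}, x_{i_3}$ equals $1$.'' The key idea is therefore to split this constraint into its two monotone halves ``at least one'' and ``at most one'' by introducing two complementary gadget players per clause, one using positive weights and a small positive threshold, the other using negative weights and a suitably negative threshold; forcing both to play $1$ via the designated-set mechanism of the problem then pins the sum to exactly $1$. A secondary subtlety is ensuring the variable players cannot be used to evade the constraint, which is why they are built with no incoming edges and zero threshold so that all of their actions are simultaneously best responses and no extra equilibrium conditions on them restrict the reduction.
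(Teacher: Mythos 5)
Your proposal is correct, and the reduction is sound: membership in NP, the appeal to the $\{0,1\}\to\{-1,1\}$ transformation, and both directions of the equivalence all check out (for $C_k^-$, playing $1$ is a best response iff $-\sum_{i\in V_k}x_{v_i}\ge -\tfrac{3}{2}$, i.e.\ $\sum_{i\in V_k}x_{v_i}\le 1$ by integrality, which combined with $C_k^+$ pins the clause sum to exactly one). However, your gadget is genuinely different from the paper's. The paper also reduces from monotone one-in-three SAT, but it uses a \emph{single} clause player per clause (threshold $\epsilon\in(0,1)$, incoming weight $+1$ from each of its variables) to enforce ``at least one,'' and it enforces ``at most one'' \emph{structurally}, by placing mutual arcs of weight $-1$ between every pair of variable players that co-occur in a clause; variable players there have threshold $0$ but are not isolated, so the proof must additionally verify their equilibrium conditions (a variable playing $1$ is exactly indifferent, and a variable with an active clause-mate is strictly forced to $0$). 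Your version externalizes both halves of the exactly-one constraint into two designated monotone-threshold players per clause and leaves the variable players with no incoming arcs, which makes their best-response analysis trivial and shortens the argument. What the paper's construction buys in exchange is that ``at most one variable per clause plays $1$'' holds at \emph{every} PSNE of the constructed game, not only at those consistent with the designated players playing $1$; this stronger structural invariant is what the paper then reuses and extends in the subsequent reductions (Theorems~\ref{thm:Extension1Hard2} and~\ref{thm:UniqueHard}), whereas your gadget would need to be re-engineered for those. For the theorem as stated, both routes are equally valid.
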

\begin{proof} It is easy to see that the problem is in NP, since a succinct yes certificate can be specified by a joint action of the players, where the designated players play 1, and it can be verified in polynomial time whether this is a PSNE or not.

We show a reduction from the monotone one-in-three SAT problem, a known NP-complete problem, to prove that the problem is NP-hard. An instance of the monotone one-in-three SAT problem consists of a set of $m$ clauses and a set of $n$ variables, where each clause has exactly three variables. 
The problem asks whether there exists a truth assignment to the variables such that each clause has exactly one variable with the truth value of \textit{true}. Given an instance of the monotone one-in-three SAT problem, we construct an instance of LIG as follows (please refer to Figure~\ref{fig:m13sat1} for an illustration). For each variable we have a \textit{variable player} in the game, and for each clause we have a \textit{clause player}. Each variable player has a threshold of $0$, and each clause player has a threshold of $\epsilon$, where $0 < \epsilon < 1$. We now define the connectivity among the players of the game. There is an arc with weight (or influence) $-1$ from a variable player $u$ to another variable player $v$ if and only if, in the monotone one-in-three SAT instance, both of the corresponding variables appear together in at least one clause. Also, for each clause $t$ and each variable $w$ appearing in $t$, there is an arc from the variable player (corresponding to $w$) to the clause player (corresponding to $t$) with weight $1$. Furthermore, we assign $k = m$, and assume that the designated set of players is the set of clause players. We also assume that the action $1$ in the LIG corresponds to the truth value of \textit{true} in the monotone one-in-three SAT problem and $0$ to \textit{false}.

\begin{figure}[h]
\centering
\includegraphics{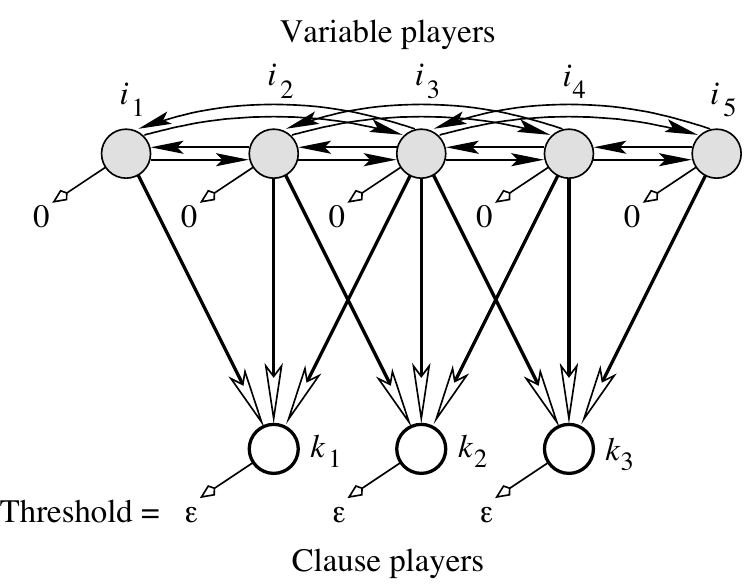}
\caption{Illustration of the NP-hardness reduction of Theorem~\ref{thm:Extension1Hard}. The monotone one-in-three SAT instance is $(i_1 \vee i_2 \vee i_3) \wedge (i_2 \vee i_3 \vee i_4) \wedge (i_3 \vee i_4 \vee i_5)$. 
The threshold of each variable player is 0, and that of each clause player is $\epsilon$.
}
\label{fig:m13sat1}
\end{figure}

Note that the way we constructed the LIG, at most one variable player per clause can play the action 1 at any PSNE. To see this, assume, for contradiction, that at some PSNE two variable players $u$ and $v$, both connected to the same clause $t$, are playing the action 1. Then the influence on either of these two variable players is $\le -1$, which is less than its threshold $0$, and this contradicts the PSNE assumption. Also, note that at any PSNE, each clause player will play the action 1 if and only if at least one of the variable players connected to it plays 1.

First, we show that if there exists a solution to the monotone one-in-three SAT instance then there exists a PSNE in the LIG where the set of clause players play 1. A solution to the monotone one-in-three SAT problem implies that each clause has the truth value of \textit{true} with exactly one of its variables having the truth value of \textit{true}. We claim that in the LIG, every player playing according to its truth assignment, is a PSNE. First, observe that the variable players do not have any incentive to change their actions, since the ones playing 1 are indifferent between playing $0$ and $1$ (because the total influence = 0 = threshold) and the remaining must play 0 (because the total influence is $\le -1 <$ threshold). Since each clause has one of its variables playing 1, each clause player must play 1 (because $1 > \epsilon$). This concludes the first part of the proof.

We next show that if there exists a PSNE with the clause players playing 1 then there exists a solution to the monotone one-in-three SAT instance. Consider any PSNE where the clause players are playing 1. Since each clause player is playing 1, \textit{at least} one of the three variable players connected to the clause player is playing 1.  Furthermore, as we showed earlier, no two variables belonging to the same clause can play $1$ at any PSNE. Thus, for each clause player, \textit{at most} one variable player connected to it is playing 1. Therefore, for every clause player, exactly one variable player connected to it is playing 1. Translating the semantics of the actions to the truth values of the variables and the clauses, we obtain a solution to the monotone one-in-three SAT instance. \end{proof}

\begin{theorem}
\label{thm:Extension1Hard2}
Given an LIG and a number $k \ge 1$, it is NP-complete to decide if there exists a PSNE with at least $k$ players playing the action 1.
\end{theorem}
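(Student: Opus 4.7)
Membership in NP is immediate: any PSNE with at least $k$ players playing $1$ is a polynomial-size certificate, and both the best-response condition for each player and the count of $1$-players can be checked in polynomial time.

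For NP-hardness, I would reduce from the problem shown NP-hard in Theorem~\ref{thm:Extension1Hard}: given an LIG $\G$ on $n$ players and a distinguished set $D \subseteq [n]$ with $|D|=k$, decide whether $\G$ has a PSNE in which every player in $D$ plays $1$. Working in the $\{0,1\}$-action convention (as in Theorem~\ref{thm:Extension1Hard}; the linear transformation to $\{-1,1\}$-actions given earlier in the paper carries over), the idea is a padding/amplification gadget: attach to each designated player a large block of ``obedient followers'' whose unique best response is to copy that player's action, then tune the target $k'$ so that the only way to meet the threshold is to have every designated player play $1$.

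Concretely, I would build $\G'$ from $\G$ by adding, for each $i \in D$, a set of $n+1$ fresh follower players $y^i_1,\dots,y^i_{n+1}$, each with threshold $\tfrac{1}{2}$ and a single incoming arc of weight $1$ from $i$; the followers have no outgoing influence into the rest of $\G'$. Set the target $k' \equiv (n+1)k$. Two structural observations do the work: (i) since followers have no outgoing arcs, the restriction of any PSNE of $\G'$ to the original $\G$-players is itself a PSNE of $\G$; and (ii) for each follower $y^i_j$, the unique best response is $1$ when $i$ plays $1$ (total influence $1 > \tfrac{1}{2}$) and $0$ when $i$ plays $0$ (total influence $0 < \tfrac{1}{2}$).

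Given these observations, correctness is a short counting argument. The forward direction extends any PSNE of $\G$ in which all of $D$ play $1$ by setting every follower to $1$, giving a PSNE of $\G'$ with at least $(n+1)k = k'$ players choosing $1$. For the reverse direction, in any PSNE $\x'$ of $\G'$, let $s \equiv |\{i \in D : x'_i = 1\}|$; then the total number of players choosing $1$ is at most $n + (n+1)s$, since the $n$ original players contribute at most $n$ and each of the $(n+1)k$ followers plays $1$ iff its owner does. Thus $n + (n+1)s \geq (n+1)k$ forces $s \geq k$, hence $s = k$, and restricting $\x'$ to the $\G$-players yields a PSNE of $\G$ with every member of $D$ playing $1$. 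The main quantitative point that needs care is the choice $N = n+1$ of followers per designated player: it is precisely large enough that the at-most-$n$ ``free'' contribution of the original $\G$-players cannot compensate for even a single missing designated $1$, which is what makes the padding argument tight.
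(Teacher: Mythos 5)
Your proof is correct, and it reaches the result by a more modular route than the paper's. The paper does not reduce from the decision problem of Theorem~\ref{thm:Extension1Hard} as a black box; instead it re-enters that theorem's monotone one-in-three SAT gadget, attaches $m-1$ ``extra players'' to each of the $m$ clause players, sets $k=m^2$, and exploits a structural fact specific to that gadget (at most one variable player per clause can play $1$ at any PSNE) to get the counting bound $(m-1)(m+1) < m^2$. Your construction abstracts exactly the ingredient that makes the paper's argument work: a padding gadget of obedient followers attached to each designated player, sized so that the at-most-$n$ contribution of the unpadded players cannot cover the $(n+1)$-follower deficit of even one defecting designated player. The two key observations you isolate --- that followers with no outgoing arcs leave the original players' best-response conditions untouched (so PSNE of $\G'$ restrict to PSNE of $\G$), and that each follower's best response is forced by its owner's action --- are both sound, and the integer-rounding step $s \ge k - \tfrac{n}{n+1} \Rightarrow s \ge k$ is exactly the tightness the padding needs. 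What your version buys is reusability (it shows the ``at least $k$ ones'' problem is at least as hard as the ``designated set plays $1$'' problem for \emph{any} LIG, not just the SAT gadget) and a cleaner separation of concerns; what the paper's version buys is that the hardness is exhibited on a single concrete family of instances built directly from SAT, with slightly smaller padding ($m-1$ followers per clause rather than $n+1$). The only point worth stating explicitly if you write this up is the one you already flag in passing: the $\{0,1\}\to\{-1,1\}$ transformation maps action $1$ to action $1$, so it preserves the count of players playing $1$ and hence the target $k'$.
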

\begin{proof} Clearly, the problem is in NP, since we can verify a whether a joint action is a PSNE or not in polynomial time.

For the proof of NP-hardness, once again we show a reduction from the monotone one-in-three SAT problem.  Please see Figure~\ref{fig:m13sat2} for an illustration.
Given an instance $I$ of the monotone one-in-three SAT problem, we first build an LIG as shown in the proof of Theorem~\ref{thm:Extension1Hard}. We then add $m(m-1)$ additional players, named \textit{extra players}, to the game, where $m$ is the number of clauses in $I$. Each of these extra players is assigned a threshold of $\epsilon$, where $0 < \epsilon < 1$. The way we connect the extra players to the other players is as follows: From each clause player we introduce $m-1$ arcs, each weighted by $1$, to $m-1$ distinct extra players. That is, no two clause players have arcs to the same extra player. Finally, we set $k = m^2$.
We denote this instance of LIG by $J$.

\begin{figure}[h]
\centering
\includegraphics{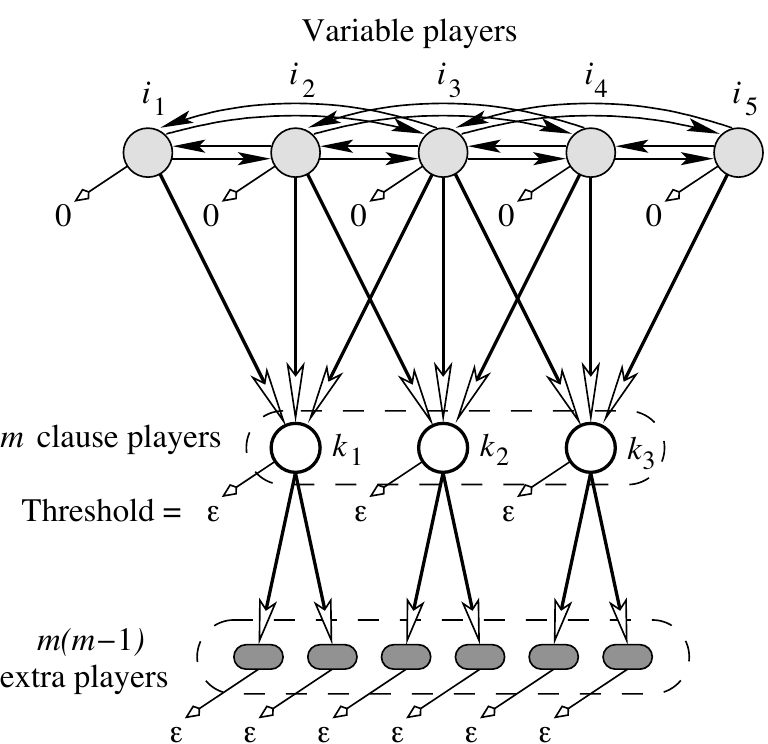}
\caption{Illustration of the NP-hardness reduction of Theorem~\ref{thm:Extension1Hard2}. 
}
\label{fig:m13sat2}
\end{figure}

We prove that for any solution to $I$ there exists a PSNE with $k$ players playing 1 in $J$. Suppose that each of the variable and clause players is playing according to their corresponding truth value in the solution to $I$. None of the variable players has any incentive to change its action, because exactly one variable player connected to each clause player is playing 1. For the same reason, the clause players, each playing 1, also do not have any incentive to deviate. Considering the extra players, each of these players must play 1, because each of the clause players is playing 1. The total number of clause and extra players is $k$. Therefore, we have a Nash equilibrium where at least $k$ players are playing 1.

On the other direction, consider any PSNE in $J$ with at least $k$ players playing 1. 
We claim that all the clause and extra players are playing 1 at this PSNE. If this is not true then at least one of these players is playing 0. This implies that at least one clause player is playing 0, because conditioned on a PSNE, whenever a clause player plays 1, all the extra players connected to it also plays 1. Furthermore, by our construction at most one of the variable players connected to each clause player can play 1. So, the total number of players playing 1 is $\le (m-1)(m+1) < m^2$ (at most $m-1$ clause players are playing $1$, and for each of these clause players, $m-1$ extra players, $1$ variable player, and the clause player itself are playing $1$), which contradicts our assumption that $m^2$ players are playing 1. Thus, at any PSNE with $k$ players playing 1, it must be the case that every clause player is playing 1. This leads us to a solution for $I$.\end{proof}

\begin{theorem}
\label{thm:UniqueHard}
Given an LIG and a designated set of $k \ge 1$ players, it is co-NP-complete to decide if there exists a \textit{unique} PSNE with those players playing the action $1$.
\end{theorem}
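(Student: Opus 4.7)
The plan is first to argue membership in co-NP and then to prove co-NP-hardness by reducing from the complement of monotone one-in-three SAT. For membership, reading ``unique PSNE with those players playing $1$'' as ``at most one such PSNE,'' the complement asks for two distinct PSNE that each assign $1$ to the designated set; such a pair of joint-actions is a polynomial-size certificate and PSNE-verification is polynomial directly from the definition. The stronger reading ``exactly one'' also lands in co-NP once the hardness construction below is in place, because that construction will always produce at least one PSNE with the designated players playing $1$, so the two readings coincide on its image.

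For hardness I would start from the reduction used in the proof of Theorem~\ref{thm:Extension1Hard}, which sends a monotone one-in-three SAT instance $I$ to an LIG whose PSNE with the clause players playing $1$ are in bijection with the satisfying assignments of $I$. On top of that base LIG I would graft a \emph{canonical-PSNE gadget} consisting of two new nodes $g_1,g_2$ that are \emph{not} placed in the designated set. Inside the gadget, $g_1$ and $g_2$ form a positive feedback loop with mutual positive influence and thresholds tuned so that the pair has exactly two self-consistent states, namely $(g_1,g_2)=(1,1)$ and $(g_1,g_2)=(0,0)$, and not $(1,0)$ or $(0,1)$. In addition, $g_1$ sends one-way \emph{override arcs} to every variable player with strongly negative weight and to every clause player with strongly positive weight, where ``strongly'' means with magnitude exceeding the total absolute incoming influence any original player can accumulate from the base construction. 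Crucially, neither $g_1$ nor $g_2$ receives any arc back from the original players, so the gadget's internal consistency is independent of the rest of the game.

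The analysis then decouples over the two self-consistent gadget states. When $(g_1,g_2)=(0,0)$, the override arcs are inactive, so the PSNE of the composite game with the clause players playing $1$ are exactly those of the base construction, hence in bijection with the satisfying assignments of $I$. When $(g_1,g_2)=(1,1)$, the override from $g_1$ dominates every original player's remaining influence: all variable players are forced to $0$ and all clause players forced to $1$, yielding exactly one PSNE regardless of $I$. Consequently the number of PSNE with the designated set playing $1$ equals $1+\#\{\text{satisfying assignments of }I\}$, so the LIG has a unique such PSNE iff $I$ is unsatisfiable; since the complement of monotone one-in-three SAT is co-NP-complete, this gives co-NP-hardness. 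The main obstacle will be choosing the thresholds of $g_1,g_2$ and the magnitudes of the override weights simultaneously to guarantee (i) the gadget has exactly the two intended self-consistent states and not one or four, (ii) the ``override on'' state cleanly dominates every base influence and keeps the feedback loop consistent, and (iii) the ``override off'' state leaves the base PSNE structure completely untouched --- all while respecting the $\{0,1\}$-to-$\{-1,1\}$ transformation used throughout these hardness proofs.
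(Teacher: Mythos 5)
Your proof is correct, and it follows the same overall skeleton as the paper's: reduce from the complement of monotone one-in-three SAT, reuse the construction from Theorem~\ref{thm:Extension1Hard} so that PSNE with all clause players playing $1$ correspond to satisfying assignments, and then graft on gadgetry guaranteeing one ``canonical'' PSNE that always has the designated players playing $1$, so that uniqueness holds iff $I$ is unsatisfiable. The membership argument (a pair of distinct qualifying PSNE as a ``no'' certificate) is also the paper's, and your explicit remark that ``at most one'' and ``exactly one'' coincide on the image of the reduction is a point the paper glosses over.

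Where you genuinely diverge is in the gadget and in the choice of designated set. The paper keeps the clause and variable players as they are, adds an \emph{all-satisfied-verification} player (threshold $m-\epsilon$, fed by $+1$ arcs from every clause player) and a \emph{none-satisfied-verification} player (threshold $-\epsilon$, fed by $-1$ arcs), and then designates $m^2$ fresh \emph{extra players} fed by both verification players; the canonical PSNE $E_0$ is the all-zeros state of the base game, detected by the none-satisfied player, and a second qualifying PSNE exists exactly when a satisfying assignment lights up the all-satisfied player. You instead keep the clause players themselves as the designated set and add a two-node bistable gadget with one-way override arcs that, in its ``on'' state, clobbers the base game into a single forced PSNE with all clause players at $1$, and in its ``off'' state (which is influence-neutral under the $\{0,1\}$ convention, since a player at $0$ contributes nothing) leaves the base game untouched. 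Both constructions yield a count of $1+\#\mathrm{SAT}(I)$ qualifying PSNE. Your version is arguably more modular --- the gadget's internal consistency is decoupled from the base game, and the case analysis splits cleanly over its two states --- while the paper's version reads the base game's status through the verification players rather than overriding it; the engineering details you flag (choosing thresholds so the gadget has exactly two self-consistent states, and override weights exceeding the maximum base in-influence) are routine and do not hide any real obstacle.
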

\begin{proof} Two distinct joint actions (PSNE), each having the same $k$ players playing 1, can serve as a succinct no certificate, and we can check in polynomial time if these two joint actions are indeed PSNE or not.

Suppose that $I$ is an instance of the monotone one-in-three SAT problem. We reduce $I$ to an instance $J$ of our problem in polynomial time and show that $J$ has a ``no'' answer if and only if $I$ has a ``yes'' answer.

Given $I$, we start constructing an LIG in the same way as in Theorem~\ref{thm:Extension1Hard} (see Figures~\ref{fig:m13sat3} and~\ref{fig:m13sat1}). Assign $k = m^2$. 
Now, add two new players, named the \textit{all-satisfied-verification player} and the \textit{none-satisfied-verification player}, which have threshold values of $m-\epsilon$ and $-\epsilon$, respectively. 
We add arcs from every clause player to these two new players, and the arcs to the all-satisfied-verification player are weighted by 1, and the ones to the none-satisfied-verification player are weighted by $-1$. 

\begin{figure}[h]
\centering
\includegraphics{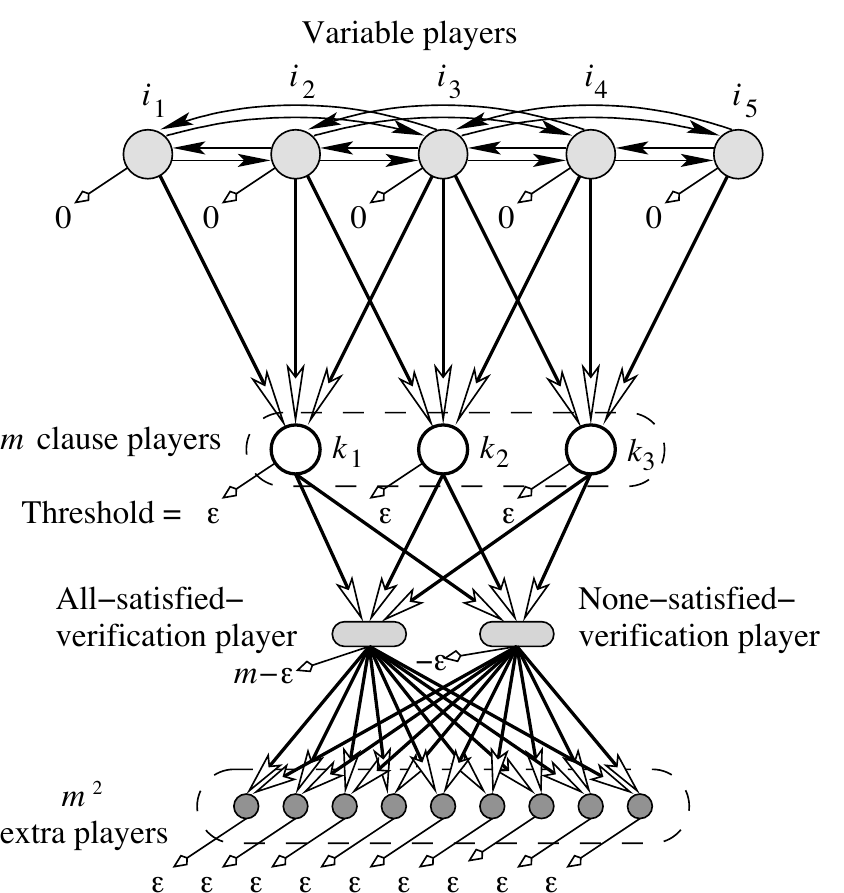}
\caption{Illustration of the NP-hardness reduction (Theorem~\ref{thm:UniqueHard}). For the monotone one-in-three SAT instance of Figure~\ref{fig:m13sat1}, we first obtain the same construction as in Theorem~\ref{thm:Extension1Hard}. We add two extra players, the all-satisfied-verification player and the none-satisfied-verification player, whose tasks are to verify if all clauses are satisfied and if no clause is satisfied, respectively. These two players are connected to $m^2$ extra players.}
\label{fig:m13sat3}
\end{figure}

In addition, add $k = m^2$ new players, named \textit{extra players}, and let these players constitute the set of designated players. Assign a threshold value of $\epsilon$ to each of these extra players, and introduce new arcs, each with weight 1, from the all-satisfied-verification player and the none-satisfied-verification player to every extra player. The resulting LIG is the instance $J$ of the problem in question. 

Note that at any PSNE the all-satisfied-verification player plays 1 if and only if every clause player plays 1, and the none-satisfied-verification player plays 1 if and only if no clause player plays 1. 
Furthermore, at any PSNE, each extra player plays 1 if and only if either every clause player plays 1 or no clause player plays 1.
Therefore, we find that every extra player playing $1$, the none-satisfied-verification player playing $1$, and every other player playing $0$ is a PSNE, and we denote this equilibrium by $E_0$. We claim that there exists a different PSNE where every extra player plays $1$ if and only if $I$ has a solution.

Suppose that there exists a solution $S_I$ to $I$. It can be verified that making the all-satisfied-verification player play 1, none-satisfied-verification player play 0, every extra player play 1, and choosing the actions of the clause and the variable players according to the corresponding truth values in $S_I$ gives us a PSNE that we call $E_1$. Thus $J$ has two PSNE $E_0$ and $E_1$, where the $k$ extra players play 1 in both cases.

Considering the reverse direction, suppose that there exists no solution to $I$. This implies that at any PSNE in $J$ all clause players can never play 1, otherwise we could have translated the PSNE to a satisfying truth assignment for $I$. This further implies that the all-satisfied-verification player always plays 0. The none-satisfied-verification player plays 1 if and only if none of the clause players plays 1. Thus, every extra player plays 1 if and only if no clause player plays 1, if and only if no variable player plays 1. 
Therefore, $E_0$ is the only PSNE in $J$ with the $k$ extra players playing 1. \end{proof}

\subsection{Heuristics for Computing and Counting Equilibria}

The fundamental computational problem at hand is that of computing PSNE in LIGs. We just saw that various computational questions pertaining to LIGs on general graphs, sometimes even on bipartite graphs, are NP-hard. We now present a heuristic to compute PSNE of an LIG on a general graph.

%
%

A natural approach to finding all the PSNE in an LIG would be to perform a backtracking search. However, a standard instantiation of the backtracking search method \citep[Ch. 5]{russellandnorvig03} that ignores the structure of the graph would be destined to failure in practice. Thus, we need to order the node selections in a way that would facilitate pruning the search space.

An outline of the backtracking search procedure that we used is given below. Here, the two main additions to the standard backtracking search method are exploiting the graph, including the influence factors, for node selection and implementing constraint propagation by adapting the NashProp algorithm \citep{ortizandkearns03} to run in polynomial time.
 
The first node selected by the procedure is  a node with the maximum outdegree. 
Afterwards, we do not select nodes by their degrees. 
We rather select a node $i$ that will most likely show that the current partial joint action cannot lead to a PSNE and explore the two actions of $i$, $x_i \in \{-1, 1\}$ in a suitable order. A good node selection heuristic that has worked well in our experiments is to select the one that has the maximum influence on any of the already selected nodes. 

%
Suppose that the nodes are selected in the order $1, 2, ..., n$ (wlog).
After selecting node ${i+1}$ and assigning it an action $x_{i+1}$, we determine if the partial joint action $\x_{1:(i+1)} \equiv (x_1, \ldots, x_{i+1})$ can possibly lead to a PSNE and prune the corresponding search space if not.
Note that a ``no'' answer to this requires a proof that one of the players $j$, $1 \le j \le i+1$, can never play $x_j$ according to the partial joint action $\x_{1:(i+1)} $. A straightforward way of doing this is to consider each player $j$, $1 \le j \le i+1$, and compute the quantities
$\gamma_j^+ \equiv \sum_{k=1,k \neq j}^{i+1} x_k w_{kj} + \sum_{k=i+2}^n \left| x_k w_{kj} \right|$ 
and $\gamma_j^- \equiv \sum_{k=1,k \neq j}^{i+1} x_k w_{kj} - \sum_{k=i+2}^n \left| x_k w_{kj} \right|$, 
and then test if the logical expression $((\gamma_j^- > b_j) \wedge x_j = -1) \vee ((\gamma_j^+ < b_j) \wedge x_j = 1)$ holds, in which case we can discard the partial joint action $\x_{1:(i+1)}$ and prune the corresponding search space. 
Furthermore, it may happen that due to $\x_{1:(i+1)}$, the choices of some of the not-yet-selected players became restricted. To this end, we apply {\bf NashProp}~\citep{ortizandkearns03} with $\x_{1:(i+1)}$ as the starting configuration, and see if the choices of the other players became restricted because of $\x_{1:(i+1)}$. Although each round of updating the table messages in {\bf NashProp} takes exponential time in the maximum degree in general graphical games, we can show in a way similar to Theorem~\ref{thm:pne_tree} that we can adapt the table updates to the case of LIGs so that it takes polynomial time. 

\subsubsection*{A Divide-and-Conquer Approach} To further exploit the structure of the graph in computing the PSNE, we propose a divide-and-conquer approach that relies on the following separation property of LIGs.

\begin{property}
\label{prop:separation}
Let $G = (V, E)$ be the underlying graph of an LIG and $S$ be a vertex separator of $G$ such that removing $S$ from $G$ results in $k \ge 2$ disconnected components: $G_1 = (V_1, E_1)$, ..., $G_k = (V_k, E_k)$. Let $G_i^{\prime}$ be the subgraph of $G$ induced by $V_i \cup S$, for $1 \le i \le k$. Consider the LIGs on these (smaller) graphs $G_i^{\prime}$'s, where we retain all the weights of the original graph, except that we treat the nodes in $S$ to be indifferent (that is, we remove all the incoming arcs to these nodes and set their thresholds to $0$). Computing the set of PSNE on $G_i^{\prime}$'s 
and then merging the PSNE 
(by performing outer-joins of joint actions and testing for PSNE in the original LIG), we obtain the set of all PSNE of the original game.
\end{property}

\noindent \textit{Proof Sketch.}
First, since the joint actions are tested for PSNE in the original LIG, the output will never contain a joint action that is not a PSNE. Second, since the nodes in $S$ are made indifferent in the LIGs on $G_i^{\prime}$, $1 \le i \le k$, no PSNE of the original LIG can get omitted from the result of the outer-join operation.\qed
\ \\

To obtain a vertex separator, we first find an edge separator (using well-known tools such as METIS~\citep{metis}), and then convert the edge separator to a vertex separator (by computing a maximum matching on the bipartite graph spanned by the edge separator). We then use this vertex separator to compute PSNE of the game in the way outlined in Property~\ref{prop:separation}. The benefits of this approach are two-fold: (1) for graphs that have good separation properties (such as preferential-attachment graphs), we found this approach to be computationally effective in practice; and (2) this approach leads to an \textit{anytime algorithm} for enumerating or counting PSNE: Observe that ignoring some edges from the edge separator may result in a smaller vertex separator, which greatly reduces the computation time of the divide-and-conquer algorithm at the expense of producing only a subset of all PSNE. (The reason we obtain a subset of all PSNE is that the edges that are ignored from the edge separator are not permanently removed from the original graph, and that after merging, every resulting joint action is tested for PSNE in the \emph{original} game, not in the game where some of the edges were temporarily removed. As a result, some the original PSNE may not be included in the final output. At the same time, we can never have a joint action in the final output that is not a PSNE.) We can obtain progressively better result as we ignore less number of edges from the edge separator.

\section{Computing the Most Influential Nodes}
\label{SecInfluentialNodes}
We now focus on the problem of computing the most influential set of nodes with respect to a specified desired PSNE and a preference for sets of minimal size. 
In the discussion below, we also assume, \textit{only} for the purpose of establishing and describing the equivalence to the \emph{minimum hitting set problem}~\citep{karp72}, that we are given the set of all PSNE. (As we will see, a counting routine is all that our algorithm requires, not a complete list of PSNE.)
We give a hypergraph representation of this problem that would lead us to a logarithmic-factor approximation by a natural greedy algorithm. 

\begin{figure}[h]
\centering
\includegraphics[width=2in]{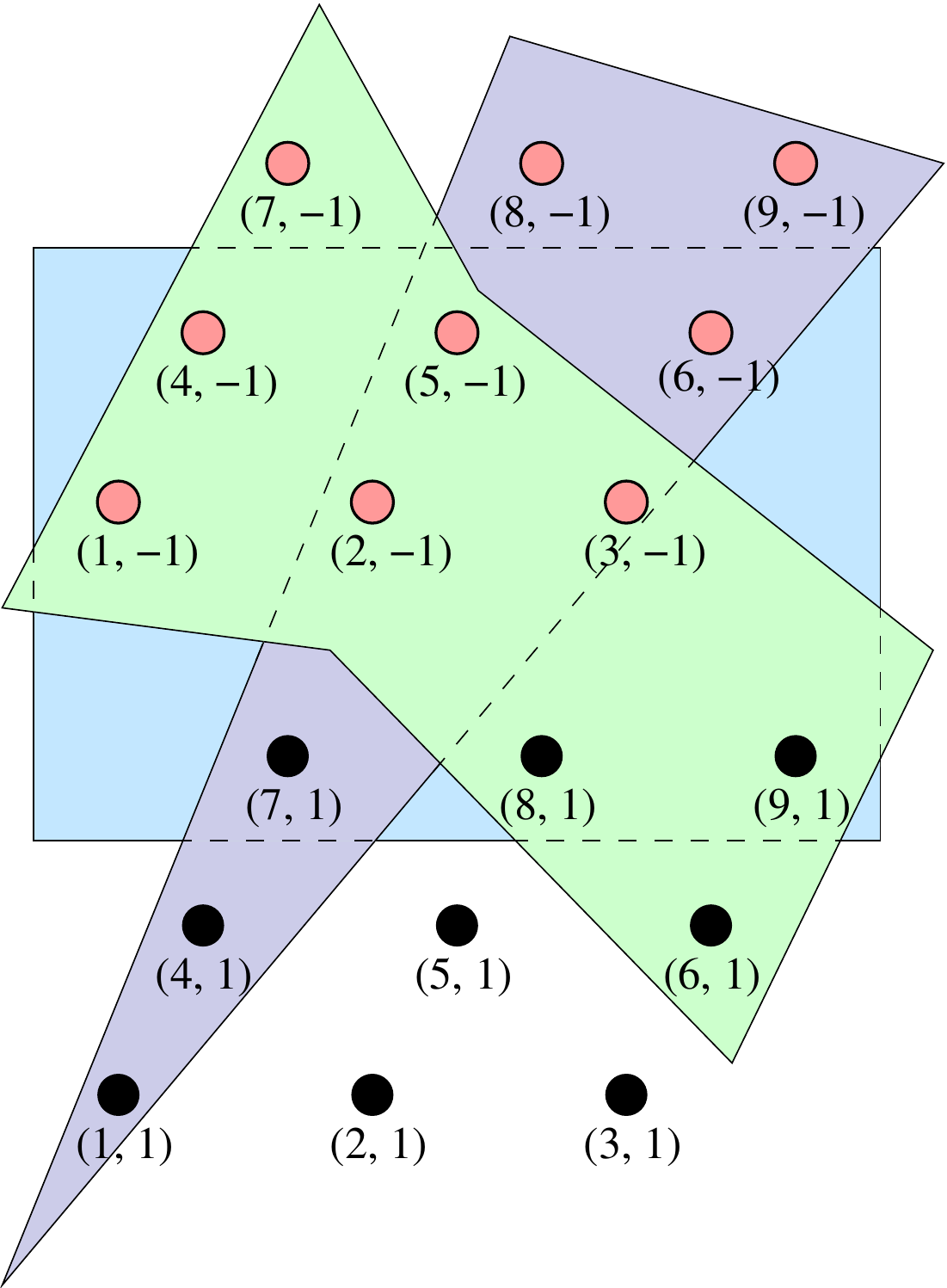}
\caption{A hypergraph representation of three PSNE in a 9-player game with binary actions. The PSNE shown here are the followings: $(1, -1, -1, 1, -1, -1, 1, -1, -1)$ (triangle), $(-1, -1, -1, -1, -1, -1, 1, 1, 1)$ (rectangle), and $(-1, -1, -1, -1, -1, 1, -1, 1, 1)$ (6-gon).}
\label{fig:hyper}
\end{figure}

Let us start by building a hypergraph that can represent the PSNE of a binary-action 
 game. The nodes of this hypergraph are the player-action tuples of the game. Thus, for an $n$-player, binary-action game, we have $2n$ nodes in the hypergraph. That is, for each player $i$ of the game, there are two nodes in the hypergraph: one in which $i$ plays $-1$ (tuple $(i, -1)$, colored red in Figure~\ref{fig:hyper}) and the other in which $i$ plays 1 (tuple $(i, 1)$, colored black).
For every PSNE $\x$ we construct a hyperedge $\{(i, x_i)\ |\ 1 \le i \le n\}$.
Let us call this hypergraph the {\em game hypergraph}. By construction, a set of players $S$ play the same joint-action ${\mathbf a}_S \in \{-1, 1\}^{|S|}$ in two distinct PSNE $\x$ and $\y$ of the LIG if and only if both of the corresponding hyperedges $e_{\x}$ and $e_{\y}$ (resp.) of the game hypergraph contains $T = \{(i , a_i)\ |\ i \in S\}$. 

We can use the above property to translate the most-influential-nodes selection problem, given all PSNE, to an equivalent combinatorial problem on the corresponding game hypergraph $H$. Let $e_{\x^*}$ be the hyperedge in $H$ corresponding to the desired PSNE $\x^*$. Let us call $e_{\x^*}$ the {\em goal hyperedge}. Then the most-influential-nodes selection problem is the problem of selecting a minimum-cardinality set of nodes $T \subseteq e_{\x^*}$ such that $T$ is contained in no other hyperedge of $H$ (recall that we are dealing with a set-preference function that captures the preference for sets of minimal cardinality). 
Let us call the latter problem the {\em unique hyperedge problem}. 
Using the notation above, the equivalence relationship between the influential nodes selection problem (given the set of all PSNE) and the unique hyperedge problem can be stated as follows. The set $S \subseteq \{1, ..., n\}$ is a (feasible) solution to the most-influential-nodes selection problem if and only if $T = \{(i, {x_i}^*)\ |\ i \in S\}$ is a (feasible) solution to the unique hyperedge problem.

We now show that the unique hyperedge problem is equivalent to the {minimum hitting set problem}. Immediate consequences of this result are that the unique hyperedge problem is not approximable within a factor of $c \log h$ for some constant $c > 0$, and that it admits a $(1 + \log h)$-factor approximation~\citep{razsafra97,setcoverapprox74}, where $h$ is the total number of PSNE. 

\begin{theorem}
\label{thm:hittingset2}
The unique hyperedge problem having $2n$ players and $h$ hyperedges is equivalent to the minimum hitting set problem having $n$ nodes and $h$ hyperedges.
\end{theorem}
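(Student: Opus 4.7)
The plan is to establish the equivalence by exhibiting polynomial-time, bijection-preserving reductions in both directions, so that feasibility and optimal cardinality are carried across exactly.

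First I would reduce the unique hyperedge problem to minimum hitting set. Given the game hypergraph $H$ with goal hyperedge $e_{\x^*}$, I take the universe of the hitting set instance to be $e_{\x^*}$ itself, which contains exactly $n$ elements (one per player). For each other hyperedge $e$ of $H$, define $D_e \equiv e_{\x^*} \setminus e$ and let the collection to be hit be $\mathcal{C} \equiv \{D_e \mid e \in H,\ e \neq e_{\x^*}\}$. The key observation is that for any $T \subseteq e_{\x^*}$, $T \not\subseteq e \iff T \cap D_e \neq \emptyset$, so the feasibility constraint of the unique hyperedge problem (that $T$ is contained in no other hyperedge) coincides exactly with the hitting-set constraint on $\mathcal{C}$. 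Since the identity map $T \mapsto T$ preserves cardinality, optimal solutions correspond.

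Next I would reduce minimum hitting set to the unique hyperedge problem. Given $(U,\mathcal{S})$ with $|U| = n$, I introduce $2n$ player-action nodes $(i,-1)$ and $(i,+1)$ for each $i \in U$. The goal hyperedge is $e_{\x^*} \equiv \{(i,+1) \mid i \in U\}$, and for each $S \in \mathcal{S}$ I build a hyperedge $e_S$ containing $(i,-1)$ for $i \in S$ and $(i,+1)$ for $i \notin S$, so that $e_S$ has one node per player as required by the game-hypergraph format, and $e_{\x^*} \setminus e_S = \{(i,+1) \mid i \in S\}$. Under the identification $i \leftrightarrow (i,+1)$, a candidate $T \subseteq e_{\x^*}$ fails to be contained in $e_S$ precisely when $T$ hits $S$; so feasibility and cardinality are again preserved.

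Finally, I would verify the instance-size bookkeeping claimed in the statement: the constructed hypergraph has $2n$ nodes, and if the original hitting set has $h-1$ constraints then the game hypergraph carries $h$ hyperedges (the $h-1$ constructed ones plus the goal), and conversely. Both reductions are clearly polynomial. The only genuinely delicate point, and thus the main obstacle worth flagging, is the reverse direction: one must ensure that the constructed hyperedges are \emph{valid} hyperedges of a game hypergraph, i.e., each contains exactly one of $(i,+1)$ or $(i,-1)$ for every player $i$. The construction enforces this by flipping the sign at exactly those $i \in S$, which simultaneously makes $e_{\x^*} \setminus e_S$ correspond bijectively to $S$ and guarantees well-formedness. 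With this in place, the two optimization problems are equivalent in the strong sense that their feasible sets are in cardinality-preserving bijection, and hence so are their optima (and, as a consequence, inapproximability lower bounds and greedy upper bounds transfer directly, as mentioned in the paragraph preceding the theorem).
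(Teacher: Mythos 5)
Your proposal is correct and follows essentially the same route as the paper: the forward direction maps each non-goal hyperedge $e$ to $e_{\x^*}\setminus e$ (the paper writes this as $\bar e\cap e_{\x^*}$, and additionally throws in $e_{\x^*}$ itself so the hyperedge count is exactly $h$), and the reverse direction builds $e_S=\bar S\times\{1\}\cup S\times\{-1\}$ with goal hyperedge $U\times\{1\}$, exactly as in the paper's construction. The only differences are cosmetic bookkeeping (the paper's WLOG assumption that the all-nodes set is among the hitting-set hyperedges, and the inclusion of $e_{\x^*}$ in the collection to be hit), which you handle equivalently.
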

%
\begin{proof}
Let us consider an instance $I$ of the unique hyperedge problem, given by a game hypergraph $G = (V, E)$, where $V$ is the set of $2n$ nodes and $E$ is the set of $h$ hyperedges, along with a specification of the goal hyperedge $e_{\x^*}$. 
Given $I$, we now construct an instance $J$ of the minimum hitting set problem, specified by the hypergraph $G^\prime = (e_{\x^*}, \{e_{\x^*}\} \cup \{\bar{e} \cap e_{\x^*} \ |\ e \in E$ and $e \neq e_{\x^*}\})$, where $\bar{e}$ indicates the complement set of the hyperedge $e$. Thus, the nodes of 
$G^\prime$ are exactly the $n$ nodes of $e_{\x^*}$ and the hyperedges of it are constructed from the complement hyperedges of
$G$ except $e_{\x^*}$, which is present in both $G$ and $G^\prime$. We show that a set $S$ of nodes is a feasible solution to $I$ if and only if it is a feasible solution to $J$.

If $S$ is a feasible solution to $I$ then $S \subseteq e_{\x^*}$ (because in the unique hyperedge problem, we are only allowed to select nodes from the goal hyperedge) and $S \nsubseteq e$ for any hyperedge $e \neq e_{\x^*}$ of $G$ (otherwise, the uniqueness property is violated). This implies that for any hyperedge $e \neq e_{\x^*}$ of $G$, there exists a node $v \in S$ such that $v \notin e$, which further implies that $v \in \bar{e} \cap e_{\x^*}$. Thus,  every hyperedge of $G^\prime$, including $e_{\x^*}$, of course, has at least one of its nodes selected in $S$, and therefore, $S$ is a feasible solution to $J$. On the other hand, if $S$ is a feasible solution to $J$ then for any hyperedge of $G^\prime$, at least one of its nodes has been selected in $S$. That is, for any hyperedge $e \neq e_{\x^*}$ of $G$, we have $e^\prime \equiv \bar{e} \cap e_{\x^*}$ as the corresponding complementary hyperedge in $G^\prime$, and there exists a node $v \in S$ such that $v \in e^\prime$, which implies that $v \notin e$. Thus, $S \nsubseteq e$ for any hyperedge $e \neq e_{\x^*}$ of $G$. Furthermore, we have selected all the nodes of $S$ from $e_{\x^*}$ of $G$. Thus, $e_{\x^*}$ is the unique hyperedge of $G$ containing the nodes of $S$.

To prove the reverse direction, we start with an instance $J$ of the minimum hitting set problem, specified by a hypergraph $G^\prime = (V, E)$, where $V$ is a set of $n$ nodes and $E$ is a set of $h$ hyperedges. Without the loss of generality, we assume that $E$ contains the hyperedge $e^*$ consisting of all the nodes of $V$.
We now construct an instance $I$ of the unique hyperedge problem that has a hypergraph $G$ with $2n$ nodes and $h$ hyperedges. The node set of $G$ literally consists of two copies of the nodes of $V$, denoted by $V \times \{1, -1\}$. We now construct the hyperedges of $G$. For each hyperedge $e \neq e^*$ of the minimum hitting set instance, we include a hyperedge $e^\prime \equiv \bar{e}\times\{1\} \cup e \times \{-1\}$ in $G$, and for the hyperedge $e^*$ of $J$, we include the hyperedge $e^* \times \{1\}$ in $G$. Thus, the game hypergraph can be defined as $G = ( V \times \{1, -1\} , \{e^* \times \{1\}\} \cup \{\bar{e} \times \{1\} \cup e \times \{-1\}\ |\ e \in E$ and $e \neq e^*\})$. Finally, we designate  $e^* \times \{1\}$ as the goal hyperedge of $I$.
We will show that $S \subseteq V$ is a feasible solution to $J$ if and only if $S \times \{1\}$ is a feasible solution to the unique hyperedge problem instance $I$. The set $S$ is a feasible solution to $J$ if and only if for every hyperedge $e \neq e^*$ of $G^\prime$, there exists a node $v \in S$ such that $v \in e$ (note that $S \subseteq e^*$). This is equivalent to saying that for every hyperedge $e \times\{1\} \neq e^* \times \{1\}$ of $G$, there exists a node $v \in S \times \{1\}$ such that $v \notin e \times\{1\}$. Using the fact that $S \times \{1\} \subseteq e^* \times \{1\}$, $S \times \{1\}$ is a feasible solution to $I$.
\end{proof}

The adaptation of the well-known hitting set approximation algorithm \citep{hittingset80} for our problem can be outlined as follows: At each step, select the least-degree node $v$ of the goal hyperedge, remove the hyperedges that do not contain $v$, remove $v$ from the game hypergraph, and include $v$ in the solution set, until the goal hyperedge becomes the last remaining hyperedge in the hypergraph. In the context of the original LIG, at every round, this algorithm is essentially picking the node whose assignment would reduce the set of PSNE consistent with the current partial assignment the most.  Hence, the algorithm only requires a subroutine to \emph{count} the PSNE extensions for some given partial assignment to the players' actions, not an \emph{a priori} full list or enumeration of all the PSNE. Of course, it may require a complete list of PSNE in the worst case.

\section{Experimental Results}
We performed empirical studies on several types of LIGs, namely, (1)
random LIGs (Erd\"{o}s-R\'{e}nyi and uniformly random), (2)
preferential-attachment LIGs, and (3) LIGs created to model potential interactions in two different real-world scenarios: interactions among U.S. Supreme Court justices and those among U.S. senators.
While the first two types of LIGs are synthetic/artificial, the latter
two are the result of inferring the LIGs from real-world data using machine learning techniques~\citep{honorio_and_ortiz13}. 

The reason for experimenting with synthetic LIGs using Erd\"{o}s-R\'{e}nyi
\citep{erdos_random} and uniformly random graphs is that those types
of network-generation techniques are the most basic available. They
serve as precursors to the more sophisticated preferential-attachment
graphs, which capture the heavy-tailed degree distribution often observed in real-world networks \citep{preferentialattachment}. 

Here is our overall plan for this section. For the synthetic games
(random LIGs and preferential-attachment LIGs), we generate game
instances by varying the appropriate parameters, such as the size of
the game, and evaluate both the number of PSNE of these games and the
computation time of our algorithm. We also compute the most
influential nodes in these games using our approximation algorithm and
compare it to the optimal (i.e., minimum-cardinality) set of
most-influential nodes. For the real-world games on Supreme Court
rulings and congressional voting, we discuss how we learn these games
from data \citep{honorio_and_ortiz13}, and how we compute the set of
PSNE and identify the most influential nodes. For the congressional
voting case, we also \emph{adapt} the simple, greedy selection approximation
algorithm to identify most-influential individuals using the cascade
model in the diffusion setting, as described by~\citet{kleinberg07}, to use as a \emph{heuristic} to
solve \emph{specific instances} of the most-influential-nodes problem
formulation in our context. We compare and contrast the output of our
approximation algorithm and that of the diffusion-based
heuristic resulting from the adaptation. 

\subsection{Random Influence Games}
We began our experiments by generating instances of random graphs using the Erd\"{o}s-R\'{e}nyi model \citep{erdos_random}.
We varied the number of nodes, from 10 to 30, and the probability of including an edge. 
Assuming binary actions, $1$ and $-1$, we chose the threshold $b_i$
and the influence factors $w_{ji}$ of the incoming arcs of each node
$i$ uniformly at random from a unit hyperball. That is, for each node
$i$, $b_i^2 + \sum_{j \in N(i)} w_{ji}^2 = 1$, where $N(i)$ is the set
of nodes having arcs toward $i$. Then, we chose the sign of each
threshold, as well as each weight, to be either $+$ or $-$ with $0.5$ probability.
We applied the heuristic given earlier to find the set of all PSNE in these random graphs. 
Our experimental results show that in all of these random LIGs, the number of
PSNE is almost always very small---usually one or two, and sometimes none. 

We also studied LIGs on uniformly random directed graphs.
While constructing the random graphs, we have independently chosen each arc with probability $0.50$, and assigned it a weight of $-1$ with probability $p$ (named \textit{flip probability}) and $1$ with probability $1-p$. 
Several interesting findings emerged from our study of this parameterized family of LIGs on uniformly random graphs. 
Appendix~\ref{app:exp} summarizes the results in tabular form. 
For various flip probabilities, we independently generated $100$ uniformly random graphs of $25$ nodes each. For each of these random graphs, we first computed all PSNE using our heuristic. We then applied the greedy approximation algorithm to obtain a set of the most influential nodes in each graph and compared the approximation results to the optimal set. 

Unless $p$ is either $0$ or $1$, one cannot guarantee the existence of
a PSNE. In our experiments, we found that, in fact, for $p = 0.50$,
the probability of not having a PSNE is highest (around $5\%$), and as
we go toward the two extremes of $p$, the probability of not having a
PSNE decreases. We report only the games with at least one equilibrium in this experimental study, 
because it is these games we care about for computing the most
influential nodes. Another interesting finding about the number of
PSNE is that it is very small when $p = 0$, that is when all the arcs
have weight $1$, and it is large when $p = 1$, although quite small
(on average, a fraction $5.81 \times 10^{-6} \approx 2^{-17.29}$)
relative to the total number of $2^{25}$ possible joint actions. Also,
the average number of nodes of the search tree that the backtracking
method visits per equilibrium computation is relatively small on the
two extremes of $p$, compared with $p$ around $0.5$. Note that the
backtracking method does a very good job with respect to the number of
search-tree nodes visited in searching the $2^{25}$ space. In fact,
our experiments show that the addition of the {\bf NashProp}-based heuristic on top of the node selection heuristic considerably speeds up the search. Finally, we found that although the approximation algorithm has a logarithmic factor worst-case bound, the results of the approximation algorithm are most often very close to the optimal solution.

\begin{figure}[h]
\centering
\includegraphics[width=4in]{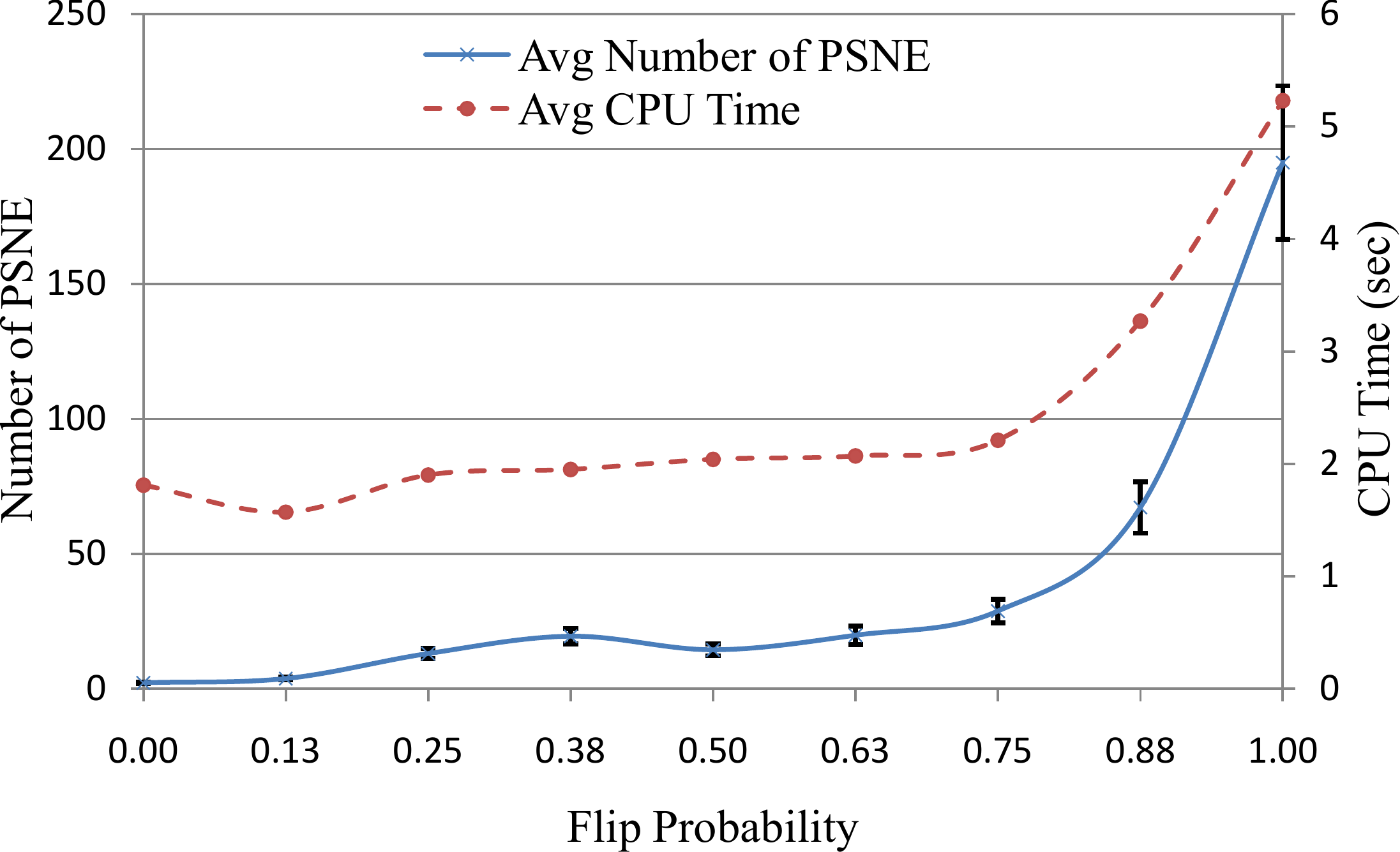}
\caption{PSNE computation on random LIGs. The vertical bars denote 95\% confidence intervals.}
\label{fig:plot_random}
\end{figure}

Figure~\ref{fig:plot_random} shows that the number of PSNE usually increases if we have more negative-weighted arcs than positive ones; although the number of PSNE is still very small relative to the maximum potential number, as remarked earlier. 
We also found once again that although the approximation algorithm for
the influential-nodes selection problem has a logarithmic factor
worst-case bound, the result of the approximation algorithm is most
often very close to the optimal solution. For example, for the random
games having all negative influence factors, in 87\% of the trials the
approximate solution size $\le$ optimal size $+ 1$, and in 99\% of the
trials the approximate solution size $\le$ optimal size $+ 2$ (see Appendix~\ref{app:exp} for more details in a tabular form).
%
\subsection{Preferential-Attachment LIGs}
We also experimented with LIGs based on preferential-attachment graphs
primarily because of its power to explain the structure of many
real-world social networks in a generative
fashion~\citep{preferentialattachment}. In order to construct these
graphs, we started with three nodes in a triangle and then
progressively added each node to the graph, connecting it with three
existing nodes with probabilities proportionate to the degrees. We
made each connection bidirectional and imposed the same weighting
scheme as above: with flip probability $p$, the weight of an arc is
$-1$ and with probability $1-p$ it is $1$. We set the threshold of each node to $0$. We observe that for $0 < p < 1$, these games have very few PSNE, while for $p = 0$ and $p = 1$ the number of PSNE is considerably larger than that. 
Furthermore, these games show very good separation properties, making
the computation amenable to the divide-and-conquer approach. We show
the average number of PSNE and the average computation time for graphs
of sizes 20 to 50 nodes in Figure~\ref{fig:plot_pref} for $p = 1$
(each average is over 20 trials). Note that in contrast to uniformly random LIGs, preferential-attachment graphs show an exponential increase in the number of PSNE as the number of nodes increase, although the number of PSNE is still a very small fraction of the maximum potential number.

\begin{figure}[h]
\centering
\includegraphics[width=4in]{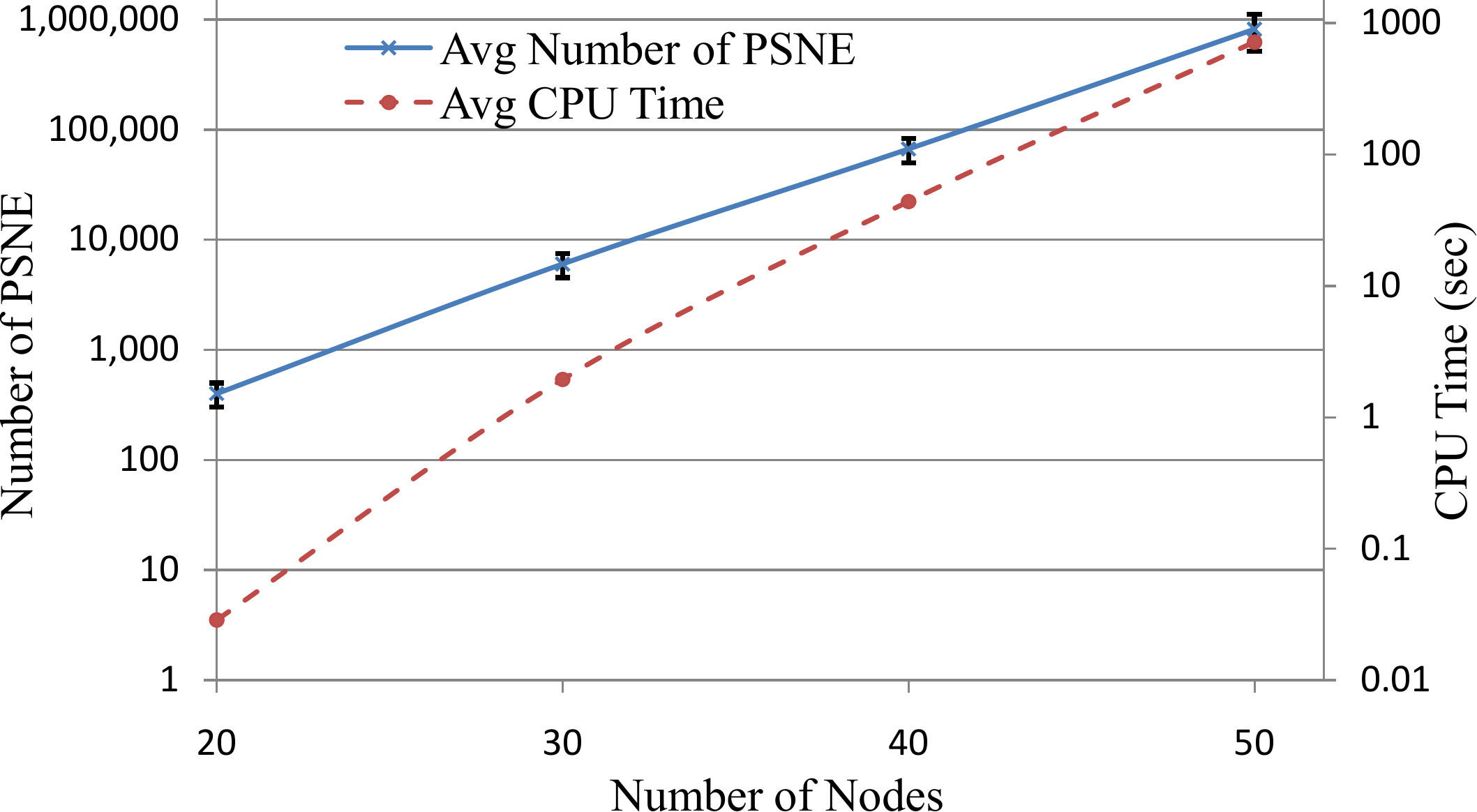}
\caption{PSNE computation on preferential-attachment LIGs ($y$-axis is in $\log$ scale). The vertical bars denote 95\% confidence intervals.}
\label{fig:plot_pref}
\end{figure}

\subsection{Illustration: Supreme Court Rulings}
We used our model to analyze the influence among the justices of the
U.S. Supreme Court. The Supreme Court of the United
States (SCOTUS),~\footnote{\url{http://www.supremecourtus.gov}} is the highest
federal court of the judicial branch of the government. (The other
branches of government are the executive branch, lead by the President, and the
legislative branch, represented by Congress.)  The SCOTUS is the main
interpreter of the Constitution and has final say on the
constitutionality of any federal law created by the legislative branch
or any action taken by the executive branch. It consists of nine
justices---a chief justice and eight associate justices. We chose to
study the SCOTUS in the context of influence games because this is an application domain where the strategic aspects of influence seem of prime importance.

There are two distinctive features that make our approach particularly
suitable to the SCOTUS domain. First, we can model the individual
outcomes (in this case, the decisions of the justices on each case) as
outcomes of a one-shot non-cooperative game (an LIG in our
case). Second, the physical interpretation of the diffusion process is
not as clear in this setting as it is in applications like viral marketing.

\subsubsection{Data}
We obtained data from the Supreme Court Database.~\footnote{\url{http://scdb.wustl.edu/}}
Although the database captures fine-grained details of the cases, we
only focused on the variable varVote. 
Again, the votes of the justices are not simple yes/no instances. Instead, each vote can have eight distinct values. However, for practical purposes, we can attach a simple yes/no interpretation to the values of the votes, as shown in Table~\ref{tab:InterpretationOfVotes}.

\begin{table}[!h]
	\centering
		\begin{tabular}{ c p{7cm} c }
			varVote	& Original Meaning	& Our Interpretation \\
			\hline
			\hline
			1	&	Voted with majority	& Yes \\
			2 &	Dissent							& No \\
			3	&	Regular concurrence	&	Yes \\
			4	& Special concurrence	& Yes \\
			5 &	Judgment of the Court	& Yes \\
			6 &	Dissent from a denial or dismissal of certiorari, or dissent from summary affirmation of
			an appeal	(Interpreted as absent from voting in final outcome) & Majority\\
			7 &	Jurisdictional dissent	(Interpreted as absent from voting in final outcome) & Majority\\
			8 &	Justice participated in an equally divided vote	& ---		\\	
		\end{tabular}
	\caption{Interpretation of Votes}
	\label{tab:InterpretationOfVotes}
\end{table}

In Table~\ref{tab:InterpretationOfVotes}, ``majority'' in the third column signifies that we interpreted the corresponding justice's vote as yes or no, whichever occurs most among the other justices. Also, among the natural courts we studied, we did not encounter voting instances where varVote has a value of 8.

We now present our study of the natural court (with timeline 1994--2004) comprising of Justices 
WH Rehnquist,
JP Stevens,
SD O'Connor,
A Scalia, 
AM Kennedy,
DH Souter,
C Thomas,
RB Ginsburg, and
SG Breyer.

\subsubsection{Learning LIG}
The data for the above natural court consists of 971 voting instances
(each voting instance consists of the votes of all nine
justices). Many instances (i.e., voting patterns) appear repeatedly in
the data set, of course. For example, the most repeated instance
consists of all the justices voting yes, occurring 438 times. The second
most repeated instance, occurring 85 times, consists of five of the
justices, namely, Justices Scalia, Thomas, Rehnquist, O'Connor, and
Kennedy voting yes, and the remaining justices voting no. We used
$L_2$-regularized logistic regression (simultaneous classification) to
learn an LIG from this data. The influence factors and the biases of
the learned LIG appear in tabular form
in Appendix~\ref{app:exp}. Figure~\ref{fig:LIG} shows a pictorial representation of the same LIG.

\begin{figure}[!h]
	\centering
		\includegraphics[width=4in]{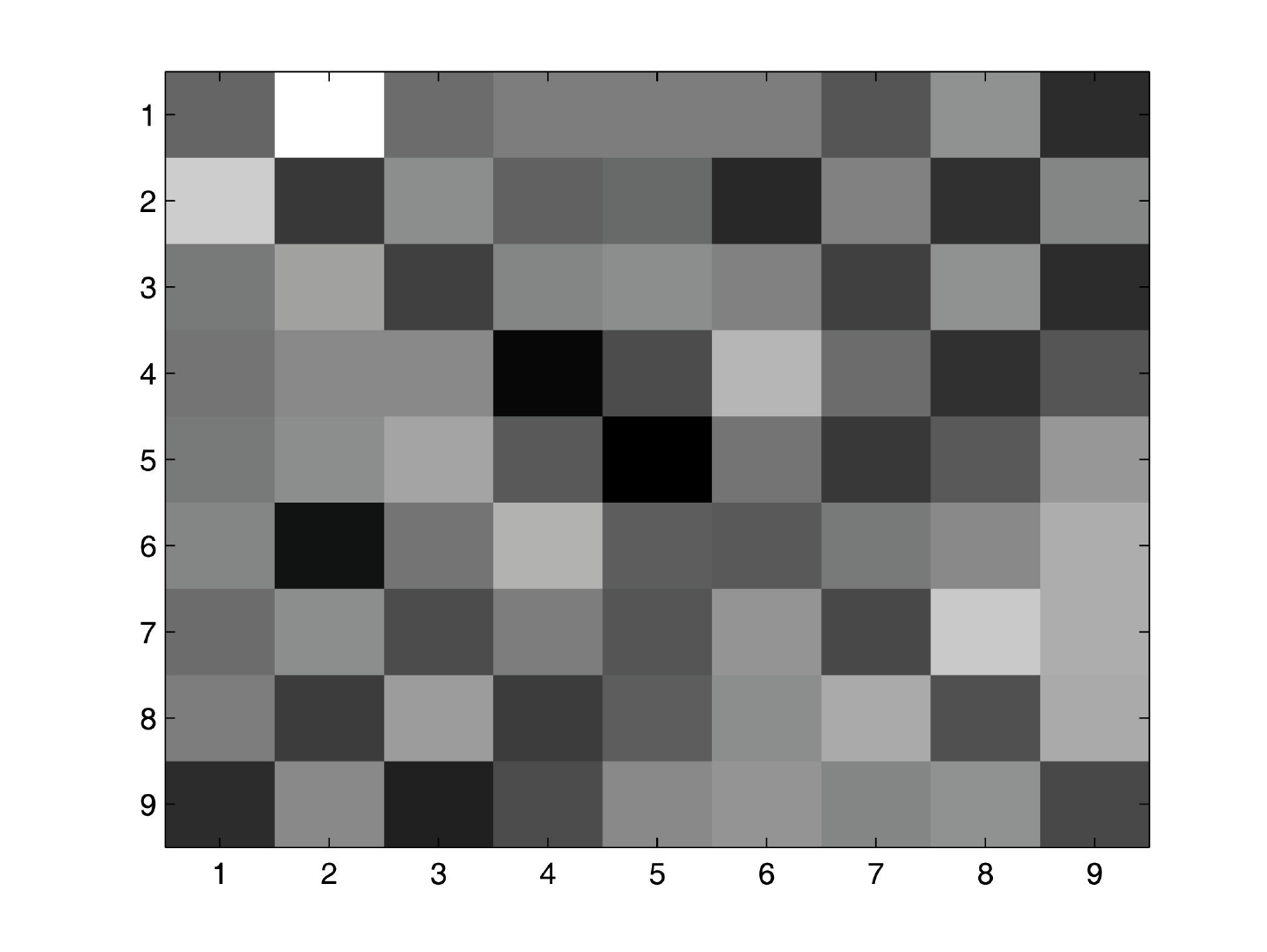}
	\caption{Pictorial representation of LIG learned from data---the non-diagonal elements represent influence factors and the diagonal elements biases. The numbering of the players (from $1$ to $9$) corresponds to the justices in this order: 
Justices A Scalia, 
C Thomas,
	WH Rehnquist,
SD O'Connor,
AM Kennedy,
SG Breyer,
DH Souter,
RB Ginsburg, and
JP Stevens. The darker the color of a cell, the more negative the
corresponding number. For example, the most negative number
($-0.2634$) occurs in cell $(5,5)$ (i.e., the bias of Justice
Kennedy). The most positive number ($0.4282$) occurs in cell $(1,2)$
(i.e., the influence factor from Justice Scalia to Justice Thomas) and
the number closest to zero is $0.001$ in cell $(2,4)$ (i.e., the
influence factor from Justice Thomas to Justice Kennedy).	}
	\label{fig:LIG}
\end{figure}

The learned LIG represents 589 of the 971 voting instances as PSNE. As
expected, it represents the frequently repeated voting instances (such
as the ones mentioned above). Figure~\ref{fig:justice_graph} shows a
graphical representation of the LIG. We clustered the nodes based on
the traditional perception that Justices Scalia, Thomas, Rehnquist,
and O'Connor are ``conservative;''  
Justices Breyer, Souter, Ginsburg, and Stevens are ``liberal;'' and Justice Kennedy is a ``moderate.'' As illustrated in Figure~\ref{fig:justice_graph}, negative influence factors occur only between players of two different clusters. 

\begin{figure}[!h]
	\centering
		\includegraphics[height=5in]{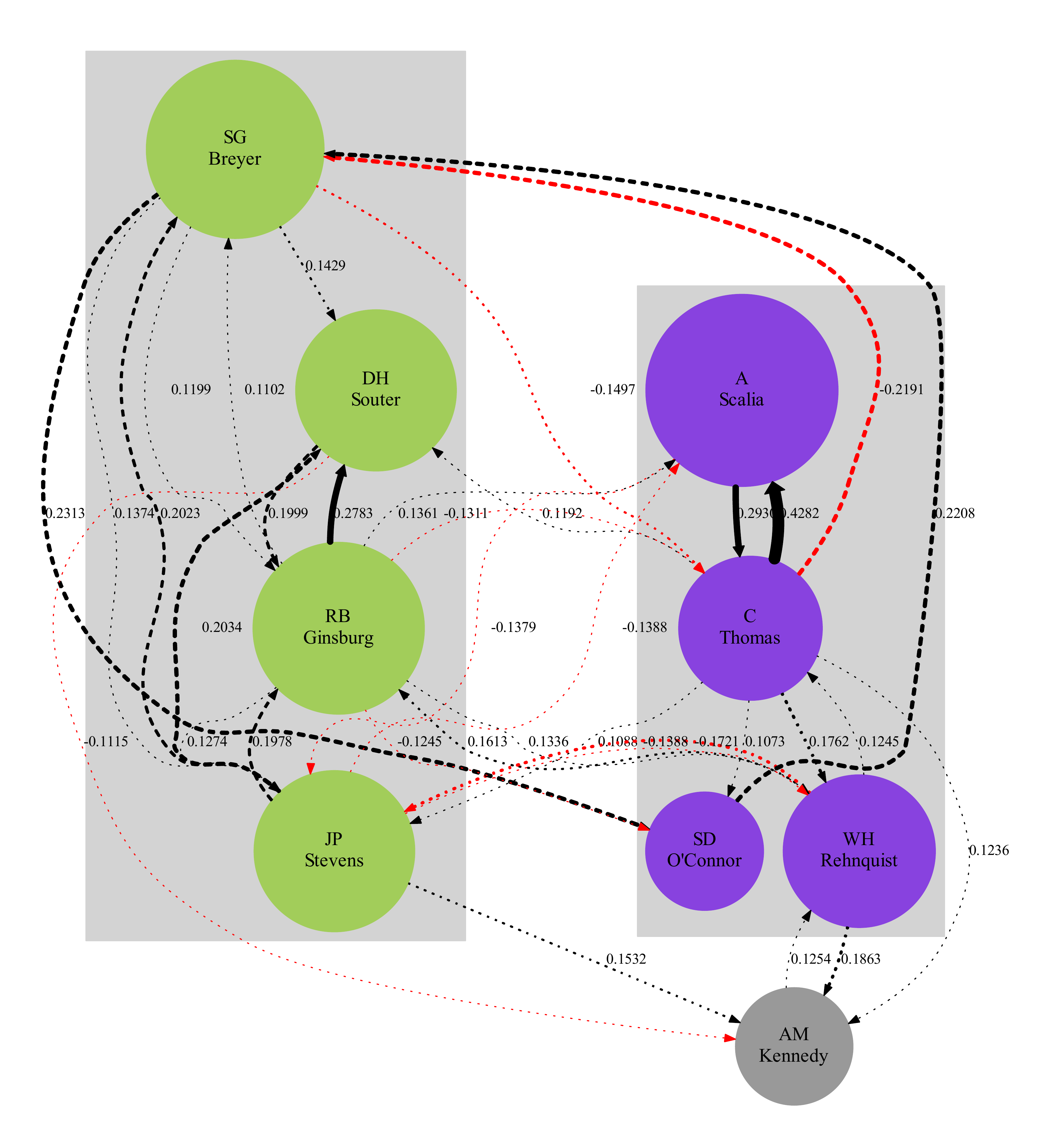}
	\caption{Graphical representation of LIG learned from
          data. Larger node sizes indicate higher thresholds (more
          stubborn). Black and red arcs indicate positive or negative
          influence factors, respectively. 
While the learned LIG is a complete graph, we are only showing approximately half of the arcs (i.e., we are not showing the ``weakest'' arcs in this graph).}
	\label{fig:justice_graph}
\end{figure}

\subsubsection{Most Influential Nodes}
Analysis of the PSNE of this LIG shows that there is a set of two nodes that is ``most influential'' with respect to achieving the objective of every justice voting yes. This most-influential set consists of one node from the set \{Scalia, Thomas\} and another one from the set \{Breyer, Souter, Ginsburg, Stevens\}. Furthermore, any one node from the set \{Breyer, Souter, Ginsburg, Stevens\} is alone most-influential with respect to achieving the objective of a 5-4 vote mentioned above (i.e., the second most repeated instance in the data).

\subsection{Illustration: Congressional Voting}
We further illustrate our computational scheme in another real-world
scenario---the U.S. Congress---where the strategic aspects of the
agents' behavior are also of prime importance. We particularly focus
on the U.S. Senate, which consists of 100 senators; two senators for each of the 50
U.S. states. Together, they form the most important unit of the
legislative branch of the U.S. Federal Government.

We first learned the LIGs among the senators of the 101st and the 110th U.S. Congress~\citep{honorio_and_ortiz13}.
The 101st Congress LIG consists of 100 nodes, each representing a senator, and 936 weighted arcs among these nodes. 
On the other hand, the 110th Congress LIG has the same number of
nodes, but it is a little sparser than the 101st one, having 762
arcs. In these LIGs, each node can play one of the two actions: $1$
(yes vote) and $-1$ (no vote). Figure~\ref{fig:lig_senate_110} shows a
bird's eye view of the 110th Congress LIG, while
Figure~\ref{fig:lig_senate_110_zoom} shows a magnified part of it.

\begin{figure}[!h]
	\centering
		\includegraphics[width=5in]{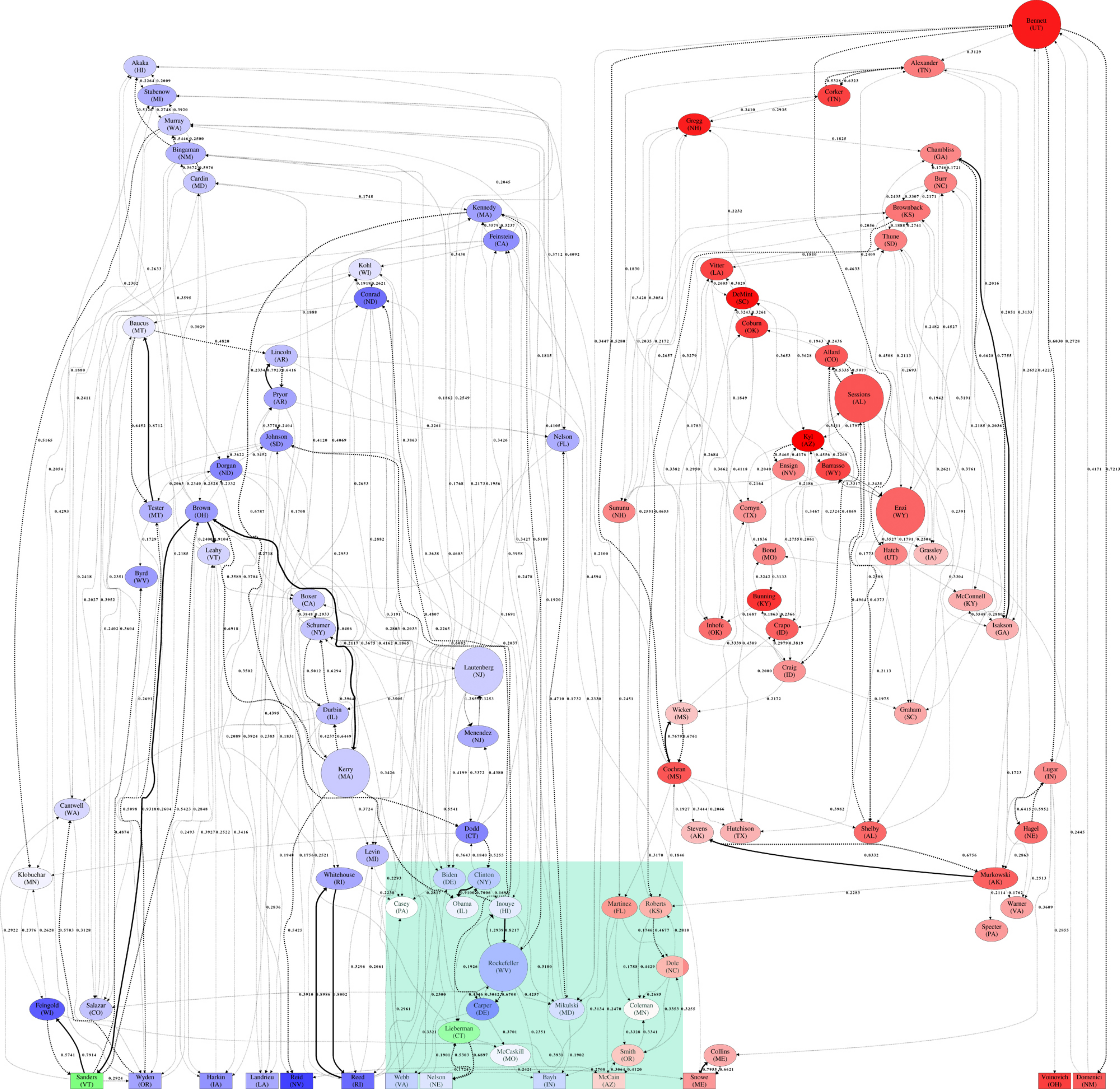}
	\caption{LIG for the 110th U.S. Congress: darker color of nodes represent higher threshold (more stubborn); thicker arcs denote influence factors of higher magnitude (only half of the original arcs with the highest magnitude of influence factors are shown here);
	circles denote most-influential senators;
rectangles denote cut nodes used in the divide-and-conquer
algorithm. Figure~\ref{fig:lig_senate_110_zoom} shows the shaded part for better visualization.}
	\label{fig:lig_senate_110}
\end{figure}

\begin{figure}[!h]
	\centering
		\includegraphics[width=4.5in]{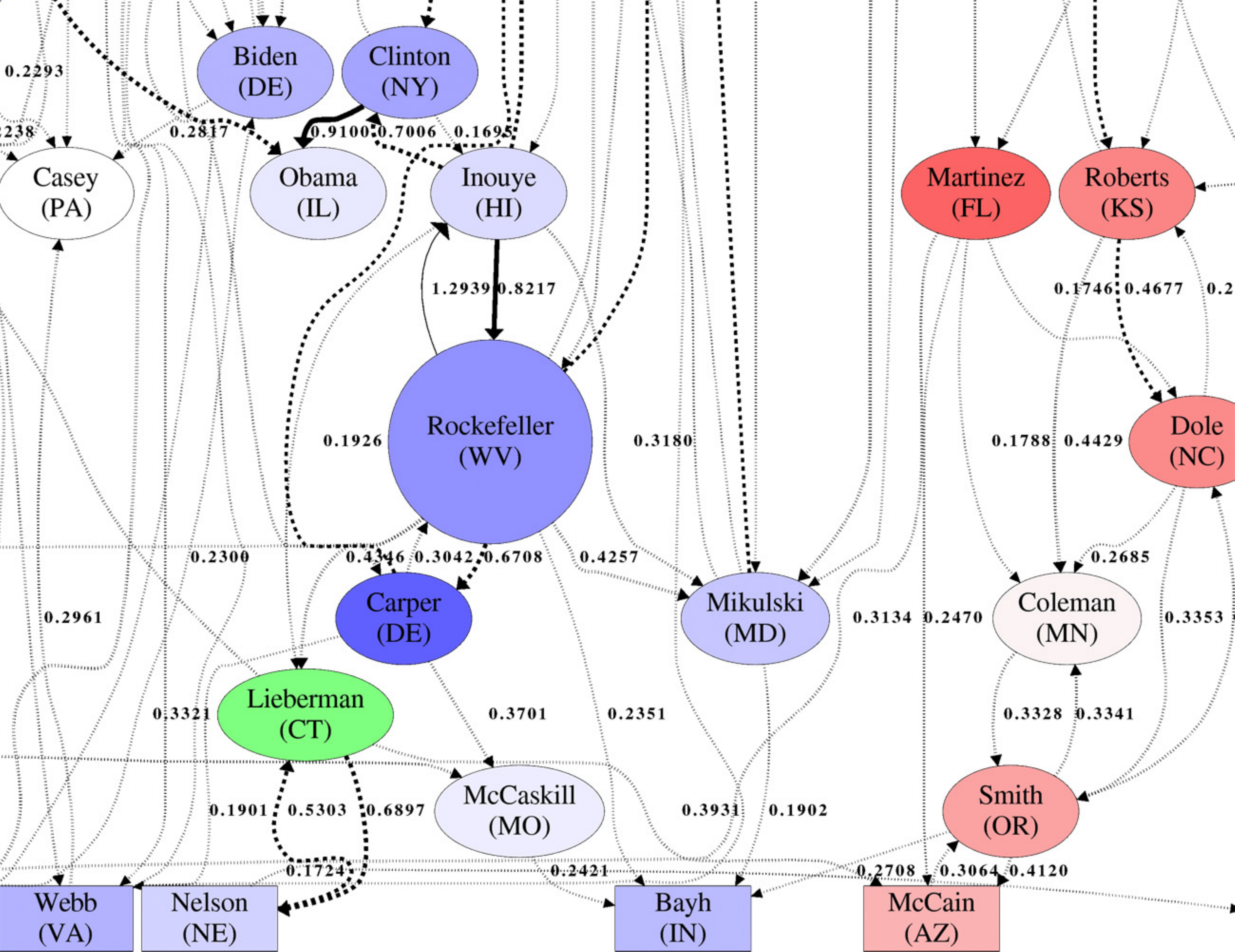}
	\caption{A part of the LIG for the 110th U.S. Congress: blue nodes represent Democrat senators, red Republican, and white independent; darker color of nodes represent higher threshold (more stubborn); thicker arcs denote influence factors of higher magnitude;
	circled node (Senator Rockefeller) denotes one of the most influential senators;
rectangles at the bottom denote cut nodes used in the divide-and-conquer algorithm.}
	\label{fig:lig_senate_110_zoom}
\end{figure}

First, we applied the divide-and-conquer algorithm that exploits the
nice separation properties of these LIGs, to find the set of all PSNE
(we precompute the whole set of PSNE for convenience; as discussed earlier, counting alone is sufficient).
We obtained a total of 143,601 PSNE for the 101st Congress graph and 310,608 PSNE for the 110th. Note that the number of PSNE in these games is extremely small (e.g., a fraction 
$2.45\times10^{-25} \approx 2^{-81.76}$ for the 110th Congress) relative to the maximum possible $2^{100}$ joint actions. Regarding computation time, solving the 110th Congress using the divide-and-conquer approach takes about seven hours, whereas solving the same without this approach, simply relying on the backtracking search, takes about 15 hours on a modern quad-core desktop computer.

Next, we computed the most influential senators using the
approximation algorithm outlined 
at the end of Section~\ref{SecInfluentialNodes}.
We obtained a solution of size five for the 101st Congress graph, which we verified as an optimal solution. This solution consists of Senators Rockefeller (Democrat, WV), Sarbanes (Democrat, MD), Thurmond (Republican, SC), Symms (Republican, ID), and Dole (Republican, KS). Interestingly, none of the maximum-degree nodes were selected.
Similarly, the six most influential senators of the more recent 110th Congress (January 2007--January 2009) are
Kerry (Democrat, MA), Bennett (Republican, UT), Sessions (Republican, AL), Enzi (Republican, WY), Rockefeller (Democrat, WV), and Lautenberg (Democrat, NJ).

We also applied our technique to the more recent 112th Congress, using
voting data from May 9, 2011 to August 23, 2012. A set of the most
influential senators with respect to the outcome of everyone voting
``yes'' consists of Senators Reid (Democrat, NV), Inouye (Democrat,
HI), Johnson (Republican, WI), Sanders (Independent, VT), Hagan
(Democrat, NC), Collins (Republican, ME), Crapo (Republican, ID),
DeMint (Republican, SC), Reed (Democrat, RI), and Barrasso
(Republican, WY). Note that the set of most-influential senators in
the 112th Congress consists of 10 senators, whereas it consists of
only six senators in the earlier Congresses that we studied. This
implies that, according to our model, for the 112th Congress, we now
need a broader group of ``influencing'' senators to lead everyone to a
consensus. This is consistent with the contemporary perception of
polarization in Congress, which has been highlighted both in the
mainstream media and formal research studies in recent times \citep{polarization2013}.

Besides identifying the most influential senators with respect to
passing a bill (e.g., the outcome of everyone voting ``yes''), we can
also apply our model to study the other extreme of \emph{not} passing
a bill (e.g, the outcome of everyone voting ``no''). According to our
model, we can achieve the latter outcome in the 112th Congress if the following 10 senators choose to vote ``no'': Senators Nelson (Democrat, FL), Cardin (Democrat, MD), Klobuchar (Democrat, MN), Reed (Democrat, RI), Murkowski (Republican, AK), Moran (Republican, KS), Vitter (Republican, LA), Enzi (Republican, WY), Crapo (Republican, ID), and DeMint (Republican, SC).

Later on, we will show that besides the desired outcomes of everyone
voting ``yes'' or everyone voting ``no,'' our model and approach
extend to studying even more general outcomes, such as breaking filibusters or preventing clotures.

\subsubsection{Adapting a
Popular Diffusion-Based
  Algorithm as Heuristic for 
Most-Influential-Nodes
Problem
  Instance 
in 
LIGs}
Our one-shot noncooperative game-theoretic approach is fundamentally
different from the diffusion approach, both
syntactically (i.e., mathematical foundations) and semantically (e.g.,
interpretation of objectives, nature of problem formulations, and applicable domains and contexts).
Of course, the most obvious key difference is that our \emph{solution
concept} does not consider dynamics. We concentrate on ``end-state''
behavior and characterize \emph{stable}
outcomes using the
 notion of PSNE, which is ``static'' by nature.~\footnote{Note that PSNE may,
 or may not, be reachable by (possibly networked) best-response dynamics.}
 As we reviewed in Section~\ref{sec:diffusion}, many of the diffusion-based approaches
 lead to the outcomes that are not stable.~\footnote{The reader should
   bear in mind that the intention
   behind our statements
   contrasting the approaches/models is simple: to highlight some
   of the differences. We do not mean to imply that one approach or
   model is ``better'' than the other: they are just different and
   largely incomparable, each with its own
   ``pros and cons'' depending on the problem, context, or domain of
   interest. In other words, each has its own place.}
 
Yet, out of purely scientific curiosity and suggestions/feedback about
our work, we present preliminary results based on a \emph{heuristic} for the problem of most
influential nodes in our context (i.e., LIGs) that
we 
designed by \emph{adapting} the arguably most popular and simple greedy-selection
algorithm for identifying ``most influential'' nodes developed
specifically for the cascade
model of diffusion~\citep{kleinberg07}. The simplicity of the
original algorithm in the diffusion context facilitates the adaptation
to our context.
Note however that this adaptation is exclusive to a \emph{very
  specific instance of our general problem formulation}: identifying ``most influential nodes'' \emph{with
 respect to the goal being maximizing the number of $+1$'s in the
 PSNE} (see Definition~\ref{def:probform} in
Section~\ref{sec:probform} for a definition and discussion of our problem
formulation and the role that the ``objective function'' $g(.)$ plays
in the formulation.)


We realize that, inspired mostly by the \emph{cascade model} described in 
 \citet{kleinberg07}, the literature on diffusion models for and
 approaches to problems
 related to ``influence maximization'' and ``minimization'' has
 increased considerably since the early groundbreaking work of
 \citet{kempe03} and particularly over the last few years.~\footnote{See, for example, \citet{budak2011,chen2011,he2012} and the
 references therein, for variations of the basic model, problem and techniques. A survey by~\citet{guille13} also contains
 latest developments.}
 We reemphasize that all that work uses a diffusion-based approach and thus fundamentally differs
 from the work presented here, both in
 terms of the problem definition and the solution approach. As a
 result, comparing the result of such disparate approaches is not
 scientifically meaningful, in our view. Furthermore, it would be nontrivial and out-of-scope
 for this paper to \emph{adapt} the techniques used or proposed to
 solve their specific variations of the ``influence maximization'' or
 ``minimization'' problem in a diffusion-based setting, to employ as
 \emph{heuristics} to solve our problem; just as we would not expect
 adaptations of our techniques as heuristics to solve \emph{their} problems.
 
 Therefore, \emph{in this paper, we only 
 adapt the popular, greedy algorithm used to find the ``most
 influential individuals'' in the cascade model, as described in~\cite{kleinberg07},
 to use as a heuristic to solve what would be the equivalent/analogous instance
 of that problem in the more general
 problem formulation in our setting} presented in
 Section~\ref{sec:probform}. That greedy algorithm, originally meant for the
 particular cascade model of diffusion, is arguably the most
 simple and fundamental of the work in that area. 
The resulting heuristic corresponds
to a very ``controlled'' version of (possibly networked) \emph{best-response dynamics} on
the influence game of interest starting from specific initial conditions. 
 
\paragraph{Diffusion-based Heuristic}

We first note that we use the same
influence factors and thresholds of the previously learned LIGs to
perform both of these analyses. In particular, for the
diffusion-based heuristic, at each iteration, we select a node $u$ that achieves the
``maximum spread'' of action $1$, by which we mean the node $u$ that leads to the largest
(marginal difference in the) number of other players in the network with action $1$ after
performing best-response dynamics in the LIG; force $u$ to adopt action $1$; and let
all but the previously selected nodes modify their actions as best
responses to $u$'s adoption of action $1$. We repeat this process
until every node adopts action $1$.~\footnote{At the end, we also
  perform a post-processing step, where we try to remove one of the
  selected nodes to test if the remaining nodes are still most-influential.} 

\emph{Note that because of negative influence factors,
  cycling may occur and this procedure may never come to a stop.}
However, in our case, even in the presence of negative influence
factors, we did not encounter such cycling. 

Furthermore, it is well
known that the general greedy-selection recipe just described produces a provable approximation
algorithm in the cascade model with submodular spread
function~\citep{kleinberg07}. But given the presence of negative influence factors
in our LIG
and, more importantly, the fact that we do not even know what
the corresponding ``submodular spread function'' is or means in our setting,
\emph{this claim of approximation
  guarantee essentially vanishes or is irrelevant in our setting.}


\paragraph{Results using Diffusion-based Heuristic}

We can visualize all possible choices of the most influential nodes that an algorithm can make as a directed acyclic graph, as shown in Figures~\ref{fig:LIG_most_infl_110} and~\ref{fig:diffusion_most_infl_110}.

\begin{figure}[!h]
	\centering
		\includegraphics[width=5in]{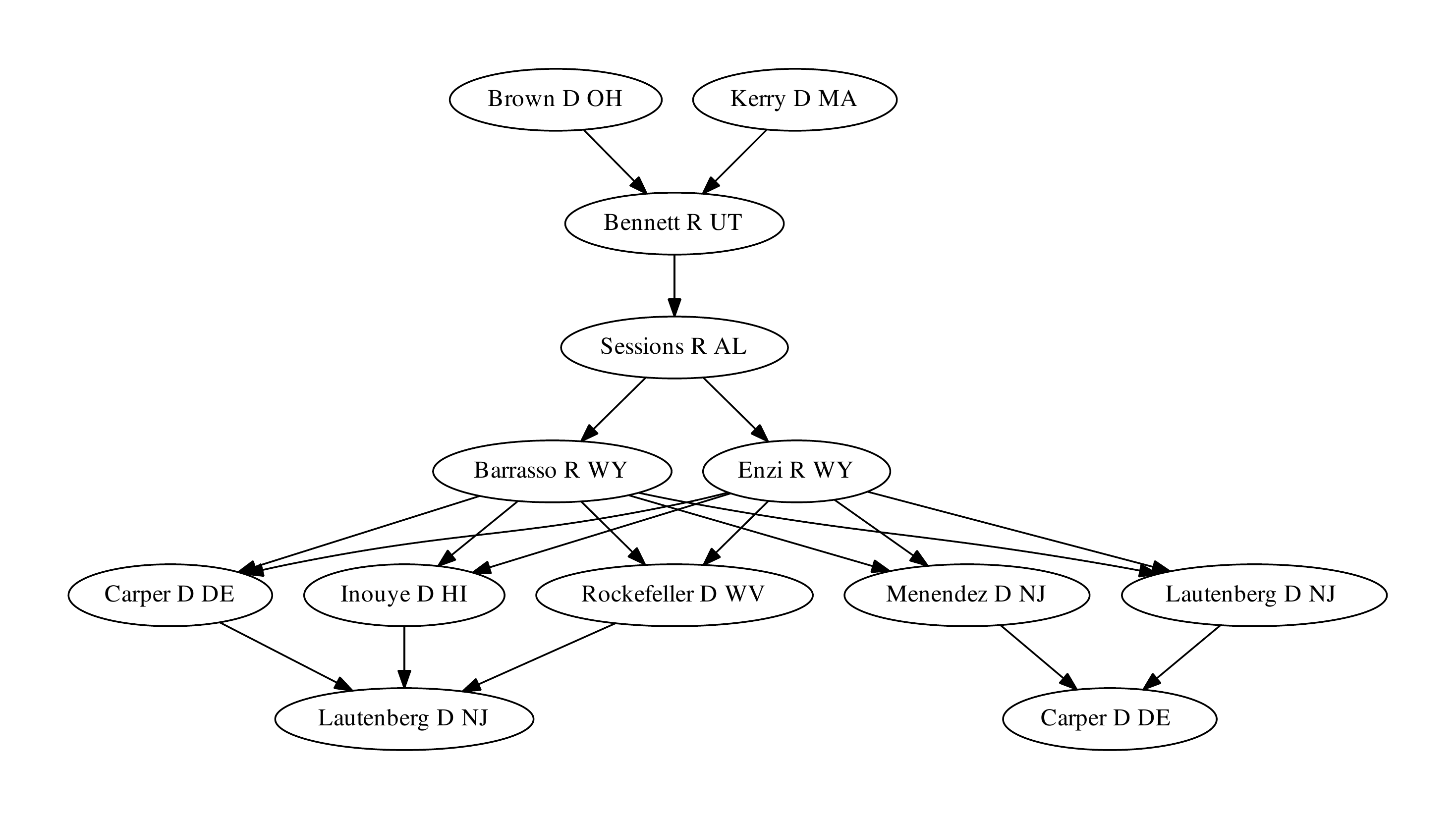}
	\caption{Most-influential nodes in our setting. This directed acyclic graph (dag) illustrates all possible options for node selection that our approximation algorithm considers. A source node represents a node selected
in the first iteration and a sink node represents a node selected in the last step.
Any directed path from a source to a sink represents a sequence of
nodes selected in successive iterations by our algorithm.  All nodes in the same level and having the same parent, are tied in an iteration of the algorithm. Also note that the same node can appear in different paths of the dag at different levels.}
	\label{fig:LIG_most_infl_110}
\end{figure}

\begin{figure}[!h]
	\centering
		\includegraphics[width=4in]{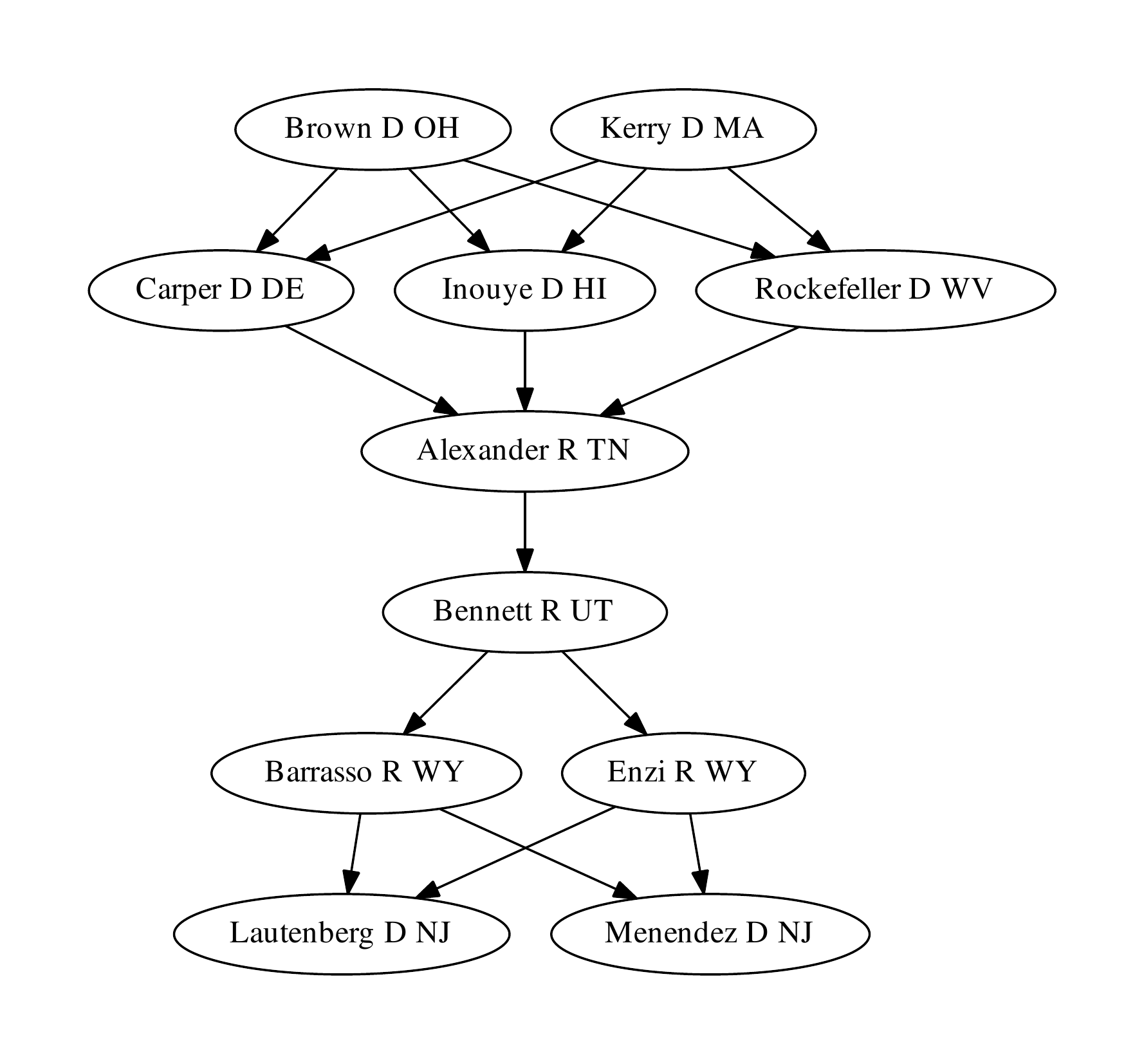}
	\caption{Most-influential nodes as output by the
          diffusion-based heuristic. Each
          directed path from a source to a sink represents a sequence
          of nodes that are most-influential, as determined by
          applying the variant of the standard greedy algorithm used for traditional diffusion models to identify the
          most influential nodes (described in the main body).}
	\label{fig:diffusion_most_infl_110}
\end{figure}

Although Figure~\ref{fig:LIG_most_infl_110} looks more complicated
than Figure~\ref{fig:diffusion_most_infl_110} (due to the appearance
of the same node in different source-sink paths of the dag at
different levels), comparing them we find that, not only a set of six
nodes are most-influential in both cases, but also most of the nodes
are common between these two distinct approximation algorithms. More remarkably,
some of these common nodes are selected at the same iteration in both
algorithms. The obvious question arises, is a set of most-influential nodes in the LIG setting, as output by our approximation
algorithm, also a possible output of the
diffusion-based heuristic? We have exhaustively tested all possible sets
of the most influential nodes (Figure~\ref{fig:LIG_most_infl_110}) and
settled the answer in the negative for each set. Interestingly, if we
add the ``Alexander R TN'' node to \textit{any} of the most
influential sets in the LIG setting, the resulting set can be the
output of the diffusion-based heuristic.
The apparent similarity in
results between the output of our approximation algorithm and the
diffusion-based heuristic gives rise to an intriguing open question as
to the characterization of the exact connection between these two
seemingly different algorithms for identifying this particular
instance (i.e., all players play $1$) of the most-influential problem in our setting.. This open question is beyond the scope of this paper.

\subsection{Filibuster}
Beyond predicting stable behavior and identifying the most influential
nodes in a network, we can also use our model to study other
interesting aspects of a networked population. One example is the
filibuster phenomenon in the U.S. Congress, where a senator uses his
or her right to hold floor for an indefinite time in an effort to
delay the passing of a bill. The procedure of ``cloture,'' which
refers to gathering a majority of at least 60 votes among the current
100 senators, can break a filibuster. However, not every possible
cloture scenario of 60 or more ``yes'' votes may be a \emph{stable}
outcome due to influence among the senators. We call the set of
such outcomes that are indeed stable in the sense of PSNE a \emph{stable cloture set}. 

An interesting general question is whether there exists a small
coalition of senators that can break filibusters. We can think of
preventing a filibuster from the democratic or the republican
perspective (i.e., favoring the respective party). Similarly, we can
also ask what in some sense is the opposite question: is there a small
coalition of senators that can \emph{prevent clotures} by voting
``no?'' First, let us us formally define the problem for breaking
filibusters; the formulation of the problem for preventing clotures is
similar.

\subsubsection{Problem Formulation}
Given the set $\mathcal{S}$ of all stable outcomes (i.e., PSNE) and a subset of $\mathcal{C}$ of these stable outcomes, find a minimal set $T$ of players such that 
\begin{align*}
T & \in \argmax_{V \subset \{1, ..., n\}} 	\{ \left|
  P_{\mathcal{S}}(V)\right| \, \mid \,   P_{\mathcal{S}}(V) \subseteq \mathcal{C}\},
\end{align*}
where $P_{\mathcal{S}}(V)$ is the set of PSNE-extensions of the nodes in $V$ playing action 1, i.e.,
\begin{align*}
	P_{\mathcal{S}}(V) & = \{  \mathbf{x} \, \mid \, \mathbf{x} \in \mathcal{S},\  x_i = 1 \ \forall i \in V\}.
\end{align*}

In words, $\mathcal{C}$ is the stable cloture set, consisting of
stable outcomes that can prevent a filibuster (i.e., every PSNE in
$\mathcal{C}$ contains at least 60 ``yes'' votes and thus, can induce
a cloture). When we consider the notion of preventing a filibuster in
favor of a specific party, we define $\mathcal{C}$ consisting of exactly those PSNE that contain 60 or more ``yes'' votes (thereby representing cloture scenarios) and in addition, are supported (through ``yes'' votes) by the majority of the senators affiliated with that party. Other definitions are possible, as long as the stable cloture set $\mathcal{C}$ is well-defined. 

Now, we would like to select a \emph{minimal} set of senators such
that $\mathcal{C}$ contains the set $P_{\mathcal{S}}(V)$ of the
PSNE-extensions of these senators' voting ``yes'' (i.e., their voting
``yes'' can only lead to a stable cloture scenario, thereby preventing
a filibuster). In addition, we would also like to achieve a maximum
\emph{stable-cloture cover}; that is, we wish to achieve the maximum
possible set $P_{\mathcal{S}}(V)$ so that we are able to capture as
many of the stable cloture scenarios as possible. In this formulation,
we set up as the objective to select a minimal, not minimum, set of
senators; this keeps the formulation simple by avoiding
bicriteria optimization (minimum set of senators vs. maximum
stable-cloture cover). Further note that adding an extra senator to
the set of selected senators can only reduce the stable-cloture cover
because of additional constraints.

The problem formulation above guarantees a nonempty solution $T$ if there exists some PSNE in $\mathcal{C}$ that is not ``dominated'' by any PSNE in $\mathcal{S} \setminus \mathcal{C}$. Here, a PSNE $\mathbf{x}$ dominates another PSNE $\mathbf{y}$ if for every $i$, $y_i = 1 \implies x_i = 1$.

\subsubsection{A Heuristic}
We can modify the approximation algorithm for identifying the most
influential nodes, presented at the end of Section~\ref{SecInfluentialNodes}, to design a heuristic for the problem formulated
above in the following way. At each iteration, we select a node such that adding it to the set of already selected nodes minimizes the number of PSNE-extensions of the selected nodes playing $1$ that are in $\mathcal{S} \setminus \mathcal{C}$. If there is a tie among several nodes in this step, then we can store these nodes in order to explore all solutions that this heuristic can produce. We stop when the above number of PSNE-extensions within $\mathcal{S} \setminus \mathcal{C}$ goes to 0. We then perform a minimality test by excluding nodes from the selected set of nodes and testing whether the resulting set can be a solution. Note that although we can select the ``best'' solution (in terms of the coverage of $\mathcal{C}$) among the ones found due to ties, this heuristic does not guarantee an approximation of the maximum coverage of $\mathcal{C}$.

\subsubsection{Experimental Results on the 110th Congress}
For the 110th Congress, $\mathcal{C}$ consists of 15,288 and 10,029 stable cloture scenarios (i.e., PSNE) with respect to the democratic and republican parties, respectively. Overall, the total number of stable cloture scenarios is 15,595, and most of these are common in both democratic and republican cases. 

Regarding breaking filibusters with respect to the democratic party,
the best solutions found by the above heuristic are Senators \{Brown
(D, OH), Roberts (R, KS), and Graham (R, SC)\} and \{Kerry (D, MA),
Roberts (R, KS), and Graham (R, SC)\}, both of which cover 1,500 of
the 15,288 stable cloture scenarios. The optimal solutions found by a
brute-force procedure are Senators \{Brown (D, OH),  Craig (R, ID),
and Dole (R, NC)\} and \{Kerry (D, MA), Craig (R, ID), and Dole (R,
NC)\}, both covering  1,728 stable cloture scenarios. With respect to
the republican party, the heuristic gives the following two solutions as the best, each covering 40 of 10,029 stable cloture scenarios: Seantors \{Brown (D, OH), Bennett (R, UT), and  Gregg (R, NH)\} and Senators \{Kerry (D, MA), Bennett (R, UT), and  Gregg (R, NH)\}. The optimal solution for this case is Senators \{Bennett (R, UT),  Conrad (D, ND),  and  Sessions (R, AL)\}, which covers 138 stable cloture scenarios.

We now consider the case of preventing cloture scenarios with respect
to the democratic party; that is, the majority of the senators belonging to the democratic party want to pass a bill, but cannot gather 60 votes due to some senators voting ``no.'' To find a small set of senators whose voting choice of ``no'' prevents cloture scenarios and potentially leads to filibusters, we adapt the above heuristic. With respect to the democratic party, there are 295,320 non-cloture stable outcomes (where a majority of the democratic senators voted ``yes,'' yet there are fewer than 60 ``yes'' votes in total). The output of our heuristic matches the optimal set in this case, which covers 9,681 non-cloture scenarios. If Senators McCain (R, AZ), McConnell (R, KY), Coburn (R, OK), and  Hutchison (R, TX) vote ``no'' then the set of PSNE is exactly the set of these 9,681 non-cloture scenarios. In that case, according to our model, a cloture can never take place in the event of a filibuster.


\subsubsection{Application of Another Diffusion-Based Heuristic}
We can once again try to adapt the popular greedy-selection algorithm
for identifying most-influential nodes in the cascade model in the
diffusion setting to work as another \emph{heuristic} for the filibuster problem
in our setting. In doing so, we encounter two notable obstacles that
highlight another difference between our approach and that based on diffusion. 

First, the
notion of stable-cloture cover is not well-defined in the diffusion
setting. The forward recursion mechanism central to diffusion models begins with a set of initial adopters (those senators selected to vote ``yes'' in our case) and propagates the effects of behavioral changes throughout the network until it reaches a steady state (i.e., no change occurs). However, this mechanism focuses on how the dynamics of behavioral changes evolves, not on the count of steady states that are consistent with a given set of players being among the adopters (not necessary early adopters), which is required for stable-cloture covers.
In contrast, the stable-cloture cover is well-defined in our approach. 


Second, and most important, even if we allow reversals of actions due to negative influence factors, forward recursion may produce an \emph{unstable} outcome (i.e., not a PSNE). Although Granovetter's original model precludes this by requiring the initial adopters to have a threshold of 0 \citep{granovetter78}, subsequent development allows forward recursion to start with a set of initial adopters whose thresholds are not necessarily 0 \citep{kleinberg07}. Next, we illustrate this point using our experimental results.

Per the discussion in the last two paragraphs, in our experimental setting regarding the
diffusion-based heuristic, which, once again, result from \emph{our adaptation} of the popular
greedy-selection approximation algorithm used for the cascade model in diffusion
settings, we omit the notion of maximum stable-cloture cover and
thereby forgo the measure of goodness of a solution. We only
concentrate on finding a set of initial adopters that can drive the
forward recursion process to \emph{some} stable cloture scenario
(i.e., a PSNE in $\mathcal{C}$). 

In the following paragraph, we outline our \emph{diffusion-based
  heuristic specifically adapted for the
  filibuster problem.}

For $k = 1, 2, \ldots$, do the following. For all possible sets of $k$
senators, start forward recursion with these $k$ senators forced to
play $1$ all the time and other senators initially playing $-1$ (but are
permitted to switch between $1$ and $-1$ later on). When a steady
state is reached, verify if there are at least $60$ senators who are
playing $1$ in this state. If this is the case, then further verify if
the $k$ senators who are forced to play $1$ are indeed playing their
best response with respect to others' actions, which is the condition
for the cloture scenario being stable. Stop iterating over $k$ once
you find stable cloture scenarios.

In our problem instances, which contain both positive and negative influence factors, it is very much possible that forward recursion oscillates indefinitely. However, that did not happen in our experiments. We tried all possible sets of $k \le 3$ initial adopters, but failed to reach any cloture scenario (stable or unstable). We then tried all possible quadruplets of initial adopters. With respect to democratic party, 1,189 different quadruplets led the forward recursion process to a cloture scenario, but  nearly half of these quadruplets (536 to be exact) led to unstable outcomes. Essentially, those unstable outcomes were due to some of the initial adopters not playing their best response in voting ``yes''---all other nodes were indeed playing their best response (otherwise, the process would not terminate).


Therefore, beyond just emphasizing the stability of an outcome, the
approximation algorithm based on our
approach also captures certain phenomena that the heuristic based on
the traditional approach
seems unable to do. As stated earlier, of course, we would need more research to better
understand this discrepancy, and the degree to which it could be
reduced, as well as the effectiveness and potential for improvement of
the diffusion-based heuristic. Such research is beyond the scope of
this paper and remains open for future work.


\section{Conclusion}
In this paper, we studied influence and stable behavior from a new game-theoretic perspective. To that end, we introduced a rich class of games, named influence games, to capture the core strategic component of complex interactions in a network. We characterized the computational complexity of computing and counting PSNE in LIGs. We proposed practical, effective heuristics to compute PSNE in such games and demonstrated their effectiveness empirically. Besides predicting stable behavior, we gave a framework for computing the most influential nodes and its variants (e.g., identifying a small coalition of senators that can prevent filibuster). We also gave a provable approximation algorithm for the most-influential-nodes problem. 

Although our models are inspired by decades of research by sociologists, at the heart of our whole approach is abstracting the complex dynamics of interactions using the solution concept of PSNE. This allowed us to deal with richer problem instances (e.g., the ones with negative influence factors) as well as to tread into new problem settings beyond identifying the most influential nodes. We conclude this paper by outlining several interesting lines of future work.


First, we leave several computational problems open. We showed that counting the number of PSNE even in a star-type LIG is \#P-complete, but does there exist an FPRAS for the counting problem? The computational complexity of indiscriminant LIGs, which we conjecture to be PLS-complete, is unresolved. Also, computing mixed-strategy Nash equilibria of LIGs, even for special types such as trees, remains an open question. 

Second, we can apply our models to the general setting of ``strategic interventions,'' where we study the effects of changes in node thresholds, connectivity, or influence factors, usually without the possibility of having corresponding behavioral data. The following is an illustrative example of it in the context of the 111th U.S. Congress. After the death of Ted Kennedy, who was a democratic senator from the state of Massachusetts, a republican named Scott Brown was elected in his place. Not only that it was Senator Brown's first appointment in Senate, he was also the first republican from Massachusetts to be elected to Senate for a long time. With no behavioral data available at that point of time, we can perform interventions using our model under various assumptions of thresholds, connectivity, and influence factors regarding Senator Brown, with the general goal of predicting stable outcomes and investigating the effects of the above interventions in various settings, such as filibuster scenarios or the setting of the most influential senators.

Another example of intervention, in the context of the Framingham heart study alluded in Section~\ref{sec:intro}, is the following. Suppose that we would like to implement a policy of targeted interventions in order to reduce smoking by some margin. Using our model, we can modify the thresholds of the selected targets and predict how it could affect the overall level of smoking.

Besides interventions, we can also use our model to analyze past happenings, such as the role of the bipartisan ``gang-of-six'' senators in leading the members of the two major parties to an agreement during the U.S. debt ceiling crisis.~\footnote{\url{http://en.wikipedia.org/wiki/United_States_debt-ceiling_crisis}}
We can use our model to find how influential the gang-of-six senators were as a group. One approach to this problem would be to first find the set of stable outcomes consistent with the gang-of-six senators voting ``yes'' and then to analyze what fraction of these stable outcomes has 60 or more ``yes'' votes (signifying the passing of the corresponding bill without any possibility of a filibuster).

  Our model of influence game can be considered as a step in the direction of modeling competitive contagion in strategic settings (see, for example, \cite{budak2011,competitivecontagion}). Here, one of the main challenges would be to formulate the competitive aspects of multiple ``campaigns'' without having to go through the usual network dynamics. In this general setting, we can ask questions such as who are the most influential individuals with respect to achieving a certain objective that favors one ``campaign'' as opposed to the other? Questions like these and many others shape the long-term goal of this research.

\section*{Acknowledgements}

We are sincerely grateful to the reviewers for their comments,
suggestions, and constructive criticisms. These not only improved this
paper significantly but also contributed to the PhD Dissertation of
M.T. Irfan \citep{irfanphd}, which also contains the work in this paper.

This article is a significantly enhanced version of an earlier
conference paper~\citep{Irfan_Ortiz_AAAI11}. Besides a more detailed
exposition that places this work in the context of existing ones in
sociology, economics, and computer science, we extended the technical
content in several ways. First, we made the connection between linear
influence games and polymatrix games. As a result, the computational
complexity results carry over to 2-action polymatrix games. Second, we
gave the complete proofs of the theoretical results here. In addition
to enhancing the theoretical part, we also extended the experimental
part substantially. First and foremost, we illustrated our approach to
influence in a new setting---the U.S. Supreme Court rulings. Given the
U.S. Supreme Court dataset, we first illustrated how we can learn an
LIG to model the potential strategic interactions among the Supreme
Court justices \citep{honorio_and_ortiz13}. We then applied the
schemes for PSNE computation and the identification of the most
influential individuals in this new setting. Second, we extended our
empirical study of the U.S. congressional voting by contrasting
the output of our approximation algorithm to that of a diffusion-based
heuristic we created by adapting the simple, greedy-selection
approximation algorithm used for the analogous problem in the cascade model. We also applied our approach to a new problem of preventing filibuster by a coalition of senators, which highlights the broad range of scenarios where our techniques can be applied.

This work was supported in part by NSF CAREER Award IIS-1054541.


\begin{appendix}

\section{On the Connection to Rational Calculus Models of Collective Action}
\label{app:conn}
The formal study of individual behavior in a collective setting
originally began under the umbrella of ``collective behavior'' in
sociology and social psychology. The classical treatment of collective
behavior views individuals in a ``crowd'' as irrational beings with a
lowered intellectual and reasoning ability. The proposition is that an
increased level of suggestibility among the individuals facilitates
the rapid spread of the homogeneous ``mind of the
crowd''~\citep{lebon_crowd_1897,park_burgess_1921,blumer_1939}. Herbert
Blumer's work, in particular, popularized the classical theory of
collective behavior well beyond academia and into such domains as
police and the armed forces \citep[p. 9]{mcphail_myth_91}. However,
this theory was subjected to much criticism primarily because it did
not study empirical accounts systematically.~\footnote{For instance, Blumer himself referred to this as a ``miserable job'' by sociologists \citep{blumer_1957}.}

In response to that, Clark McPhail undertook a massive effort,
spanning three decades, to record the behavior of individuals in
collective settings that he calls ``gatherings'' in order to
distinguish it from (homogeneous) ``crowds'' in collective behavior
(see \citet[Ch. 5, 6]{mcphail_myth_91} for a summary of his two-decade
study). His empirical accounts, stored in a range of media formats as
technology improved, reveal one common thing---that a gathering
consists of individuals with diverse objectives, who nevertheless
behave rationally and purposefully. To distinguish this purposive
nature of individuals from irrationality in the classical treatment,
he calls his study ``collective action'' and broadly defines it as
``any activity that two or more individuals take with or in relation
to one another'' \citep[p. 881]{mcphail_ency_2007}. In short,
collective action can be seen as the modern approach, as opposed to
the ``old'' (but not unimportant) approach of collective behavior
\citep[p. 14--15]{miller_cb_ca_2000}.~\footnote{A brief review of
  collective behavior and collective action literature is included in Appendix~\ref{app:rev} for interested readers.}

Many of the rational calculus or economic choice models that were originally proposed  for collective behavior, are now discussed under collective action due to the purposive nature of the individuals. Here, we will conduct a very narrow and focused review of the relevant literature in order to place our model in its proper context. Our review will be concentrated around Mark Granovetter's threshold models \citep{granovetter78}, which is one of the most influential models of collective action to date. Before that, we will briefly review two prominent precursors to Granovetter's models---Schelling's models of segregation and Berk's ``gaming'' approach.

\subsection*{Schelling's Models of Segregation}
A notable precursor to Granovetter's threshold models is Nobel-laureate economist Thomas Schelling's models of segregation \citep{Schelling_dynamic_71,Schelling78}. Schelling's models account for segregations that take place as a result of \textit{discriminatory individual behavior} as opposed to organized processes (e.g., separation of on-campus residence between graduate and undergraduate students due to a university's housing policy) or economic reasons (e.g., segregation between the poor and the rich in many contexts). An example of a segregation due to individual choice, or ``individually motivated segregation'' as Schelling puts it \citep[p. 145]{Schelling_dynamic_71}, is the residential segregation by color in the U.S. Although Schelling's models expressly focus on this case, these can be applied to many other scenarios as well.

In Schelling's \textit{spatial  proximity model}, if an individual's
\textit{level of tolerance} for population of the opposing type is
exceeded in his neighborhood, he moves to another spatial location
where he can be ``happy.'' Schelling studied the dynamics of
segregation in this model using a rule of movement for the ``unhappy''
individuals. The \textit{bounded-neighborhood model} is concerned with
one global neighborhood. An individual enters it if it satisfies its
level of tolerance constraint and leaves it otherwise. Schelling
studies the stability of equilibria and the \textit{tipping}
phenomenon in this model when the distribution of tolerances and the
population ratio of the two types are varied. An important finding is
that in the cases studied, the modal level of tolerance does not
correspond to a tipping point.~\footnote{More on Schelling's models
  can be found in Appendix~\ref{app:rev}.}

\subsection*{Berk's ``Gaming'' Approach}
Another notable precursor to Granovetter's models is Berk's rational calculus approach \citep{berk_1974}. Berk strongly criticizes the assumption of individual irrationality which became prevalent in collective behavior literature. He formulates his approach by first giving a detailed empirical account of an anti-war protest at Northwestern University that originated in a town-hall meeting addressing dormitory rent hike.~\footnote{Berk's description gives accounts of both mundane and exciting happenings during the course of the protest and is recognized as ``among the best in the literature'' \citep[p. 126]{mcphail_myth_91}.} He explains individual decision making through Raiffa's decision theory principles.

To motivate his approach, he first notes that participating individuals in that protest were diverse in their disposition and that they exercised their reasoning power. He then broadly classifies the participants into two types---militants (with the desired action of trashing properties) and moderates (with the desired action of an anti-war activity, but not trashing). Each participant, militant or moderate, estimates the support in favor of his disposition, and with enough support, he will ``act'' (e.g., trash properties if he is militant). 

Clearly, an individual's estimate of support directly affects his ``payoff.'' If an individual estimates that there is not enough support to act in favor of his disposition, he can try to persuade others to support his disposition so that he can receive a higher payoff by being able to act. This can be translated as an attempt to change others' payoff matrices, which is facilitated by the \textit{milling} phase when they communicate and negotiate with each other. The milling phase ends when a consensus or a compromise is reached and becomes common knowledge. In this way, a concerted action takes place according to Berk's model.

\subsection*{Granovetter's Threshold Models}
\cite{granovetter78} presented his threshold models  in the setting of a crowd, where each individual is deciding whether to riot or not. In the simplest setting, each individual has a threshold and his decision is influenced by the decisions of others---if the number (or the proportion) of individuals already rioting is below his threshold, then he remains inactive, otherwise he engages in rioting. The emphasis is on investigating equilibrium outcomes due to the process of forward recursion \citep[p. 1426]{granovetter78}, given a distribution of the thresholds of the population. It may be mentioned here that  forward recursion starts only if there is an individual with a threshold of $0$.

Granovetter's models are inspired by Schelling's models of segregation. In fact, one can draw a parallel between Schelling's level of tolerance and Granovetter's threshold in the following way. In Schelling's models, an individual leaves a neighborhood if his level of tolerance is exceeded, whereas in Granovetter's models, an individual becomes active in rioting if his threshold is exceeded. Furthermore, in both models, dynamics is of utmost importance and serves the purpose of explaining \textit{how} an equilibrium collective outcome emerges from individual behavior. However, apart from these similarities, these two models are semantically different and also focus on completely different outlooks. First, Granovetter ascribes a deeper meaning to the concept of threshold. Threshold of an individual is not just ``a number that he carries with him'' from one situation to another \citep[p. 1436]{granovetter78}. It rather depends on the situation in question and can even vary within the same situation due to changes occurring in it. Second, in Granovetter's models, a very small perturbation in the distribution of population threshold may lead to sharply different equilibrium outcomes. Granovetter highlights this property of his models as an explanation of seemingly paradoxical outcomes that goes against the predispositions of the individuals.

Two features of Granovetter's models make it stand out among the
rational calculus models. First, the models are capable of capturing
scenarios beyond the classical realm of collective
behavior. Granovetter begins by setting up his model to complement the
emergent norm theory (see Appendix~\ref{app:rev}) by providing an explicit model of how ``individual preferences interact and aggregate''  to form a new norm \citep[p. 1421]{granovetter78}. Not only that such an explicit model eliminates the need for implicit assumptions (such as a new norm emerges when the majority of the population align themselves with that norm), it can also capture paradoxical outcomes alluded above that cannot be captured by the implicit assumption on the majority. Beyond the emergent norm theory, Granovetter's models can capture a wide range of phenomena that do not fall within the classical realm of collective behavior, such as diffusion of innovation, voting, public opinion, and residential segregation, to name a few. The second prominent feature of Granovetter's models is its ease of adaptation when dealing with a networked population. The same mechanism of forward recursion is applicable when the underlying influence structure is specified by a ``sociomatrix,'' which accounts for how much an individual influences another \citep[p.1429]{granovetter78}. This is particularly useful for studying collective action in the setting of a social network. 

\subsection*{Criticism of Rational Calculus Models}

An implicit assumption regarding Granovetter's sociomatrix is that the elements of the matrix are non-negative. Otherwise, the process of forward recursion may never terminate, even on the simplest of examples. However, many real world scenarios do exhibit co-existence of both positive and negative influences. For the most part, democrat senators in the U.S. Congress influence their republicans colleagues negatively, while they influence colleagues of their own party positively. In residential segregation involving more than two types of individuals, an individual is negatively influenced, in different magnitudes, by individuals belonging to other types. Clearly, such a situation cannot be modeled using a non-negative sociomatrix. Furthermore, if we take a second look at Berk's account, militant individuals positively reinforce each other in their decision to engage in trashing properties, whereas their decision is negatively affected by the moderates (that is, the presence of too many moderates makes it risky for militants to engage in violent action).

Critiques of rational calculus models point out the lack of behavioral
adjustment in a ``negative feedback'' fashion
\citep[p. 883]{mcphail_ency_2007}. Here, negative feedback is defined
in the context of the perceptual control theory that lays the
foundation of McPhail's \textit{sociocybernetics theory} of collective
action (see Appendix~\ref{app:rev}). In a negative feedback system, an individual can adjust his behavior depending on the discrepancy between the input signal and the desired signal (the sign of this discrepancy has no correlation to negative feedback). In contrast, in a positive feedback system, such control of behavior is not possible. A typical example of a positive feedback system is a chemical chain reaction. An analogue to this is  the ``domino effect'' cited often in rational calculus models \citep[p. 1424]{granovetter78}. It is true that rational calculus models neither accounts for ``errors'' as desired by the proponents of the sociocybernetics theory, nor is it well-defined in the context of rational calculus. But it is not the case that ``reversal'' of behavior, which can be thought of as a crude form of behavioral adjustment, is precluded in rational calculus models. Such a form of behavioral adjustment can certainly be incorporated by allowing negative elements in the sociomatrix, but the challenge lies in the forward recursion process which may oscillate indefinitely because of those negative elements.

\section{Brief Review of Collective Behavior and Collective Action in
  Sociology}
\label{app:rev}
In sociology, the umbrella of \textit{collective behavior} is very
broad and encompasses an incredibly rich set of models explaining
various aspects of a wide range of social phenomena such as
revolutions, movements, riots, strikes, disaster, panic, and diffusion
of innovations (e.g., fashion, adopting contraceptives, electronic
gadgets, or even religion), just to name a few.~\footnote{In fact, the
  richness of just one subfield of collective behavior, termed
  \textit{micro-level theories of collective behavior}, led Montgomery
  to comment in his book \citep[p. 67]{montgomery_religion_99}, ``The
  variety of theories focusing on the micro level is confusing, but is
  an indication of the complexity and variations in the process by
  which movements emerge or perhaps fail to emerge...''} Sociologists
Marx and McAdam, in their concise introductory book on collective
behavior \citep{marx_mcadam_cb_94}, contend that unlike many other
fields of sociology, the field of collective behavior is not easy to
define, partly because of the varied opinion of scholars: from very
narrow perspectives, to such wide, all-encompassing perspectives
(e.g., Robert Park and Herbert Blumer's) that virtually eliminates the
need to have collective behavior as an individual field in sociology.~\footnote{Quoting from their book, ``The field of collective behavior is like the elephant in Kipling's fable of the blind persons and the elephant. Each person correctly identifies a separate part, but all fail to see the whole animal.''}
Yet, Marx and McAdam point out the traditional disposition to
categorize collective behavior as a ``residual field'' in sociology; that is, the study of collective behavior consists of those elements of behavior (e.g., fads, fashion, crazes), organization (e.g., social movement), group (e.g., crowd), individual (e.g., psychological states such as panic), etc., that do not readily fit into well-established and commonly observed social structures. Similarly, collective behavior is defined in Goode's textbook \citep[p. 17]{goode_cb_92} as the ``relatively spontaneous, unstructured, extrainstitutional behavior of a fairly large number of individuals.''

\subsection*{Classical Treatment of Collective Behavior: Mass Hysteria}
The classical treatment of collective behavior views individuals in a
crowd as non-rational,  transformed into hysteria by the collective
environment. The central tenet of the early work of Gustave Le Bon's
is that individuals in a crowd share a ``mind of the crowd,'' and that
the ``psychological law of the mental unity of crowds" guides their
psychological state and behavior in the collective setting
\citep[p. 5]{lebon_crowd_1897}. In Le Bon's account, an individual in
a crowd may retain some of the ordinary characteristics he shows in
isolation; but the emphasis is on the extraordinary characteristics
that emerge only in a crowd because of ``a sentiment of invincible
power,'' contagion, and most importantly, the susceptivity of
individuals to take suggestions as if they were hypnotic
subjects. Examples of such extraordinary characteristics of a crowd
are ``impulsiveness,'' ``incapacity to reason,'' and ``the absence of
judgment," to name a few \citep[p. 16]{lebon_crowd_1897}. In sum,
individuals in a crowd are depleted of their intellectual capacity and
become uniform in their psychological state. This leads the crowd to
an identical direction of collective behavior that may be heroic or
criminal, depending on the type of ``hypnotic suggestion'' alluded
above (although Le Bon gives examples of heroic crowds
\citep[p. 14]{lebon_crowd_1897}, for the most part, he tends to give
``crowds'' a negative connotation).  

Le Bon's work influenced around half-a-century of subsequent
developments. Park and Burgess upheld his proposition on the
transformation of individuals in a crowd. They put forward the concept
of \textit{circular reaction}  \citep{park_burgess_1921}, later
refined by Herbert Blumer \citep{blumer_1939}. Circular reaction
refers to a reciprocal process of social interaction that explains
\textit{how} crowd members become uniform in their behavior, something
that Le Bon could not really explain. In this process, an individual's
behavior stimulates another individual to behave alike; and when the
latter individual does so, it reinforces the stimulation that the
former individual acted upon. In circular reaction, individuals do not
act rationally or intellectually; that is, they do not reason about
the action of others, but instead, they only align themselves with the behavior of others. This is different from \textit{interpretative interaction}, another mechanism that Blumer defined to explain routine group behavior (e.g., a group of individuals shopping in a mall), as opposed to collective behavior (e.g., social movement).
In interpretative interaction, individuals react (perhaps differently)
to their \textit{interpretation} of others' action, not the action
itself. Therefore, in interpretative interaction, one can treat
individuals as rational beings.~\footnote{For clarity of presentation,
  we differentiate between routine group behavior and collective
  behavior. To the contrary, Blumer, as well as Park in his earlier
  work, viewed collective behavior as encompassing a wide range of
  social phenomena, including  routine group behavior. Within the
  continuum of collective behavior in Blumer's view, the presence or
  the absence of rationality of individuals earmarks two distinct
  mechanisms named interpretative interaction and circular reaction, respectively.}

In Blumer's account, a crowd goes through several well-defined stages
before a collective behavior finally emerges. The three underlying
mechanisms that facilitate transitions among these stages are circular
reaction, collective excitement, and social contagion; one can roughly
think of collective excitement as a more intense form of circular
reaction, and of social contagion as an even more intense form of
circular reaction \citep[p. 11]{mcphail_myth_91}.~\footnote{The
  mechanism of ``social contagion'' as defined by Blumer or the
  ``social contagion theory''  \citep[p. 11]{locher_2001} in general
  should not to be confused with the term ``social contagion'' that
  computer scientists use \citep{kleinberg_social_contagion}.
  Although both have their roots in epidemics, in the former case,
  individuals are transformed into being more suggestible, which facilitates ``rapid, unwitting, and non-rational dissemination'' of behavior \citep{blumer_1939}; whereas in the latter case, individuals act rationally.}

\subsection*{Emergent Norm Theory}
Although Blumer's account of collective behavior received wide-spread
acceptance, even beyond academia \citep[p. 9]{mcphail_myth_91}, others
deemed many of the underlying assumptions in it, as well as in the
general mass-hysteria theory, unrealistic. 
Arguably,  individuals with
\textit{different} objectives in mind participate in a collective
behavior, and one can observe \textit{changes} in their
individual behavior throughout the process of a
collective behavior too.~\footnote{We refer the reader to
  \citep[p. 26--27]{miller_cb_ca_2000} for a beautiful example in the
  context of the 1967 anti-war demonstration in Washington, DC, which
  was participated by nearly 250,000 people. While many of the
  participants might have been there to genuinely voice their opinion
  against the war, some might have been looking for ``excitement,
  drug, or sex." Yet again, individuals playing different roles, such
  as protest leaders, street vendors, and the police, behaved
  differently.} Therefore, the assumption of complete uniformity
behavior in the classical mass-hysteria treatment is very much a
stretch. Furthermore,  the assumption of hysteric crowd in the
classical approach has also been called into question. One notable
critique of the mass-hysteria theory comes from Ralph Turner and Lewis
Killian \citep{turner_killian_1957}. They view individuals in a crowd
as behaving under normative constraints and showing ``differential
expression.'' However, a new norm emerges when the established norms of the society cannot adequately guide a crowd facing an
extraordinary situation. They call this the \textit{emergent norm} and contend that it is the emergent norm that gives the ``illusion of unanimity.''

\subsection*{Collective Action}
The goal-oriented nature of collective behavior was further
highlighted by sociologists studying social movements during the 1970s
and 80s. In order to distinguish their approach from the traditional
approach to collective behavior, dominated by the assumption of
irrational and aimless nature of crowds, they used the term
\textit{collective action} to mean ``people acting together in pursuit
of common interests'' \citep{tilly_1978}. Strikingly, based on a
series of systematic observations, Clark McPhail's contends that the
goal-oriented nature of crowds is not limited to social movements and
revolutions alone, but is a feature of various other types of
crowds. In his book \textit{the Myth of the Madding Crowd}, he uses
two decades of empirical observations pertaining to a multitude of
crowd settings to formulate a theory of collective behavior now
recognized as a significant paradigm shift \citep[Ch. 5,
6]{mcphail_myth_91}. To distance himself from the term ``crowds,''
which has already gained several meanings depending on whose theory is
being considered, he gives his formulation in the setting of
``gatherings.'' But first, he places a justifiably strong emphasis on
the definition of collective behavior. His ``working definition of collective behavior'' is the study of ``two or more persons engaged in one or more behaviors (e.g., locomotion, ...) judged common or concerted on one or more dimensions (e.g., direction, velocity, ...)''~\footnote{Note that it is the ``behavior," not a specific action, that needs to be common or concerted. For example, when a group of people are chatting together, their behavior is concerted, even though they are not speaking identical words.} \citep[p. 159]{mcphail_myth_91}. 

The broad nature of McPhail's definition of collective behavior,
although based on extensive empirical evidence, did not receive
immediate acceptance. Even modern textbooks on collective behavior try
to conserve the classical appeal of collective behavior.~\footnote{For
  example, in reference to McPhail's definition, Goode writes in his
  textbook, ``In the view of most observers, myself included, many
  gatherings are \emph{not} sites of collective behavior (most casual
  and conventional crowds, for example), and much collective behavior
  does not take place in gatherings of any size (the behavior of most
  masses and publics, for example)''
  \citep[p. 17]{goode_cb_92}. Ironically, this is the very viewpoint
  that McPhail seeks to portray as a myth.} Perhaps to further
distance himself from the traditional viewpoint, McPhail later began
to use the term \textit{collective action} instead of collective
behavior (for example, in a recent encyclopedia article, McPhail
refers to the above mentioned definition as that of collective action
\citep{mcphail_ency_2007}). According to David Miller, the modern view
on the distinction between collective action and collective behavior
is beyond simply terminological. Collective action is given the status
of a ``new'' theory in sociology, while collective behavior is marked
as ``old,'' but not
unimportant. \citep[p. 14--15]{miller_cb_ca_2000}.~\footnote{Miller also points
out that sociology textbooks are likely to talk about collective
behavior only, whereas recent journal articles talk about collective
action only.}

McPhail's approach to collective action is known as the \textit{social
  behavioral interactionist (SBI)} approach. As much as it agrees with
the emergent-norm theory, in terms of the diversity of individual
objectives in a collective setting, it does not agree with the concept
of an emergent norm suppressing this diversity. The SBI approach
studies gatherings in three phases of its life cycle: the assembling
process, collective action within the assembled gathering, and the
dispersal process \citep[p. 153]{mcphail_myth_91}. Although each of
these three phases is rich and interdependent, the goal is to manage
the complexity of collective action as a whole by focusing on the
recognizable parts of it. Interestingly, the underlying mechanism to
explain collective action is drawn from the \textit{perceptual control
  theory} \citep[Ch. 6]{mcphail_myth_91}. McPhail adapts this theory
to formulate his \textit{sociocybernetics theory} of collective
action. In brief, an individual receives sensory inputs, compares the
input signal to its desired signal,~\footnote{In contrast to
  engineering control systems theory, the desired signal is not
  external, but set by individuals themselves (which is also
  highlighted by the term \textit{cybernetics}).} and adjusts its
behavior in response to the discrepancy. The behavior of individuals affects the ``environment,'' which in turn affects the input signal, thereby completing a loop. An important aspect of this theory is that various external factors (or ``disturbances'') may drive an individual to make different behavioral adjustments at different points in time even if the discrepancy between the input signal and the desired signal remains the same.

\subsubsection*{Additional Notes on Schelling's Models}
Schelling's models assume that individuals behave in a discriminatory
way. For example, individuals are aware, consciously or unconsciously,
about the types of other individuals in their neighborhood and behave
(i.e., stay in the neighborhood or leave) according to their
preference. This is different from organized processes (e.g.,
separation of on-campus residence between graduate and undergraduate
students due to a university's housing policy) or economic reasons
(e.g., segregation between the poor and the rich in many contexts)
\citep{Schelling_dynamic_71,Schelling78}. An example of a segregation
due to individual choice, or ``individually motivated segregation'' as
Schelling puts it \citep[p. 145]{Schelling_dynamic_71}, is the
residential segregation by color in the United States. In fact,
Schelling's models and their analyses expressly focus on this
case. Yet, we can apply Schelling's theory to many other scenarios as
well. This is because it explains, at an abstract level, how
\textit{collective} outcomes are shaped from \textit{individual}
choice. We note here that connecting individual actions to collective outcomes is a mainstream theme of research in collective action.

Schelling introduces two basic models to study the dynamics of
segregation among individuals of two different types
\citep{Schelling_dynamic_71}. In the first model, the \textit{spatial
  proximity model}, individuals are initially positioned in a spatial
configuration (such as a line or a stylized two-dimensional area) and
individuals of the same type share a common ``level of tolerance.''
This level of tolerance quantifies the upper limit on the percentage
of an individual's opposite type in his local neighborhood that he can
put up with. Here, we define an individual's local neighborhood with
respect to the individual's position in the specified spatial
configuration.  In this model, we use a rule of movement for the ``unhappy''
individuals to study the dynamics of segregation. For example, an
individual whose level of tolerance has been exceeded, moves to the
closest location where the tolerance constraint can be satisfied. Assuming
an equal number of individuals of each type and a fixed local
neighborhood size, Schelling first studies how clusters evolve from
the initial configuration of a random placement of the individuals on
a straight line. He then generalizes the experimentation by varying
different model parameters such as neighborhood size, level of
tolerance, and the ratio of individuals of the two types. Notable
findings are that decreasing the local neighborhood size leads to a
decrease in the average cluster size and that for an unequal number of individuals of the two types, decreasing the relative size of the minority leads to an increase in the average minority cluster size. 

Schelling extends this experimentation to a different setting of a
two-dimensional checkerboard. The individuals are randomly distributed
on the squares of the checkerboard, leaving some of the squares
unoccupied. An individual's local neighborhood is defined by the
squares around it and an unhappy individual moves to the ``closest''
unoccupied  square (leaving its original square unoccupied) that can
satisfy its tolerance constraint. In addition to studying clustering
properties by varying different model parameters, two new classes of
individual preferences have been studied: congregationist and
integrationist. In a congregationist preference, an individual only
wants to have at least a certain percentage of neighbors of its own
type and does not care about the presence of individuals of the
opposite type in its neighborhood. Experiments show that even when
each individual is happy being a minority in its neighborhood (e.g.,
having three neighbors of its own type out of eight), the dynamics of
segregation leads to a configuration as if the individuals wished to
be majority in their neighborhoods. In an integrationist preference,
individuals have both an upper and a lower limit on the level of
tolerance. The dynamics is much more complex in this case and leads to clusters of unoccupied squares. 

Schelling's second model, the \textit{bounded-neighborhood model},
concerns one global neighborhood. An individual enters it if it
satisfies its level of tolerance constraint and leaves it
otherwise. The level of tolerance is no longer fixed for each type and
the distribution of tolerances among individuals of each type is
given. The emphasis is on the stability of equilibria as the
distribution of tolerances and the population ratio of the two types
vary. For example, under a certain linear distribution of tolerances
and a population ratio of $2:1$, there exist only two stable
equilibria, each consisting of individuals of one type only; whereas a
mixture of individuals of both types can arise as a stable equilibrium
under a different setting. This model has been adapted to study \textit{tipping} phenomenon, with one notable constraint; that is, the capacity of the neighborhood is fixed. An example of a tipping phenomenon is 
when a neighborhood consisting of only one type of individuals is later inhabited by some individuals of the opposite type and as a result, the entire population of the original type evacuates the neighborhood. An important finding is that in the cases studied, the modal level of tolerance does not correspond to a tipping point.

\section{Experimental Results in Tabular Form}
\label{app:exp}

Table~\ref{tbl:random_graph_eq} shows experimental data of PSNE computation on uniform random directed graphs. Of particular interest is the result that the number of PSNE usually increases when the flip probability $p$ is increased, i.e., when the number of arcs with negative influence factors is increased.

Table~\ref{tbl:random_graph_infl} illustrates the experimental result that the logarithmic-factor approximation algorithm for identifying the most influential individuals performs very well in practice.

Finally, Table~\ref{tbl:LIG_supreme} shows the influence factors and thresholds of the LIG among the U.S. Supreme Court justices, which are learned using the U.S. Supreme Court dataset.

\begin{table*}[t]
\caption{PSNE computation on uniform random directed graphs. Offsets of 95\% confidence intervals are shown in parenthesis.}
\label{tbl:random_graph_eq}
\begin{center}
\begin{tabular}{cccccc}
\multicolumn{1}{c}{$p$}  &\multicolumn{2}{c}{\# of equilibria} 
												&\multicolumn{2}{c}{\# of node visits/equilibrium}
												&\multicolumn{1}{c}{Avg CPU time (sec) for}\\
\multicolumn{1}{c}{}  &\multicolumn{2}{c}{Avg (95\% CI)} 
												&\multicolumn{2}{c}{Avg (95\% CI)}
												&\multicolumn{1}{c}{computing all equilibria}

\\ \hline \\
0.00	&	2.18 & (0.16)	&			35379.73 & (3349.77)	&			1.81	\\
0.125	&	3.72 & (0.50)	&			22756.15 & (2673.56)	&			1.57	\\
0.25	&	13.00 & (1.92) &		9796.30 & (1748.76)	&				1.9	\\
0.375	&	19.42 & (2.88)	&		7380.97 & (1870.11)	&				1.95	\\
0.50	&	14.40 & (2.17)	&		9826.61 & (1696.52)	&				2.04	\\
0.625	&	19.78 & (3.40)	&		8167.60 & (1450.48)	&				2.07	\\
0.75	&	28.76 & (4.34)	&		6335.18 & (1963.11)	&			2.21	\\
0.875	&	67.14 & (9.52)	&		4064.06 & (1539.33)	&				3.27	\\
1.00	&	194.96 & (28.47)	&	1879.45 & (235.90)	&				5.23	
\end{tabular}
\end{center}
\end{table*}

\begin{landscape}
\begin{table*}[t]
\caption{Computation of the most influential nodes: Comparison between the solution given by the ApproximateUniqueHyperedge algorithm and the optimal one in random directed graphs. Offsets of the 95\% confidence intervals are shown in parenthesis.}
\label{tbl:random_graph_infl}
\begin{center}
\begin{tabular}{cccccccc}
\multicolumn{1}{c}{$p$}  &\multicolumn{2}{c}{Approx soln size}
												&\multicolumn{2}{c}{Opt soln size}
												&\multicolumn{1}{c}{Approx = Opt}
												&\multicolumn{1}{c}{Approx $\le$ Opt+1}
												&\multicolumn{1}{c}{Approx $\le$ Opt+2}\\
\multicolumn{1}{c}{}  &\multicolumn{2}{c}{Avg (95\% CI)}
												&\multicolumn{2}{c}{Avg (95\% CI)}
												&\multicolumn{1}{c}{\% of times}
												&\multicolumn{1}{c}{\% of times}
												&\multicolumn{1}{c}{\% of times}
												
\\ \hline \\

0.00			&	1.06 & (0.05)	&	1.06 & (0.05)	&	100	&	100	&	100	\\
0.125	&	1.50 & (0.11)	&	1.48 & (0.10)	&	98	&	100	&	100	\\
0.25	&	2.57 & (0.18)	&	2.14 & (0.11)	&	62	&	96	&	99	\\
0.375	&	2.75 & (0.18)	&	2.35 & (0.13)	&	62	&	98	&	100	\\
0.50		&	2.47 & (0.16)	&	2.09 & (0.10)	&	66	&	96	&	100	\\
0.625	&	2.82 & (0.18)	&	2.31 & (0.13)	&	52	&	97	&	100	\\
0.75	&	3.02 & (0.19)	&	2.48 & (0.14)	&	51	&	95	&	100	\\
0.875	&	3.83 & (0.20)	&	3.04 & (0.15)	&	36	&	86	&	99	\\
1.00			&	4.72 & (0.23)	&	3.95 & (0.18)	&	37	&	87	&	99	
\end{tabular}
\end{center}
\end{table*}
\end{landscape}

\begin{landscape}
\begin{table}[!h]
	\centering
		\begin{tabular}{ c c c c c c c c c c }
			& Scalia & Thomas & Rehnquist & O'Connor & Kennedy & Breyer & Souter & Ginsburg & Stevens \\
			\hline
			\hline
		Scalia & 0.0120  &  0.4282  &  0.0317 &   0.0717  &  0.0721  &  0.0772  & -0.0321  &  0.1362  & -0.1388 	\\
    Thomas & 0.2930  & -0.1020  &  0.1245  & -0.0010  &  0.0183  & -0.1497  &  0.0839  & -0.1311  &  0.0965	\\
    Rehnquist & 0.0671  &  0.1762  & -0.0834  &  0.0973  &  0.1254  &  0.0921  & -0.0861  &  0.1336  & -0.1388 	\\
    O'Connor & 0.0580  &  0.1073  &  0.1045  & -0.2522  & -0.0537  &  0.2313  &  0.0325  & -0.1245  & -0.0359 	\\
    Kennedy & 0.0666  &  0.1236  &  0.1863  & -0.0255  & -0.2634  &  0.0548  & -0.1115  & -0.0149  &  0.1532 	\\
    Breyer & 0.1009  & -0.2191  &  0.0570  &  0.2208  & -0.0061  & -0.0209  &  0.0627  &  0.1102  &  0.2023 	\\
    Souter & 0.0368  &  0.1192  & -0.0476  &  0.0762  & -0.0338  &  0.1429  & -0.0619  &  0.2783  &  0.2034 	\\
    Ginsburg & 0.0779  & -0.1000  &  0.1613  & -0.0962  & -0.0089  &  0.1199  &  0.1999  & -0.0381  &  0.1978 	\\
		Stevens & -0.1379  &  0.1088  & -0.1721  & -0.0568  &  0.1053  &  0.1374  &  0.0932  &  0.1274  & -0.0611 	\\

		\end{tabular}
	\caption{LIG learned from data. Each non-diagonal element in each row represents the influence factor the row player receives from the column players (for example, Justice Scalia's influence factor on Justice Thomas is 0.2930 and that in the opposite direction is 0.4282). The diagonal elements represent thresholds of the corresponding row players.}
	\label{tbl:LIG_supreme}
\end{table}
\end{landscape}

\end{appendix}

\bibliographystyle{model2-names}
\bibliography{influence}

\end{document}